\documentclass[letterpaper,journal]{IEEEtran}
\usepackage{amsmath,amsfonts}
\usepackage{amsthm}
\usepackage{algorithmic}
\usepackage{algorithm}

\newtheorem{theorem}{Theorem}
\newtheorem{assumption}{Assumption}

\usepackage{array}
\usepackage{bm}
\usepackage{textcomp}
\usepackage{stfloats}
\usepackage{url}
 \usepackage{multirow}
\usepackage{verbatim}
\usepackage{graphicx}
\graphicspath{{Figs/}}
\usepackage{cite}
\usepackage{hyperref}      
\usepackage{cleveref}
\usepackage[font={small}]{caption}
\usepackage{subcaption}
\hypersetup{hidelinks}

\begin{document}

\title{Safe Planning for Black-Box Systems via MAP Inference with Truncated Diffusion Priors and CBF-Informed Gibbs Likelihoods}

\author{Shanghan Liu, Jitao Sun and Junhui Zhang,~\IEEEmembership{Member,~IEEE,}
		\thanks{This work was supported by the National Natural Science Foundation of China under Grant 62603330. \emph{(Corresponding author:Junhui Zhang)}}
		\thanks{Shanghan Liu and Junhui Zhang are with the School of Robotics and Automation at Nanjing University, Suzhou, Jiangsu 215163, China (e-mail:shanghanliu@smail.nju.edu.cn; junhuiz@nju.edu.cn).}
        \thanks{Jitao Sun is with the School of Mathematics and Statistics, Nantong University, Nantong, 226019, China and also with the School of Mathematical Sciences, Tongji University, Shanghai 200092, China (e-mail: sunjt@tongji.edu.cn). }}



\maketitle

\begin{abstract}
This paper proposes a maximum a posteriori (MAP) inference framework with a truncated diffusion prior for safe planning (TDP-SafeMAP). Under a generalized Bayesian formulation, a criterion based on covariance condition number determines the reverse diffusion truncation timestep, and the clean control distribution predicted at that timestep is approximated by a Gaussian prior encoding the control regularities and uncertainty of expert data. An energy function containing a control barrier function (CBF) constraint violation measure defines a Gibbs task likelihood that incorporates physical constraint information. After fusing the prior and likelihood, the covariance matrix adaptation evolution strategy (CMA-ES), whose core update admits an interpretation as a sampling-based approximation to natural-gradient ascent, approximates the posterior mode using only black-box evaluations. Bézier-curve-based dimensionality reduction and related techniques further improve computational efficiency and task-completion capability. Theoretical analysis establishes an error bound for the approximate MAP solution, an explicit condition for CBF safety preservation, and the equivalence of the MAP problem under prior whitening. Experimental results show that the proposed framework consistently generates safe, smooth, and goal-reaching trajectories, providing an effective and practical statistical inference approach to safe planning for black-box systems.

\end{abstract}

\begin{IEEEkeywords}
Diffusion models, control barrier functions, CMA-ES, Bayesian inference, safe planning.
\end{IEEEkeywords}

\section{Introduction}

\IEEEPARstart{D}{ata}-driven motion planning has grown rapidly in recent years in areas such as robot control~\cite{Gu2026} and autonomous driving~\cite{Hu2026}. Unlike classical planning methods that rely on accurate dynamic models, data-driven approaches can learn planning policies directly from expert demonstrations or interaction data~\cite{10.1145/3583136}, thereby bypassing the difficulty of modeling complex systems and exhibiting strong generalization in high-dimensional, unstructured environments~\cite{HangYin2021}. Among them, diffusion models, as a powerful class of generative models, have recently been introduced into motion planning and have quickly become a research hotspot~\cite{11244855}. Representative diffusion planners such as Diffuser~\cite{janner2022diffuser} and Diffusion Policy~\cite{chi2025diffusion} can model complex multimodal trajectory distributions, avoiding the over-averaging of expert behavior by unimodal policies; their iterative denoising generation naturally represents long-horizon and smooth control sequences; and they are robust to noise and multimodality in expert data, showing strong generalization in complex environments~\cite{liu2026diffusion}. These advantages make diffusion planners particularly promising for complex, multimodal motion planning tasks, establishing them as an important direction in data-driven planning~\cite{JIN2026115994}.

However, diffusion planners typically focus only on the data distribution and task cost during generation, making it difficult to explicitly incorporate safety constraints and rendering these planners unable to provide reliable safety guarantees at inference time~\cite{11230786}. Moreover, practical systems may exhibit strongly nonlinear, non-convex, and even non-smooth dynamics, whose models are difficult to obtain, making classical trajectory optimization methods that rely on models or gradients difficult to apply directly~\cite{KAZIM2024100931, 10500723}. Consequently, it remains a key challenge to enable diffusion planners to generate safe, smooth, and goal-reaching trajectories under black-box, non-convex, and non-differentiable dynamics~\cite{11247412}.

Explicitly introducing safety constraints into diffusion planning is a popular research direction in generative control, where many methods employ control barrier functions as a means to provide safety information. Control barrier functions (CBFs) offer an effective tool that transforms safety requirements into constraints on control inputs~\cite{10057172, 11267490}. By defining a safe set, a CBF guarantees its forward invariance: in continuous-time systems, it requires the control input to keep the decrease rate of the safety function along the system trajectory below a certain bound~\cite{7782377}; in discrete-time systems, it requires the safety function value at the next instant to be no less than a proportionally decayed value of the current one~\cite{agrawal2017discrete}. This property provides provable safety guarantees for safety-critical systems, and since CBF constraints are convenient to embed into optimization frameworks, CBFs have been widely applied in uncertainty quantification~\cite{11007767, 10872802, 10171460}, adaptive control~\cite{AN2026112782}, model predictive control~\cite{9483029, 9779571, 10644741}, reinforcement learning~\cite{11606602, 10539292, 11578271, 10328440}, etc. In the field of safe planning with diffusion models, SafeDiffuser guarantees the safety of generated trajectories by imposing CBF constraints on the diffusion dynamics and introducing finite-time diffusion invariance~\cite{xiao2025safe}, CoBL-Diffusion guides reverse diffusion via CBF and control Lyapunov function (CLF) reward gradients~\cite{MizutaLeung2024}, Constrained Diffuser formulates constrained diffusion generation as a constrained sampling problem and adopts discrete CBFs as constraints~\cite{NEURIPS2025_31fb284a}, and PC-Diffuser enforces safety, dynamic feasibility, and path consistency via a path-consistent CBF safety filter within the denoising loop~\cite{ku2026pc}. However, under black-box and non-differentiable dynamics, it is difficult to compute the gradient of the CBF condition, and classical CBF-QP or gradient-based policy optimization becomes inapplicable~\cite{GARG2024100945, 9681233}. How to combine the safety guarantees of CBFs with the data-driven capability of diffusion priors and solve the resulting problem in a gradient-free manner under black-box dynamics remains an open problem.

Bayesian inference is a probabilistic method in which the prior distribution carries knowledge or assumptions about the parameters, observed data enters the inference through the likelihood function, and the two are unified by the posterior distribution~\cite{Ghahramani2015}. From a probabilistic perspective, the problem of integrating safety constraints into diffusion planning naturally corresponds to Bayesian inference: the multimodal distribution learned from expert data by the diffusion model is essentially a prior belief over control sequences~\cite{11097366}, while the CBF safety constraints and task requirements are encoded by a Gibbs generalized task likelihood, which evaluates how well each candidate solution satisfies the task specification. Through this generalized Bayesian update, the data-driven information provided by the diffusion model and the physical information provided by the task energy can be naturally fused.

In existing research, the reverse denoising process of diffusion models has been shown to exhibit a speciation time~\cite{biroli2024dynamical}, after which the modes have largely separated and subsequent steps mainly perform fine adjustments within the selected mode. Since fusing CBF information requires gradient-free optimization, the fine-adjustment part can be completed simultaneously in this process, and the primary role of the diffusion prior is to leverage its generative diversity to select the most appropriate behavior mode~\cite{diffusiondrive}. Directly performing posterior computation and optimization after identifying that the diffusion inference has completed mode selection can avoid the mode-averaging problem caused by running the diffusion model multiple times and forcibly averaging the results, while significantly reducing the time burden of diffusion inference.

Since the model is black-box, the gradient of the CBF condition cannot be computed directly, necessitating gradient-free optimization methods. CMA-ES~\cite{hansen2001completely}, as a gradient-free evolution strategy, can approach the optimal solution without gradients. Compared with random search, finite differences, Bayesian optimization and other gradient-free methods, CMA-ES offers advantages such as adaptive covariance, rotational invariance~\cite{HANSEN20115755}, and robust dimension-dependent default parameter settings~\cite{aguirre2026covariance}. From a theoretical perspective, its update mechanism can be interpreted under the information-geometric optimization (IGO) and MAP-IGO frameworks as natural gradient ascent on the Gaussian statistical manifold~\cite{akimoto2012theoretical, ollivier2017information, hamano2024natural}. Natural gradient ascent itself can be employed to solve MAP estimation problems, which is better than ordinary gradient ascent because of its parameterization invariance and its use of the Fisher information matrix as the metric, rescaling the update by the actual change in the distribution rather than in the parameters~\cite{Wu_2024}. This establishes a theoretical bridge between stochastic search optimization and Bayesian point estimation, revealing an intrinsic connection between CMA-ES and MAP estimation in Bayesian inference. Based on this property, we choose CMA-ES as the posterior solver to approximate the Bayesian posterior mode in a low-dimensional control-point space obtained via Bézier curves, thereby explicitly introducing CBF safety constraints while preserving the multimodal expressiveness of the diffusion prior.

The above observations collectively point to a unified planning framework: using the mode selection learned from expert demonstrations by a data-driven method as a distribution prior, constructing a likelihood function from path requirements such as safety, merging the two into a posterior, and then performing maximum a posteriori estimation using a gradient-free optimization method. To this end, we propose TDP-SafeMAP, a safe planning method for black-box nonlinear systems. The main contributions of this paper are as follows:

\begin{itemize}
    \item 
    \textbf{A generalized Bayesian inference framework for safe planning of black-box systems.} 
    We project the predicted control distribution at a truncated diffusion timestep onto a Gaussian distribution to construct the prior, and define a Gibbs generalized likelihood via a task energy that incorporates a CBF constraint-violation measure. Together, they form a Gibbs posterior, thereby formulating safe planning as a MAP estimation problem. The framework decouples the data prior from task constraints, so the diffusion model need not be retrained when requirements such as safety and smoothness change.
    
    \item 
    \textbf{An adaptive diffusion truncation strategy.} 
    Based on the correspondence between the covariance condition number and the principal-direction Bayes risk, the method identifies the transition from unresolved mode uncertainty to within-mode refinement from the peak of the second-order difference of the condition number, thereby automatically determining the truncation timestep near mode separation. This process can be performed offline using the dataset information and the relevant parameters of the diffusion model, thereby improving the computational efficiency of the algorithm.
    
    \item 
    \textbf{A theoretical analysis of algorithmic robustness and related properties.}
    We analyze how prior approximation and finite CMA-ES accuracy affect the MAP solution and derive a sufficient condition for preserving CBF safety, characterize the information-geometric compatibility between CMA-ES and posterior search, and establish the exact equivalence of the original and prior-whitened MAP problems.
\end{itemize}

The remainder of this paper is organized as follows. Section II introduces the background knowledge required for the relevant methods. Section III provides the problem formulation and the Bayesian inference framework. Section IV describes the core methodology of TDP-SafeMAP in detail, covering the truncated acquisition of the diffusion prior, the construction of the likelihood function, and the inference of the posterior distribution. Section V provides a theoretical analysis of the proposed method. Section VI presents experimental results and analysis. Finally, Section VII concludes the paper and discusses future work.

\section{Preliminaries}
\label{sec:prelim}

\subsection{Bayesian Inference}
\label{sec:bayes_intro}

Bayesian inference is a probabilistic framework for reasoning under uncertainty. It regards all unknown quantities as random variables and describes their uncertainty through probability distributions~\cite{van2021bayesian}. This methodology naturally integrates prior knowledge---the beliefs about the parameters before observing any evidence---with the information conveyed by the observed outcomes, enabling a coherent update of belief within a unified probabilistic calculus.

Let $\boldsymbol{\theta}$ denote the vector of unknown parameters of interest, and let $\mathcal{D}$ denote the observed outcomes. The cornerstone of Bayesian inference is Bayes' theorem, which defines the posterior distribution as
\begin{equation*}
p(\boldsymbol{\theta}|\mathcal{D}) = \frac{p(\mathcal{D}|\boldsymbol{\theta}) \, p(\boldsymbol{\theta})}{p(\mathcal{D})}
\propto p(\mathcal{D}|\boldsymbol{\theta}) \, p(\boldsymbol{\theta}),
\end{equation*}
where:
\begin{itemize}
\item $p(\boldsymbol{\theta})$ is the \emph{prior distribution}, which quantifies the initial belief about $\boldsymbol{\theta}$ before observing any outcomes. The prior may be informative, derived from expert knowledge or physical constraints, or non-informative to minimize its influence on the posterior.
\item $p(\mathcal{D}|\boldsymbol{\theta})$ is the \emph{likelihood function}. For a fixed value of $\boldsymbol{\theta}$, it gives the probability of observing the outcomes $\mathcal{D}$; once $\mathcal{D}$ is observed and held fixed, it is viewed as a function of $\boldsymbol{\theta}$ that measures how well each candidate parameter value explains the data.
\item $p(\mathcal{D}) = \int p(\mathcal{D}|\boldsymbol{\theta}) p(\boldsymbol{\theta}) \, \mathrm{d}\boldsymbol{\theta}$ is the \emph{marginal likelihood} (or evidence), which serves as a normalizing constant ensuring that the posterior integrates to unity.
\item $p(\boldsymbol{\theta}|\mathcal{D})$ is the \emph{posterior distribution}, which synthesizes the prior and the likelihood, reflecting the updated understanding of the parameters after observing the outcomes. 
\end{itemize}

In many practical engineering problems, the posterior distribution is analytically intractable, primarily because the marginal likelihood cannot be evaluated in closed form in the vast majority of cases~\cite{8588399}. A common remedy is not to compute the full posterior, but to seek its mode, i.e., the MAP estimate:
\begin{equation*}
\boldsymbol{\theta}_{\text{MAP}} = \arg\max_{\boldsymbol{\theta}} p(\boldsymbol{\theta}|\mathcal{D})
= \arg\max_{\boldsymbol{\theta}} p(\mathcal{D}|\boldsymbol{\theta}) \, p(\boldsymbol{\theta}),
\end{equation*}
where the normalizing constant $p(\mathcal{D})$ has been dropped because it does not depend on $\boldsymbol{\theta}$. In this way, the intractable marginal likelihood is bypassed entirely, and the inference problem reduces to maximizing the unnormalized posterior.

Beyond ordinary data likelihoods, energy-based engineering requirements can be incorporated through a Gibbs generalized likelihood. For notational continuity, we retain the conventional likelihood notation and write
\begin{equation*}
p(\mathcal{D}|\boldsymbol{\theta})
\propto
\exp\bigl(-E(\boldsymbol{\theta})\bigr).
\end{equation*}
Here, $p(\mathcal{D}|\boldsymbol{\theta})$ is understood as a nonnegative compatibility factor rather than an explicitly specified data-generating distribution over $\mathcal{D}$. Combining this factor with the prior defines the normalized Gibbs posterior:
\begin{equation*}
p(\boldsymbol{\theta}|\mathcal{D})
=
\frac{
    \exp\bigl(-E(\boldsymbol{\theta})\bigr)
    p(\boldsymbol{\theta})
}{
    \int
    \exp\bigl(-E(\boldsymbol{\theta}')\bigr)
    p(\boldsymbol{\theta}')\,
    \mathrm{d}\boldsymbol{\theta}'
}.
\end{equation*}

In many practical engineering problems, this posterior distribution is analytically intractable because the normalizing integral cannot be evaluated in closed form~\cite{8588399}. A common remedy is not to compute the full posterior, but to seek its mode, i.e., the maximum a posteriori (MAP) estimate. Taking the negative logarithm, the corresponding MAP estimate is
\begin{equation*}
\boldsymbol{\theta}_{\mathrm{MAP}}
=
\arg\min_{\boldsymbol{\theta}}
\left[
    E(\boldsymbol{\theta})
    -
    \log p(\boldsymbol{\theta})
\right].
\end{equation*}
When the prior is Gaussian, the negative log-prior becomes a quadratic regularizer. Consequently, generalized Bayesian inference directly maps engineering objectives into a deterministic energy-minimization problem, enabling the principled integration of physical constraints that lack an explicit probabilistic data model.

\subsection{Diffusion Models}
\label{sec:diffusion_models}

Diffusion models are a class of generative models that learn to approximate a data distribution by reversing a gradual noising process~\cite{ho2020denoising}. The framework is built upon two Markov chains: a fixed forward process that gradually adds noise to a data sample, transforming it into a simple noise distribution (typically a standard Gaussian), and a learned reverse process that reconstructs data by iteratively removing noise. Because the true reverse transition is not available in closed form, it is approximated by a neural network trained to predict the noise introduced at each step. This section reviews the mathematical framework of diffusion models used in this work, including the forward noising process, the training of the network, and the reverse denoising process.
\paragraph{Forward Diffusion Process}
Given a data sample $\mathbf{x}_0 \sim q(\mathbf{x}_0)$, the forward process gradually adds Gaussian noise through a Markov chain of $T$ steps, producing $\mathbf{x}_1, \dots, \mathbf{x}_T$. The transition kernel is
\begin{equation*}
q(\mathbf{x}_t | \mathbf{x}_{t-1}) = \mathcal{N}\left(\mathbf{x}_t; \sqrt{1-\beta_t}\, \mathbf{x}_{t-1}, \beta_t I\right),
\end{equation*}
where $\beta_t \in (0,1)$ is a predefined noise schedule parameter. Let $\alpha_t = 1 - \beta_t$ and $\bar{\alpha}_t = \prod_{s=1}^t \alpha_s$. We adopt the cosine schedule~\cite{nichol2021improved}, which defines $\bar{\alpha}_t$ directly as
\begin{equation*}
\bar{\alpha}_t = \frac{f(t)}{f(0)}, \quad f(t) = \cos\left(\frac{t/T + s}{1 + s} \cdot \frac{\pi}{2}\right)^2,
\end{equation*}
where $s$ is a small offset that keeps $\beta_t$ from being too small near $t = 0$, 
thereby avoiding an overly steep change of $\bar{\alpha}_t$ at the beginning of the forward process.

Using these parameters, the noisy sample at any timestep can be computed directly from $\mathbf{x}_0$:
\begin{equation*}
\mathbf{x}_t = \sqrt{\bar{\alpha}_t}\, \mathbf{x}_0 + \sqrt{1-\bar{\alpha}_t}\, \boldsymbol{\varepsilon}, \quad \boldsymbol{\varepsilon} \sim \mathcal{N}(0, I),
\end{equation*}
which gives $q(\mathbf{x}_t | \mathbf{x}_0) = \mathcal{N}(\mathbf{x}_t; \sqrt{\bar{\alpha}_t}\, \mathbf{x}_0, (1-\bar{\alpha}_t)I)$. When $\bar{\alpha}_T \approx 0$, $\mathbf{x}_T$ approximates a standard Gaussian.

The forward process gradually provides learnable noising trajectories by adding noise for training the noise prediction network in the reverse process, enabling the model to learn how to reconstruct data from noise.

\paragraph{Training the Noise Prediction Network}
The reverse kernels $q(\mathbf{x}_{t-1} | \mathbf{x}_t)$ are intractable and thus need to be approximated. A noise prediction network $\boldsymbol{\varepsilon}_\theta(\mathbf{x}_t, t)$ is trained to minimize
\begin{equation*}
\mathcal{L} = \mathbb{E}_{t, \mathbf{x}_0, \boldsymbol{\varepsilon}} \left[ \|\boldsymbol{\varepsilon} - \boldsymbol{\varepsilon}_\theta(\mathbf{x}_t, t)\|^2 \right],
\end{equation*}
where $t$ is uniformly sampled from $\{1, \dots, T\}$, $\mathbf{x}_0 \sim q(\mathbf{x}_0)$, and $\boldsymbol{\varepsilon} \sim \mathcal{N}(0, I)$. This objective is derived from the variational lower bound and teaches the network to denoise corrupted data. We choose to predict the noise at each step rather than the posterior mean, as this parameterization is empirically more stable to train and yields better sample quality.

\paragraph{Reverse Denoising Process}
Once the noise prediction network is trained, generation can be performed by starting from pure noise $\mathbf{x}_T \sim \mathcal{N}(0, I)$ and iteratively denoising through the learned reverse transitions. Each reverse step is parameterized as
\begin{equation*}
p_\theta(\mathbf{x}_{t-1} | \mathbf{x}_t) = \mathcal{N}\left(\mathbf{x}_{t-1}; \mu_\theta(\mathbf{x}_t, t), \Sigma_\theta(\mathbf{x}_t, t)\right),
\end{equation*}
where the mean is computed from the noise prediction network as
\begin{equation*}
\mu_\theta(\mathbf{x}_t, t) = \frac{1}{\sqrt{\alpha_t}} \left( \mathbf{x}_t - \frac{1 - \alpha_t}{\sqrt{1 - \bar{\alpha}_t}} \boldsymbol{\varepsilon}_\theta(\mathbf{x}_t, t) \right),
\end{equation*}
and the variance $\Sigma_\theta(\mathbf{x}_t, t)$ is typically fixed to $\tilde{\beta}_t I$ with $\tilde{\beta}_t = \frac{1 - \bar{\alpha}_{t-1}}{1 - \bar{\alpha}_t} \beta_t$, which corresponds to the true posterior variance of the forward process. Iteratively applying these reverse steps for $t = T, \dots, 1$ gradually removes noise and yields a generated sample $\mathbf{x}_0$.

\subsection{CMA-ES Algorithm}
\label{sec:cmaes_intro}

The Covariance Matrix Adaptation Evolution Strategy is a stochastic optimization algorithm for continuous black-box problems. Its core idea is to maintain a multivariate normal distribution $\mathcal{N}(\mathbf{m}^{(g)}, (\sigma^{(g)})^2 \mathbf{C}^{(g)})$ at each generation, from which $\lambda$ candidate solutions are sampled. Using the weighted information of the top $\mu_e$ elite samples, the distribution parameters $\mathbf{m}$, $\mathbf{C}$, and the step size $\sigma$ are iteratively updated to steer the sampling distribution toward the optimum of the objective function.

The core parameter updates are as follows. The mean is updated via weighted recombination of elite samples:
\begin{equation*}
\mathbf{m}^{(g+1)} = \sum_{i=1}^{\mu_e} w_i \mathbf{x}_{i:\lambda}^{(g)},\quad \sum_{i=1}^{\mu_e} w_i = 1,\; w_1 \geq \cdots \geq w_{\mu_e} > 0,
\end{equation*}
where $\mathbf{x}_{i:\lambda}$ denotes the $i$-th best individual in generation $g$ according to fitness ranking, and the weights $w_i$ are typically set according to a logarithmic ranking rule.

The covariance matrix is updated via a combination of rank-one and rank-$\mu$ update strategies:
\begin{equation}
\begin{split}
\mathbf{C}^{(g+1)} = &(1-c_1-c_\mu)\mathbf{C}^{(g)} + c_1 \mathbf{p}_c^{(g+1)}(\mathbf{p}_c^{(g+1)})^\top \\
&+ c_\mu \sum_{i=1}^{\mu_e} w_i \mathbf{y}_{i:\lambda}^{(g)}(\mathbf{y}_{i:\lambda}^{(g)})^\top,
\end{split}
\label{eq:cmaes_cov}
\end{equation}
where $\mathbf{y}_{i:\lambda} = (\mathbf{x}_{i:\lambda}-\mathbf{m}^{(g)})/\sigma^{(g)}$ denotes the normalized deviation of the $i$-th best sample from the mean. The coefficients $c_1 \approx 2/n^2$ and $c_\mu \approx \mu_e/n^2$ are the covariance learning rates. The evolution path $\mathbf{p}_c$ accumulates historical mean shifts from successive generations and guides the covariance matrix adaptation toward promising search directions.

The step size $\sigma$ is adapted by cumulative step-size adaptation (CSA), which maintains a second evolution path $\mathbf{p}_\sigma$ and increases or decreases $\sigma$ according to whether consecutive search steps are positively correlated.

All internal parameters of CMA-ES are determined by predefined formulas based solely on the problem dimension $n$, requiring no manual tuning; together with its rotationally invariant covariance update, this makes CMA-ES robust and easily applicable in practical engineering scenarios.

\subsection{Control Barrier Function}
\label{sec:cbf_intro}
A control barrier function is an effective tool in safety-critical control. It ensures the forward invariance of a given set by transforming safety constraints into constraints on control inputs, thereby keeping the system state always within the prescribed safe region when the corresponding condition is satisfied~\cite{ames2019control}.
\paragraph{Continuous-Time Formulation}
In the original continuous-time setting, consider the control-affine system $\dot{\mathbf{x}} = f(\mathbf{x}) + g(\mathbf{x})\mathbf{u}$, where $\mathbf{x} \in \mathbb{R}^n$ is the state and $\mathbf{u} \in \mathcal{U} \subset \mathbb{R}^m$ is the control input. Let the safe set $\mathcal{C}$ be defined as the superlevel set of a continuously differentiable function $h: \mathbb{R}^n \to \mathbb{R}$:
\begin{equation*}
\mathcal{C} = \{\mathbf{x} \in \mathbb{R}^n : h(\mathbf{x}) \geq 0\},
\end{equation*}
with boundary $\partial\mathcal{C} = \{\mathbf{x} : h(\mathbf{x}) = 0\}$. The value of $h(\mathbf{x})$ can be interpreted as the current safety margin of the system, indicating proximity to the unsafe region.

A function $h$ is said to be a zeroing control barrier function if there exists an extended class $\mathcal{K}_\infty$ function $\alpha(\cdot)$ (strictly increasing with $\alpha(0)=0$ and $\lim_{r\to\pm\infty}\alpha(r)=\pm\infty$) such that for all $\mathbf{x} \in \mathcal{C}$,
\begin{equation*}
\sup_{\mathbf{u} \in \mathcal{U}} \left[ L_f h(\mathbf{x}) + L_g h(\mathbf{x}) \mathbf{u} \right] \geq -\alpha(h(\mathbf{x})),
\end{equation*}
where $L_f h(\mathbf{x}) = \frac{\partial h}{\partial \mathbf{x}} f(\mathbf{x})$ and $L_g h(\mathbf{x}) = \frac{\partial h}{\partial \mathbf{x}} g(\mathbf{x})$ are the Lie derivatives of $h$ along $f$ and $g$, respectively. This condition ensures that when the system approaches the boundary, there exists a control input that prevents the safety margin from decreasing too fast. By Nagumo's theorem, this guarantees forward invariance of $\mathcal{C}$: if $\mathbf{x}(0) \in \mathcal{C}$, then $\mathbf{x}(t) \in \mathcal{C}$ for all $t \geq 0$.

\paragraph{Discrete-Time Formulation}
In practice, robotic controllers are typically implemented in discrete time with sampling period $\Delta t$. Accordingly, the discrete-time counterpart of the CBF condition is defined as
\begin{equation}
h(\mathbf{x}_{k+1}) \geq (1 - \gamma) \, h(\mathbf{x}_k), \quad 0 < \gamma \leq 1,
\label{eq:cbf_condition_disc}
\end{equation}
where $\gamma \in (0, 1]$ plays an analogous role to the continuous-time class $\mathcal{K}_\infty$ function, controlling the decay rate of the safety margin. When $h(\mathbf{x}_k) = 0$, condition (\ref{eq:cbf_condition_disc}) reduces to $h(\mathbf{x}_{k+1}) \geq 0$, requiring that the next state remains inside the safe set. This discrete condition guarantees that if $\mathbf{x}_k \in \mathcal{C}$, then $\mathbf{x}_{k+1} \in \mathcal{C}$, thereby enforcing forward invariance along the sampled trajectory.

\subsection{Bézier Curves}
\label{sec:bezier_intro}

Bézier curves are parametric curves widely used in computer graphics, path planning, and trajectory generation due to their intuitive geometric control and smoothness properties. A Bézier curve of degree $n$ is defined as a linear combination of $n+1$ control points $\mathbf{B}_i \in \mathbb{R}^d$, where $d$ is the spatial dimension, using Bernstein basis polynomials~\cite{farouki2012bernstein}:
\begin{equation*}
\mathbf{C}(t) = \sum_{i=0}^{n} \mathbf{B}_i B_{i,n}(t), \quad t \in [0,1],
\end{equation*}
where $B_{i,n}(t) = \binom{n}{i} t^i (1-t)^{n-i}$ is the $i$-th Bernstein polynomial of degree $n$. The control points $\mathbf{B}_0, \mathbf{B}_1, \dots, \mathbf{B}_n$ form the control polygon, which geometrically approximates the curve shape.

Bézier curves possess several important properties that make them suitable for trajectory planning. First, they exhibit endpoint interpolation: $\mathbf{C}(0) = \mathbf{B}_0$ and $\mathbf{C}(1) = \mathbf{B}_n$, meaning the curve passes through the first and last control points. Second, the curve lies entirely within the convex hull of its control points, providing a natural guarantee of spatial bounds. Third, the derivatives at the endpoints are determined solely by the adjacent control points:
\begin{equation*}
\dot{\mathbf{C}}(0) = n(\mathbf{B}_1 - \mathbf{B}_0), \quad \dot{\mathbf{C}}(1) = n(\mathbf{B}_n - \mathbf{B}_{n-1}),
\end{equation*}
which allows direct enforcement of initial and terminal velocity constraints in trajectory generation.

The derivative of a Bézier curve is itself a lower-degree Bézier curve, expressed as
\begin{equation*}
\dot{\mathbf{C}}(t) = n \sum_{i=0}^{n-1} (\mathbf{B}_{i+1} - \mathbf{B}_i) B_{i,n-1}(t).
\end{equation*}
This property facilitates the derivation of higher-order derivatives and enables the incorporation of continuity constraints (e.g., $C^1$ and $C^2$ smoothness) when concatenating multiple Bézier segments. Moreover, the boundedness of the Bernstein basis on $t \in [0,1]$ ensures numerical stability, making Bézier curves a computationally efficient and mathematically tractable choice for constrained trajectory optimization.

\section{Problem Formulation}

Consider the discrete-time nonlinear control system:
\begin{equation}
\mathbf{x}_{k+1} = f(\mathbf{x}_k, \mathbf{u}_k), \quad k = 0, 1, \ldots, H - 1,
\label{eq:system}
\end{equation}
where $\mathbf{x}_k \in \mathcal{X} \subseteq \mathbb{R}^n$ is the state, $\mathbf{u}_k \in \mathcal{U} \subseteq \mathbb{R}^m$ is the control input, and $H$ is the planning horizon. The initial state $\mathbf{x}_0 = \mathbf{x}_{\mathrm{init}}$ is given. Let $\mathbf{U} = [\mathbf{u}_0^\top, \mathbf{u}_1^\top, \ldots, \mathbf{u}_{H-1}^\top]^\top \in \mathbb{R}^{mH}$ denote the complete control sequence, which uniquely determines the state trajectory $\mathbf{X}(\mathbf{U}) = [\mathbf{x}_1^\top, \ldots, \mathbf{x}_H^\top]^\top$ through the recursion in \eqref{eq:system}. The dynamics $f(\cdot,\cdot)$ are assumed to be black-box with no differentiability requirement, and the state and control constraints $\mathbf{x}_k \in \mathcal{X}$, $\mathbf{u}_k \in \mathcal{U}$ are required to hold for all $k$.

The planning task is to determine a control sequence $\mathbf{U}$ that drives the system from $\mathbf{x}_{\mathrm{init}}$ to the target $\mathbf{x}_{\mathrm{goal}}$, satisfying the following requirements:
\begin{enumerate}
    \item[(i)] \emph{Plausible path:} the resulting trajectory should be consistent with the distribution of expert demonstrations, so that the planned motion follows a task-appropriate mode rather than an arbitrary feasible one. This requirement is encoded by a truncated diffusion prior over control sequences.
    \item[(ii)] \emph{Safety:} the resulting trajectory $\mathbf{X}(\mathbf{U})$ remains within the safe set $\mathcal{C}$ throughout the horizon, i.e., the discrete-time CBF condition \eqref{eq:cbf_condition_disc} holds at every step.
    \item[(iii)] \emph{Control sequence requirements:} the control sequence should satisfy several execution-level requirements, including smoothness (no abrupt changes that would cause saturation or chattering), low energy consumption (bounded control effort), etc.
\end{enumerate}

We choose the control sequence $\mathbf{U}$ rather than the state trajectory $\mathbf{X}$ as the decision variable, since a physical system executes control inputs and a plan expressed in the control space is thus directly executable, whereas a generated state sequence may not be dynamically consistent, i.e., it may not be reachable by any admissible control. Moreover, under black-box dynamics the control sequence is the only quantity that can be directly applied to the forward simulator, so that the safety, task, and smoothness terms can all be evaluated in a single forward rollout without an additional inverse-dynamics step to recover the controls.

The problem is challenging because the data-driven requirement and the physical/task requirements are heterogeneous in nature and must be fused in a principled manner rather than treated as independent objectives. Besides, the dynamics $f$ are black-box and non-differentiable while the per-step CBF condition renders the feasible set non-convex, making the planning problem a non-convex, gradient-inaccessible optimization over a high-dimensional decision variable $\mathbf{U} \in \mathbb{R}^{mH}$.

\section{Method}
\label{sec:method}

\begin{figure*}[!t]
\centering
\includegraphics[width=\linewidth]{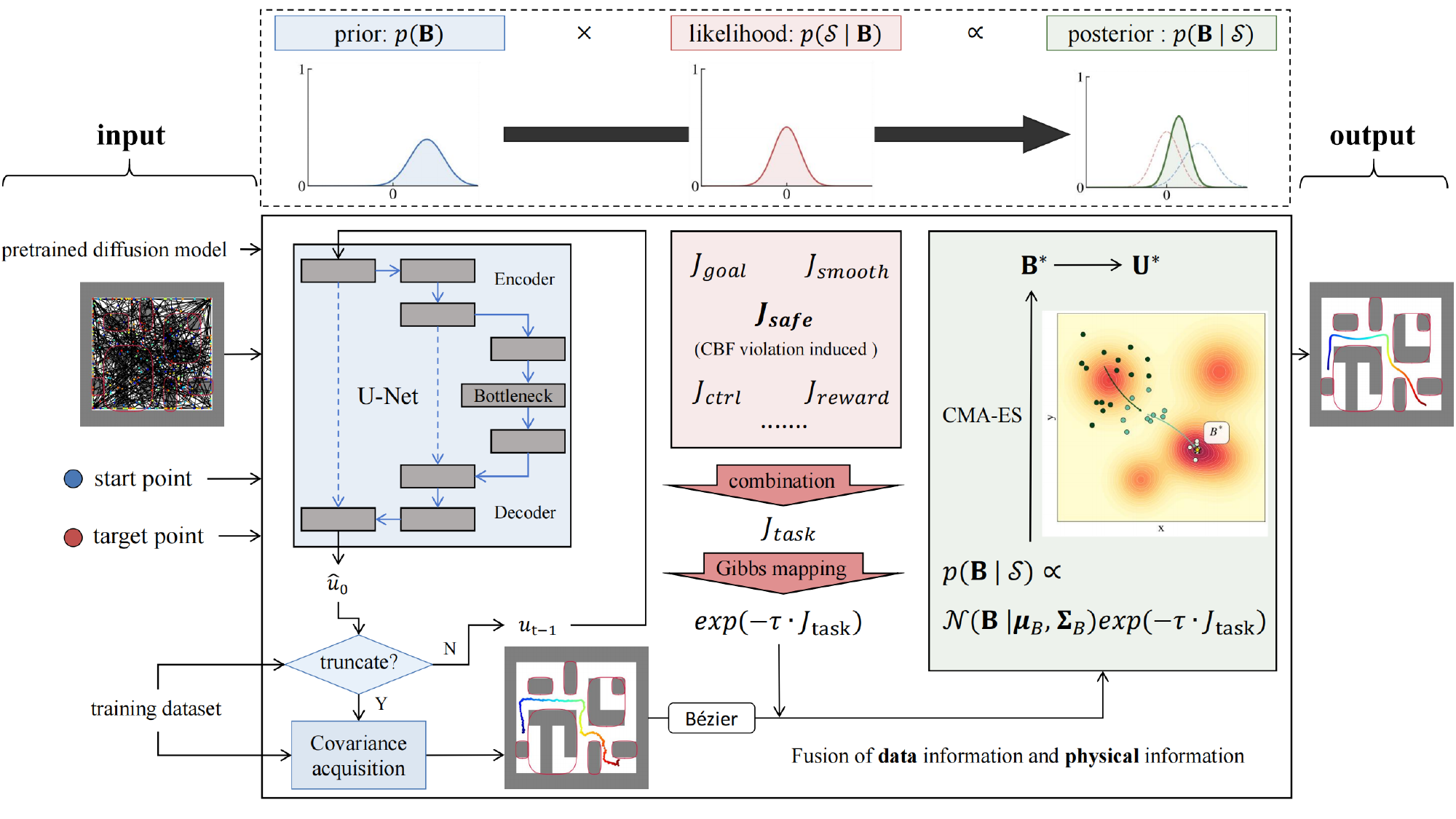}
\caption{Overall framework of TDP-SafeMAP. The truncated diffusion model
constructs a Gaussian prior in the control sequence space, which is then
mapped into the B\'ezier control-point space, while the task energy induces
a Gibbs generalized likelihood. Their product defines a Gibbs posterior,
whose mode is approximately computed by CMA-ES. The one-dimensional
distributions are shown schematically.}
\label{fig1}
\end{figure*}

\subsection{Bayesian Reformulation of the Planning Problem}

We formulate this planning problem under a generalized Bayesian inference framework as a Maximum a Posteriori (MAP) estimation problem. Specifically:

\textbf{Prior distribution}: $p(\mathbf{B}) = \mathcal{N}(\boldsymbol{\mu}_B, \boldsymbol{\Sigma}_B)$, constructed from offline expert data by a truncated diffusion model, encoding the data-driven belief about control sequences conditioned on the task environment.

\textbf{Generalized Likelihood Function}: The term \(p(\mathcal{S} \mid \mathbf{B}) \propto \exp(-\tau \cdot J_{\mathrm{task}}(\mathbf{B}))\) is defined as a Gibbs generalized task likelihood, where \(\tau > 0\) is the inverse temperature and \(\mathcal{S}\) reflects the degree to which the candidate solution satisfies the task specification.

\textbf{Posterior distribution}: $p(\mathbf{B} \mid \mathcal{S}) \propto p(\mathcal{S} \mid \mathbf{B}) \, p(\mathbf{B})$ defines the Gibbs posterior that fuses the diffusion prior with the task likelihood. Its negative logarithm yields the MAP estimation objective:
\begin{equation}
\begin{split}
\mathbf{B}_{\mathrm{MAP}} = \arg\min_{\mathbf{B}} \Big[ 
& \tau \cdot J_{\mathrm{task}}(\mathbf{B}) \\
& + \frac{1}{2} (\mathbf{B} - \boldsymbol{\mu}_B)^\top \boldsymbol{\Sigma}_B^{-1} (\mathbf{B} - \boldsymbol{\mu}_B) \Big].
\end{split}
\label{eq:map_objective_posterior}
\end{equation}

\textbf{Posterior solver}: since $J_{\mathrm{task}}$ relies on black-box forward simulation and is non-differentiable, we adopt a gradient-free evolution strategy to solve \eqref{eq:map_objective_posterior}.

\subsection{Construction of Truncated Diffusion Prior}

To embed the complex trajectory distribution captured by the pretrained diffusion model into the Bayesian inference framework and render it compatible with the subsequent CMA-ES-based MAP estimation, we propose an offline construction method for a truncated diffusion prior. The core idea is to terminate the reverse denoising process at an intermediate timestep \( t_{\mathrm{stop}} \) and approximate the predicted distribution of the clean data \( \mathbf{x}_0 \) conditioned on \( \mathbf{x}_{t_{\mathrm{stop}}} \) as a Gaussian \( \mathcal{N}(\boldsymbol{\mu}_U, \boldsymbol{\Sigma}_U) \). The primary challenge lies in preserving the multimodal mode-selection capability of the diffusion model while obtaining a compact and accurate Gaussian parameterization.

In the reverse denoising dynamics of diffusion models, there exists a critical phase termed \textit{speciation}. Prior to this phase, the reverse process primarily recovers the macroscopic global structure from pure noise, with individual latent modes remaining undifferentiated. After crossing this phase, the conditional modes have largely separated, and subsequent denoising steps mainly perform local refinements within each mode. Motivated by this observation, our truncation strategy aims to precisely locate this speciation time and extract the Gaussian approximation at that point. At this stage, the output distribution of the diffusion model exhibits clear mean selection, while the distribution within a single mode can typically be approximated by a Gaussian distribution, thereby providing theoretical support for the Gaussian truncation approximation.

\subsubsection{Acquisition of Mean Prediction}

Given the noisy sample \(\mathbf{x}_t\) at timestep \(t\), Tweedie's formula~\cite{efron2011tweedie} 
provides an estimate of the clean data \(\mathbf{x}_0\) as
\begin{equation}
\boldsymbol{\mu}_{0|t} = \frac{\mathbf{x}_t - \sqrt{1-\bar{\alpha}_t} \, \boldsymbol{\varepsilon}_\theta(\mathbf{x}_t, t)}{\sqrt{\bar{\alpha}_t}},
\label{eq:tweedie}
\end{equation}
where \(\boldsymbol{\varepsilon}_\theta(\cdot, \cdot)\) is the pretrained noise prediction network. 
This estimate corresponds to the mean of the Gaussian approximation \(p(\mathbf{x}_0 \mid \mathbf{x}_t) \approx \mathcal{N}(\boldsymbol{\mu}_{0|t}, \boldsymbol{\Sigma}_{0|t})\). 
At the truncation timestep \(t_{\mathrm{stop}}\), we set 
\(\boldsymbol{\mu}_U = \boldsymbol{\mu}_{0|t_{\mathrm{stop}}}\). 

\subsubsection{Acquisition of Covariance Prediction}

Let the random variable in the original control sequence space be denoted as \( \mathbf{U} \in \mathbb{R}^{D} \), where \( D = m \times H \). Given the training dataset \( \{\mathbf{u}_{\mathrm{data}}^{(i)}\}_{i=1}^{M} \), we first compute the empirical mean and covariance matrix:
\begin{equation*}
\bar{\mathbf{u}}_{\mathrm{data}} = \frac{1}{M}\sum_{i=1}^{M}\mathbf{u}_{\mathrm{data}}^{(i)},
\end{equation*}
\begin{equation*}
\boldsymbol{\Sigma}_{\mathrm{data}} = \frac{1}{M - 1}\sum_{i=1}^{M}(\mathbf{u}_{\mathrm{data}}^{(i)} - \bar{\mathbf{u}}_{\mathrm{data}})(\mathbf{u}_{\mathrm{data}}^{(i)} - \bar{\mathbf{u}}_{\mathrm{data}})^{\top}.
\end{equation*}
To ensure numerical stability, we apply an isotropic regularization to the covariance matrix: \( \boldsymbol{\Sigma}_{\mathrm{data}} \leftarrow \boldsymbol{\Sigma}_{\mathrm{data}} + \epsilon \mathbf{I} \), where \( \epsilon = 10^{-6} \cdot \mathrm{tr}(\boldsymbol{\Sigma}_{\mathrm{data}}) / D \).

Following the Free Hunch method \cite{rissanen2025free}, at any timestep \( t \) in the reverse process, the predicted distribution of the clean data \( \mathbf{x}_0 \) conditioned on the current noisy sample \( \mathbf{x}_t \) can be approximated as a Gaussian: \( p(\mathbf{x}_0 \mid \mathbf{x}_t) \approx \mathcal{N}(\boldsymbol{\mu}_{0|t}, \boldsymbol{\Sigma}_{0|t}) \). The mean \( \boldsymbol{\mu}_{0|t} \) follows directly from Tweedie's formula, whereas the covariance is obtained through a separate recursion, as described below.

Let \( \bar{\alpha}_t \) denote the noise accumulation coefficient in the forward process, and define \( \rho_t = \bar{\alpha}_t / (1 - \bar{\alpha}_t) \). Setting the covariance at the initial timestep \( t_{\max} \) as the data covariance \( \boldsymbol{\Sigma}_{0|t_{\max}} = \boldsymbol{\Sigma}_{\mathrm{data}} \), when stepping back from \( t \) to \( t' \) (where \( t' < t \)), the inverse covariance matrix satisfies the recursion:
\begin{equation}
\boldsymbol{\Sigma}_{0|t'}^{-1} = \boldsymbol{\Sigma}_{0|t}^{-1} + \left( \rho_{t'} - \rho_t \right) \mathbf{I}.
\label{eq:trunc_cov_recursion}
\end{equation}
This recursion depends only on the noise schedule \( \{\bar{\alpha}_t\} \) and the dataset covariance structure, making it independent of the specific planning task and amenable to offline pre-computation across all timesteps.

\subsubsection{Truncation Timestep Criterion}

To automatically determine the optimal truncation timestep $t_{\mathrm{stop}}$, we adopt the condition number of the covariance matrix as a quantitative indicator, combined with an elbow detection strategy. Specifically, we use the standard $L_2$-norm condition number of the covariance matrix $\boldsymbol{\Sigma}_{0|t}$:
\begin{equation}
\kappa(t) = \frac{\lambda_{\max}(\boldsymbol{\Sigma}_{0|t})}{\lambda_{\min}(\boldsymbol{\Sigma}_{0|t})},
\end{equation}
where $\lambda_{\max}(\cdot)$ and $\lambda_{\min}(\cdot)$ denote the largest and smallest eigenvalues, respectively. This metric quantifies the anisotropy of the distribution: in the deep-noise regime, the Gaussian approximation must cover the yet-undifferentiated multi-mode distribution and is compressed into a low-dimensional subspace, so $\kappa(t)$ approaches the condition number of the data covariance and $\kappa(t) \gg 1$; as reverse denoising proceeds and the process commits to a specific mode, the covariance narrows and becomes approximately isotropic, and $\kappa(t)$ falls back toward one.

Since the update term in the covariance recursion is a diagonal correction, $\boldsymbol{\Sigma}_{0|t}^{-1}$ shares the same eigenvectors as $\boldsymbol{\Sigma}_{\mathrm{data}}^{-1}$. Consequently, the $i$-th eigenvalue can be expressed analytically as
\begin{equation*}
\lambda_i(t) = \frac{1}{1 / \lambda_i^{\mathrm{data}} + (\rho_t - \rho_{t_{\max}})},
\end{equation*}
where $\lambda_i^{\mathrm{data}}$ is the $i$-th eigenvalue of $\boldsymbol{\Sigma}_{\mathrm{data}}$. The minimum and maximum eigenvalues are thus:
\begin{equation*}
\lambda_{\min}(t) = \frac{1}{1 / \lambda_{\min}^{\mathrm{data}} + \Delta \rho_t},\qquad
\lambda_{\max}(t) = \frac{1}{1 / \lambda_{\max}^{\mathrm{data}} + \Delta \rho_t},
\end{equation*}
with $\Delta \rho_t = \rho_t - \rho_{t_{\max}}$. Substituting the above into the condition number definition yields a closed-form expression that depends only on the extreme eigenvalues of the data covariance:
\begin{equation*}
\kappa(t) = \frac{1 / \lambda_{\min}^{\mathrm{data}} + \Delta \rho_t}{1 / \lambda_{\max}^{\mathrm{data}} + \Delta \rho_t}.
\end{equation*}
Since this computation only requires a single eigendecomposition of $\boldsymbol{\Sigma}_{\mathrm{data}}$ performed offline, $\kappa(t)$ for all timesteps can be obtained in constant time $O(1)$.

As the noise level $t$ increases, $\kappa(t)$ grows monotonically toward the condition number of the data covariance (equivalently, $\kappa(t)$ decreases monotonically during the reverse denoising process). The resulting curve typically exhibits a ``flat--steep rise--flat'' pattern, where the elbow point transitioning from flat to steep rise corresponds to the onset of mode separation (the speciation time).
\begin{equation}
t^* = \arg\max_t \Delta^2 \kappa(t).
\end{equation}
The final truncation timestep is set to $t_{\mathrm{stop}} = \min(t^* + \delta, t_{\max})$, where $\delta \in \{0, \ldots, 5\}$ is a small nonnegative offset to ensure the reverse process has fully crossed the mode-selection boundary, stabilizing the Gaussian approximation.

\subsubsection{Mapping from Control Sequence Space to B\'ezier Control-Point Space}

After obtaining the Gaussian distribution \( \mathcal{N}(\boldsymbol{\mu}_U, \boldsymbol{\Sigma}_U) \) at the truncation timestep \( t_{\mathrm{stop}} \), we map it to the low-dimensional B\'ezier control-point space to reduce the dimensionality for the subsequent CMA-ES optimization and make the control sequence smoother. As defined before, the control sequence and B\'ezier control points satisfy the linear transformation \( \mathbf{U} = \mathbf{M}\mathbf{B} \), with pseudo-inverse \( \mathbf{A} = (\mathbf{M}^\top \mathbf{M})^{-1} \mathbf{M}^\top \) satisfying \( \mathbf{B} = \mathbf{A}\mathbf{U} \).

Due to the closure property of Gaussian distributions under linear transformations, the prior distribution in the control-point space is
\begin{equation}
p(\mathbf{B}) = \mathcal{N}(\boldsymbol{\mu}_B, \boldsymbol{\Sigma}_B),
\label{eq:trunc_prior_b}
\end{equation}
where the mean and covariance are given by the linear propagation formulas:
\begin{align}
\boldsymbol{\mu}_B &= \mathbf{A}\boldsymbol{\mu}_U = (\mathbf{M}^\top \mathbf{M})^{-1} \mathbf{M}^\top \boldsymbol{\mu}_U, \label{eq:trunc_mean_prop} \\
\boldsymbol{\Sigma}_B &= \mathbf{A}\boldsymbol{\Sigma}_U \mathbf{A}^\top = (\mathbf{M}^\top \mathbf{M})^{-1} \mathbf{M}^\top \boldsymbol{\Sigma}_U \mathbf{M} (\mathbf{M}^\top \mathbf{M})^{-1}. \label{eq:trunc_cov_prop}
\end{align}
This mapping does not involve explicit matrix inversion of \( \boldsymbol{\Sigma}_U \)—it only appears in a quadratic form with \( \mathbf{M} \)—ensuring excellent numerical stability. Moreover, the matrices \( \mathbf{M} \) and \( \mathbf{A} \) are determined entirely by the planning horizon \( H \) and the number of control points \( P \), allowing them to be precomputed offline.

The choice of B\'ezier curves over alternatives such as B-splines is motivated by three considerations. Firstly, B\'ezier curves are globally supported, yielding a dense mapping matrix \(\mathbf{M}\) that preserves the full covariance structure of the diffusion prior during the linear transformation, whereas B-splines with local support would dilute long-range correlations and potentially distort the anisotropic modal geometry encoded in \(\boldsymbol{\Sigma}_U\). Secondly, B\'ezier curves strictly interpolate the endpoints, which naturally enforces fixed initial and terminal constraints without additional projection or penalty terms. Third, the B\'ezier parameterization introduces no extra hyperparameters beyond the number of control points, avoiding the knot-vector tuning required by B-splines.

The complete construction procedure of the prior distribution is summarized in Algorithm~\ref{alg:trunc_prior_construction}.

\begin{algorithm}[htbp]
\caption{Construction of Diffusion Prior}
\label{alg:trunc_prior_construction}
\begin{algorithmic}[1]
\REQUIRE Training dataset \(\{\mathbf{u}_{\mathrm{data}}^{(i)}\}_{i=1}^M\), noise schedule \(\{\bar{\alpha}_t\}_{t=0}^{t_{\max}}\)
\ENSURE Prior \((\boldsymbol{\mu}_B, \boldsymbol{\Sigma}_B)\), truncation timestep \(t_{\mathrm{stop}}\)

\STATE Compute data mean \(\bar{\mathbf{u}}_{\mathrm{data}}\) and covariance \(\boldsymbol{\Sigma}_{\mathrm{data}}\) (with regularization)
\STATE Perform eigendecomposition of \(\boldsymbol{\Sigma}_{\mathrm{data}}\) to obtain \(\lambda_{\min}^{\mathrm{data}}, \lambda_{\max}^{\mathrm{data}}\)

\STATE \(\boldsymbol{\Sigma}_{0|t_{\max}} \gets \boldsymbol{\Sigma}_{\mathrm{data}}\), \(\rho_t \gets \bar{\alpha}_t / (1 - \bar{\alpha}_t)\)
\FOR{\(t = t_{\max}-1\) \textbf{downto} \(0\)}
    \STATE \(\boldsymbol{\Sigma}_{0|t}^{-1} \gets \boldsymbol{\Sigma}_{0|t+1}^{-1} + (\rho_t - \rho_{t+1}) \mathbf{I}\)
\ENDFOR

\FOR{\(t = 0\) \TO \(t_{\max}\)}
    \STATE \(\kappa(t) \gets \dfrac{1 / \lambda_{\min}^{\mathrm{data}} + \rho_t - \rho_{t_{\max}}}{1 / \lambda_{\max}^{\mathrm{data}} + \rho_t - \rho_{t_{\max}}}\)
\ENDFOR
\STATE \(t^* \gets \arg\max_t \Delta^2\kappa(t)\)\quad (second-difference elbow detection)
\STATE \(t_{\mathrm{stop}} \gets \min(t^* + \delta, t_{\max})\)

\STATE Retrieve \(\boldsymbol{\Sigma}_U \gets \boldsymbol{\Sigma}_{0|t_{\mathrm{stop}}}\)
\STATE Compute \(\boldsymbol{\mu}_U\) via Tweedie's formula at \(t_{\mathrm{stop}}\)
\STATE \(\boldsymbol{\mu}_B \gets \mathbf{A}\boldsymbol{\mu}_U,\quad \boldsymbol{\Sigma}_B \gets \mathbf{A}\boldsymbol{\Sigma}_U \mathbf{A}^\top\)\quad (map to control-point space)
\RETURN \(\boldsymbol{\mu}_B, \boldsymbol{\Sigma}_B, t_{\mathrm{stop}}\)
\end{algorithmic}
\end{algorithm}

Through the above pipeline, the knowledge encapsulated in the expert dataset concerning the synthesis of control sequences is efficiently distilled into a closed-form Gaussian distribution \(\mathcal{N}(\boldsymbol{\mu}_B, \boldsymbol{\Sigma}_B)\). This distribution encapsulates the belief regarding the resulting control sequence at the speciation timestep, thereby furnishing the subsequent optimization procedure with both a principled initial guess and a well-calibrated regularization mechanism.

\subsection{Design of the Gibbs Task Likelihood}
\label{sec:likelihood}

Let \(\mathcal{S}\) represent the degree to which a candidate solution satisfies the task specification, including safety, control smoothness, and goal-reaching requirements. We define the Gibbs task likelihood as
\begin{equation}
p(\mathcal{S} \mid \mathbf{B})
\propto
\exp\left(-\tau \cdot J_{\mathrm{task}}(\mathbf{B})\right),
\label{eq:gibbs_likelihood}
\end{equation}
where \(J_{\mathrm{task}}(\mathbf{B})\) is a scalar energy function that quantifies task violation, and \(\tau>0\) is the inverse temperature parameter controlling the strength of task conditioning. Here, \(p(\mathcal{S}\mid\mathbf{B})\) is treated as a generalized likelihood factor---a nonnegative measure of task compatibility---rather than a strict sampling distribution. For brevity, it is referred to as the task likelihood hereafter. Lower energy corresponds to higher task compatibility, while the posterior normalization is introduced only after this factor is combined with the diffusion prior.
The energy function is constructed as a weighted sum of three components:
\begin{equation}
J_{\mathrm{task}}(\mathbf{B}) = J_{\mathrm{safe}}(\mathbf{B}) + J_{\mathrm{ctrl}}(\mathbf{B}) + J_{\mathrm{goal}}(\mathbf{B}),
\label{eq:task_energy_total}
\end{equation}
where \(J_{\mathrm{safe}}\) penalizes violations of safety constraints, \(J_{\mathrm{ctrl}}\) regularizes the control effort and smoothness, and \(J_{\mathrm{goal}}\) enforces the terminal objective. Each component is evaluated by first reconstructing the control sequence \(\mathbf{U} = \mathbf{M}\mathbf{B}\) via the B\'ezier mapping and then performing forward simulation through the discrete-time dynamics \(\mathbf{x}_{k+1} = f(\mathbf{x}_k, \mathbf{u}_k)\) to obtain the state trajectory \(\mathbf{X}(\mathbf{U})\). The entire evaluation requires only forward rollouts of the dynamics, without access to their analytical form or derivative information, and is therefore applicable to the black-box scenario.

\subsubsection{Safety Energy via Control Barrier Function}

The safety energy quantifies the extent to which a candidate trajectory violates the discrete-time CBF condition. Given the safe set \(\mathcal{C} = \{\mathbf{x} : h(\mathbf{x}) \geq 0\}\), the discrete-time CBF condition requires
\begin{equation*}
h(\mathbf{x}_{k+1}) \geq (1 - \gamma) h(\mathbf{x}_k), \quad 0 < \gamma \leq 1,
\end{equation*}
for all timesteps \(k = 0, \dots, H-1\), where \(\gamma\) controls the decay rate of the safety margin. To penalize violations of this condition, we define the safety energy as
\begin{equation}
J_{\mathrm{safe}}(\mathbf{B}) = \lambda_{\mathrm{safe}} \sum_{k=0}^{H-1} \left[ \max \left(0, (1-\gamma)h(\mathbf{x}_k) - h(\mathbf{x}_{k+1}) \right) \right]^2,
\label{eq:energy_safe_raw}
\end{equation}
where \(\lambda_{\mathrm{safe}} \gg 1\) is a large weighting coefficient chosen so that safety violations dominate the total energy by design, making infeasible trajectories highly unlikely under the posterior distribution.

Since the \(\max(0, \cdot)\) operator introduces non-smoothness that may hinder the optimization process, the Smooth Maximum (Softplus) function is adopted here as a differentiable approximation:
\begin{equation*}
\psi_\eta(z) = \frac{1}{\eta} \log\left(1 + e^{\eta z}\right),
\end{equation*}
where \(\eta > 0\) controls the approximation accuracy. As \(\eta \to \infty\), \(\psi_\eta(z)\) converges uniformly to \(\max(0, z)\). Substituting this into Eq.~(\ref{eq:energy_safe_raw}) yields the smoothed safety energy:
\begin{equation}
J_{\mathrm{safe}}(\mathbf{B}) = \lambda_{\mathrm{safe}} \sum_{k=0}^{H-1} \left[ \psi_\eta \left( (1-\gamma)h(\mathbf{x}_k) - h(\mathbf{x}_{k+1}) \right) \right]^2.
\label{eq:energy_safe_smooth}
\end{equation}
This smooth formulation retains the structure of the original CBF condition as a soft penalty while providing stable energy landscapes for the CMA-ES solver. 

\subsubsection{Auxiliary Energy for Control Performance}

To ensure the generated trajectories are dynamically feasible and comfortable for execution, we introduce two auxiliary regularization terms. The first penalizes large control magnitudes to avoid actuator saturation:
\begin{equation*}
J_{u}(\mathbf{B}) = \lambda_u \sum_{k=0}^{H-1} \|\mathbf{u}_k\|_2^2,
\end{equation*}
where \(\lambda_u > 0\) is the control effort weight. The second penalizes aggressive changes between consecutive control inputs to promote smoothness and reduce mechanical wear:
\begin{equation*}
J_{du}(\mathbf{B}) = \lambda_{du} \sum_{k=0}^{H-2} \|\mathbf{u}_{k+1} - \mathbf{u}_k\|_2^2,
\end{equation*}
where \(\lambda_{du} > 0\) is the smoothness weight. These two terms are combined into the overall control performance energy:
\begin{equation}
J_{\mathrm{ctrl}}(\mathbf{B}) = J_{u}(\mathbf{B}) + J_{du}(\mathbf{B}).
\label{eq:energy_ctrl}
\end{equation}

\subsubsection{Goal Energy}

The goal energy drives the system toward the desired terminal state \(\mathbf{x}_{\mathrm{goal}}\). It is defined as the squared Euclidean distance between the final predicted state and the target:
\begin{equation}
J_{\mathrm{goal}}(\mathbf{B}) = \lambda_g \|\mathbf{x}_H - \mathbf{x}_{\mathrm{goal}}\|_2^2,
\label{eq:energy_goal}
\end{equation}
where \(\lambda_g > 0\) is the goal-reaching weight. If the actual task requires imposing constraints on the positions of other intermediate points along the path, the goal energy can be extended accordingly.

\subsubsection{Energy-to-Likelihood Mapping and Interpretation}

Combining Eqs.~(\ref{eq:energy_safe_smooth}), (\ref{eq:energy_ctrl}), and (\ref{eq:energy_goal}) yields the total energy \(J_{\mathrm{task}}(\mathbf{B})\) in Eq.~(\ref{eq:task_energy_total}). The Gibbs construction in Eq.~(\ref{eq:gibbs_likelihood}) maps this energy into a nonnegative task factor, which serves as the generalized task likelihood within the Gibbs posterior. The inverse temperature \(\tau\) balances the influence of this task likelihood and the diffusion prior. As \(\tau\to0\), the likelihood factor approaches a constant and the Gibbs posterior reduces to the prior. As \(\tau\) increases, the MAP objective is increasingly dominated by the task energy. In practice, \(\tau\) is chosen empirically to promote safety, smoothness, and goal-reaching performance while preserving the mode-consistency inherited from the diffusion prior. Together with the Gaussian prior, this task likelihood defines the Gibbs posterior used in the following section.
\begin{algorithm}[ht]
\caption{Evaluation of the Task Energy}
\label{alg:likelihood}
\begin{algorithmic}[1]
\REQUIRE $\mathbf{B}$ (control points), initial state $\mathbf{x}_{\mathrm{init}}$, system dynamics $f(\cdot)$, CBF $h(\cdot)$, parameters $\gamma,\eta,\lambda_{\mathrm{safe}},\lambda_u,\lambda_{du},\lambda_g,\mathbf{x}_{\mathrm{goal}}$, Bézier mapping matrix $\mathbf{M}$, horizon $H$.
\ENSURE Total energy $J_{\mathrm{task}}(\mathbf{B})$.

\STATE Reconstruct the control sequence $\mathbf{U} = \mathbf{M}\mathbf{B}$ to yield $\mathbf{u}_0,\dots,\mathbf{u}_{H-1}$.
\STATE Initialize state trajectory: $\mathbf{x}_0 \leftarrow \mathbf{x}_{\mathrm{init}}$.
\FOR{$k = 0$ \TO $H-1$}
    \STATE $\mathbf{x}_{k+1} \leftarrow f(\mathbf{x}_k, \mathbf{u}_k)$.
\ENDFOR

\STATE Compute safety energy:
\STATE \quad $\displaystyle J_{\mathrm{safe}} \leftarrow \lambda_{\mathrm{safe}} \sum_{k=0}^{H-1} \left[ \psi_\eta\!\left((1-\gamma)h(\mathbf{x}_k) - h(\mathbf{x}_{k+1})\right) \right]^2$.

\STATE Compute control performance energy:
\STATE \quad $\displaystyle J_{\mathrm{ctrl}} \leftarrow \lambda_u \sum_{k=0}^{H-1} \|\mathbf{u}_k\|^2 \;+\; \lambda_{du} \sum_{k=0}^{H-2} \|\mathbf{u}_{k+1} - \mathbf{u}_k\|^2$.

\STATE Compute goal energy:
\STATE \quad $\displaystyle J_{\mathrm{goal}} \leftarrow \lambda_g \|\mathbf{x}_H - \mathbf{x}_{\mathrm{goal}}\|^2$.

\STATE $J_{\mathrm{task}} \leftarrow J_{\mathrm{safe}} + J_{\mathrm{ctrl}} + J_{\mathrm{goal}}$.
\STATE \textbf{return} $J_{\mathrm{task}}$.
\end{algorithmic}
\end{algorithm}

\subsection{Posterior Computation via CMA-ES}
\label{sec:posterior}

Given the Gaussian prior
$p(\mathbf{B})=\mathcal{N}(\boldsymbol{\mu}_B,\boldsymbol{\Sigma}_B)$
constructed from the truncated diffusion model and the Gibbs
generalized likelihood
$p(\mathcal{S}\mid\mathbf{B})\propto
\exp[-\tau J_{\mathrm{task}}(\mathbf{B})]$, the generalized Bayesian
update yields the normalized Gibbs posterior
\begin{equation}
\begin{aligned}
p(\mathbf{B}\mid\mathcal{S})
&=
\frac{1}{Z_\tau}
\exp\left[-\tau J_{\mathrm{task}}(\mathbf{B})\right]
\mathcal{N}
\left(
\mathbf{B}\mid
\boldsymbol{\mu}_B,\boldsymbol{\Sigma}_B
\right),\\
Z_\tau
&=
\int
\exp\left[-\tau J_{\mathrm{task}}(\widetilde{\mathbf{B}})\right]
\mathcal{N}
\left(
\widetilde{\mathbf{B}}\mid
\boldsymbol{\mu}_B,\boldsymbol{\Sigma}_B
\right)
\mathrm{d}\widetilde{\mathbf{B}}.
\end{aligned}
\label{eq:posterior_def}
\end{equation}

Taking the negative logarithm of this posterior distribution converts the maximum a posteriori estimation into the deterministic minimization problem given in \eqref{eq:map_objective_posterior}.

To solve this optimization problem, we define the fitness function for the CMA-ES solver as the negative log-posterior:
\begin{equation}
f(\mathbf{B}) = \tau \cdot J_{\mathrm{task}}(\mathbf{B}) + \frac{1}{2} (\mathbf{B} - \boldsymbol{\mu}_B)^\top \boldsymbol{\Sigma}_B^{-1} (\mathbf{B} - \boldsymbol{\mu}_B),
\label{eq:fitness_posterior}
\end{equation}
where the first term penalizes task violations and the second term acts as a Mahalanobis distance regularizer that anchors the solution to the diffusion prior. 

The initialization of the CMA-ES solver is directly derived from the truncated diffusion prior. The initial mean of the search distribution is set to the propagated mean of the prior:
\begin{equation*}
\mathbf{m}^{(0)} = \boldsymbol{\mu}_B,
\end{equation*}
and the initial covariance matrix is set to the propagated covariance of the prior:
\begin{equation*}
\mathbf{C}^{(0)} = \boldsymbol{\Sigma}_B.
\end{equation*}
The initial global step-size \(\sigma^{(0)}\) is determined according to the standard CMA-ES heuristic, typically set here as \(\sigma^{(0)} = 1\) in the normalized control-point space, which ensures that the initial sampling range adequately covers the plausible region defined by the prior. All remaining internal parameters, including the population size \(\lambda\), the number of elite parents \(\mu_e\), the learning rates \(c_1\), \(c_\mu\), \(c_c\), \(c_\sigma\), and the damping factor \(d_\sigma\), are automatically computed using established empirical formulas based solely on the problem dimension \(D' = m \times (P+1)\). This automatic parameter configuration renders the solver entirely self-contained and eliminates the need for manual tuning across different tasks.

\begin{algorithm}[htbp]
\caption{Posterior Computation via CMA-ES}
\label{alg:cmaes_posterior}
\begin{algorithmic}[1]
\REQUIRE Prior mean \(\boldsymbol{\mu}_B\), prior covariance \(\boldsymbol{\Sigma}_B\), task energy \(J_{\mathrm{task}}(\cdot)\), inverse temperature \(\tau\), Bézier mapping \(\mathbf{M}\), termination tolerance \(\varepsilon_{\mathrm{tol}}\).
\ENSURE Optimal control sequence \(\mathbf{U}^{*}\), optimal control points \(\mathbf{B}^{*}\).

\STATE Initialize CMA-ES search distribution:
\STATE \quad \(\mathbf{m}^{(0)} \gets \boldsymbol{\mu}_B\), \(\mathbf{C}^{(0)} \gets \boldsymbol{\Sigma}_B\).
\STATE Set \(\sigma^{(0)}\) and all internal CMA-ES parameters based on dimension \(D'\).
\STATE \(g \gets 0\).
\WHILE{not terminated}
    \STATE Sample \(\lambda\) candidates: \(\mathbf{B}_i^{(g)} \sim \mathcal{N}(\mathbf{m}^{(g)}, (\sigma^{(g)})^2 \mathbf{C}^{(g)})\), for \(i=1,\dots,\lambda\).
    \FOR{\(i = 1\) \TO \(\lambda\)}
        \STATE Reconstruct control sequence: \(\mathbf{U}_i^{(g)} \gets \mathbf{M}\mathbf{B}_i^{(g)}\).
        \STATE Simulate trajectory \(\mathbf{X}(\mathbf{U}_i^{(g)})\) via forward dynamics.
        \STATE Compute fitness: \(f_i \gets f(\mathbf{B}_i^{(g)})\) using \eqref{eq:fitness_posterior}.
    \ENDFOR
    \STATE Sort candidates by fitness \(f_i\) (ascending).
    \STATE Update mean \(\mathbf{m}^{(g+1)}\) via weighted recombination of top \(\mu_e\) elites.
    \STATE Update evolution paths \(\mathbf{p}_c\), \(\mathbf{p}_\sigma\).
    \STATE Update covariance matrix \(\mathbf{C}^{(g+1)}\) and step-size \(\sigma^{(g+1)}\).
    \STATE \(g \gets g+1\).
\ENDWHILE
\STATE \(\mathbf{B}^{*} \gets \mathbf{m}^{*}\), \(\mathbf{U}^{*} \gets \mathbf{M}\mathbf{B}^{*}\).
\RETURN \(\mathbf{B}^{*}\), \(\mathbf{U}^{*}\).
\end{algorithmic}
\end{algorithm}

At each generation \(g\), the CMA-ES algorithm performs the following iterative procedures to approximate the posterior mode. First, a population of \(\lambda\) candidate control-point vectors \(\{\mathbf{B}_i^{(g)}\}_{i=1}^{\lambda}\) is sampled from the current search distribution \(\mathcal{N}(\mathbf{m}^{(g)}, (\sigma^{(g)})^2 \mathbf{C}^{(g)})\). Each candidate is then mapped back to the original control sequence via \(\mathbf{U}_i = \mathbf{M}\mathbf{B}_i^{(g)}\), and the corresponding state trajectory is obtained through forward simulation of the discrete-time dynamics. The fitness function \eqref{eq:fitness_posterior} is evaluated for each candidate based on the simulated trajectory. Subsequently, the candidates are ranked according to their fitness values, and the top \(\mu_e\) elite individuals are selected for weighted recombination. The search distribution parameters—namely the mean vector \(\mathbf{m}\), the covariance matrix \(\mathbf{C}\), and the global step-size \(\sigma\)—are then updated following the standard CMA-ES adaptation rules detailed in \cref{sec:cmaes_intro}. This update mechanism leverages the cumulative evolution paths to guide the covariance adaptation, effectively rotating and scaling the search distribution to align with the local landscape of the fitness function.

The optimization process terminates when either the maximum number of generations is reached or the relative improvement of the best fitness value falls below a predefined tolerance threshold \(\varepsilon_{\mathrm{tol}}\). Upon convergence, the final mean vector \(\mathbf{m}^{(*)}\) of the search distribution serves as the optimal control-point solution:
\begin{equation*}
\mathbf{B}^{*} = \mathbf{m}^{(*)}.
\end{equation*}
The corresponding optimal control sequence is then reconstructed through the B\'ezier mapping:
\begin{equation*}
\mathbf{U}^{*} = \mathbf{M}\mathbf{B}^{*}.
\end{equation*}
This sequence is subsequently applied to the system in a receding-horizon fashion or executed as the complete planned trajectory, depending on the specific control architecture.

The entire posterior computation process is summarized in Algorithm~\ref{alg:cmaes_posterior}, which illustrates how the diffusion prior, the task likelihood, and the CMA-ES solver are seamlessly integrated within the proposed Bayesian inference framework.

With the posterior mode obtained, the planned control sequence is ready for execution.

\section{Theoretical Analysis}

This section analyzes the theoretical properties of TDP-SafeMAP in three parts: V-A derives a local MAP perturbation bound and a sufficient condition for discrete-time CBF safety preservation; V-B places the truncated prior and the CMA-ES posterior search within a unified statistical view, interpreting the truncation through Bayesian risk and forward-KL projection, and connecting the solver's rank-based updates to natural-gradient ascent on the Gaussian manifold; V-C establishes the exact equivalence between the original and prior-whitened MAP problems. All proofs are deferred to the appendix.

\subsection{Approximation Error Bound and Safety Preservation}

This subsection analyzes how diffusion-prior approximation and finite-accuracy posterior optimization affect the resulting trajectory, and establishes a local perturbation bound for the MAP solution and a sufficient condition for preserving discrete-time CBF safety.

Let \(p^{\circ}(\mathbf{B})\) denote the ideal within-mode diffusion prior in the B\'ezier control-point space. To distinguish it from the Gaussian prior used by TDP-SafeMAP, the latter is denoted by \(\widehat{p}(\mathbf{B})\) in this section. Define
\begin{align*}
F^{\circ}(\mathbf{B})
&=
\tau J_{\mathrm{task}}(\mathbf{B})
-
\log p^{\circ}(\mathbf{B}),\\
\widehat{F}(\mathbf{B})
&=
\tau J_{\mathrm{task}}(\mathbf{B})
-
\log \widehat{p}(\mathbf{B}),
\end{align*}
and the prior-induced objective perturbation
\begin{equation*}
\Delta_p(\mathbf{B})
=
-\log\widehat{p}(\mathbf{B})
+
\log p^{\circ}(\mathbf{B}).
\end{equation*}
Since additive constants do not affect the MAP solution, the relevant prior error over a local mode region
\(\mathcal{R}_{\mathrm{mode}}\) is
\begin{equation}
\omega_p
=
\sup_{\mathbf{B}\in\mathcal{R}_{\mathrm{mode}}}
\Delta_p(\mathbf{B})
-
\inf_{\mathbf{B}\in\mathcal{R}_{\mathrm{mode}}}
\Delta_p(\mathbf{B}).
\end{equation}

\begin{assumption}[Local stability and regularity]
Let
\begin{equation}
\mathbf{B}^{\circ}
\in
\arg\min_{\mathbf{B}\in\mathcal{R}_{\mathrm{mode}}}
F^{\circ}(\mathbf{B})
\end{equation}
be the ideal local MAP solution. The reference and implemented priors are strictly positive on the compact region
\(\mathcal{R}_{\mathrm{mode}}\), and the following conditions hold:
\begin{align}
F^{\circ}(\mathbf{B})
&\geq
F^{\circ}(\mathbf{B}^{\circ})
+
\frac{\vartheta_{\mathrm{qg}}}{2}
\left\|
\mathbf{B}-\mathbf{B}^{\circ}
\right\|^2,\\
\widehat{F}(\widehat{\mathbf{B}})
&\leq
\inf_{\mathbf{B}\in\mathcal{R}_{\mathrm{mode}}}
\widehat{F}(\mathbf{B})
+
\zeta_{\mathrm{opt}},
\end{align}
for all
\(\mathbf{B}\in\mathcal{R}_{\mathrm{mode}}\), where
\(\vartheta_{\mathrm{qg}}>0\) and
\(\zeta_{\mathrm{opt}}\geq0\). Moreover, on a compact neighborhood containing the relevant rollouts,
\begin{align*}
\left\|
f(\mathbf{x},\mathbf{u})
-
f(\mathbf{x}',\mathbf{u}')
\right\|
&\leq
L_x\left\|\mathbf{x}-\mathbf{x}'\right\|
+
L_u\left\|\mathbf{u}-\mathbf{u}'\right\|,\\
\left|
h(\mathbf{x})-h(\mathbf{x}')
\right|
&\leq
L_h\left\|\mathbf{x}-\mathbf{x}'\right\|.
\end{align*}
\end{assumption}

The quadratic-growth condition is local and does not require differentiability of the task energy, the black-box dynamics, or the CBF.

\begin{theorem}[Error bound for the approximate MAP solution]
Under the preceding assumption, the CMA-ES solution satisfies
\begin{equation}
\left\|
\widehat{\mathbf{B}}
-
\mathbf{B}^{\circ}
\right\|
\leq
\sqrt{
\frac{
2\left(\omega_p+\zeta_{\mathrm{opt}}\right)
}{
\vartheta_{\mathrm{qg}}
}
}.
\end{equation}
\end{theorem}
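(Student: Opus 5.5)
The approach is to sandwich the suboptimality of $\widehat{\mathbf{B}}$ under the ideal objective $F^{\circ}$ between the quadratic-growth lower bound and an upper bound built from the prior error and the optimization tolerance. Throughout, $\widehat{\mathbf{B}}$ is taken to lie in $\mathcal{R}_{\mathrm{mode}}$, as implicitly required for the quadratic-growth inequality to apply to it. The key identity is
\[
F^{\circ}(\mathbf{B})=\widehat{F}(\mathbf{B})-\Delta_p(\mathbf{B}),
\]
which holds for every $\mathbf{B}\in\mathcal{R}_{\mathrm{mode}}$. It follows directly from the definitions, since the task-energy terms cancel and both priors are strictly positive there, so all logarithms are finite.

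First, apply the quadratic-growth condition at $\mathbf{B}=\widehat{\mathbf{B}}$:
\[
\frac{\vartheta_{\mathrm{qg}}}{2}\|\widehat{\mathbf{B}}-\mathbf{B}^{\circ}\|^2\le F^{\circ}(\widehat{\mathbf{B}})-F^{\circ}(\mathbf{B}^{\circ}).
\]
Next, bound the right-hand side. Rewrite it as
\[
F^{\circ}(\widehat{\mathbf{B}})-F^{\circ}(\mathbf{B}^{\circ})
=\bigl[\widehat{F}(\widehat{\mathbf{B}})-\widehat{F}(\mathbf{B}^{\circ})\bigr]
+\bigl[\Delta_p(\mathbf{B}^{\circ})-\Delta_p(\widehat{\mathbf{B}})\bigr].
\]
For the first bracket, the $\zeta_{\mathrm{opt}}$-optimality assumption gives $\widehat{F}(\widehat{\mathbf{B}})\le\inf_{\mathcal{R}_{\mathrm{mode}}}\widehat{F}+\zeta_{\mathrm{opt}}\le\widehat{F}(\mathbf{B}^{\circ})+\zeta_{\mathrm{opt}}$, using $\mathbf{B}^{\circ}\in\mathcal{R}_{\mathrm{mode}}$. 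For the second bracket, both points lie in $\mathcal{R}_{\mathrm{mode}}$, so the difference is at most $\sup\Delta_p-\inf\Delta_p=\omega_p$. Because only this oscillation appears, any additive normalization mismatch between $\widehat{p}$ and $p^{\circ}$ drops out, which is exactly why $\omega_p$ was defined this way.

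Combining the two steps gives $\frac{\vartheta_{\mathrm{qg}}}{2}\|\widehat{\mathbf{B}}-\mathbf{B}^{\circ}\|^2\le\omega_p+\zeta_{\mathrm{opt}}$. Dividing by $\vartheta_{\mathrm{qg}}/2$ and taking square roots yields the stated bound.

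The only delicate point is ensuring that every quantity is finite and well defined. Compactness and strict positivity of both priors on $\mathcal{R}_{\mathrm{mode}}$ make $\Delta_p$ bounded there, provided the priors are continuous, or at least that $\log$ of each is bounded, so that $\omega_p<\infty$. The infimum of $\widehat{F}$ is then finite, assuming $J_{\mathrm{task}}$ is bounded below (it is nonnegative by construction). The Lipschitz conditions on $f$ and $h$ are not needed for this theorem; they serve the subsequent safety-preservation result.
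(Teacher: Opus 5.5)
Your proof is correct and follows essentially the same argument as the paper. Both compare the $\zeta_{\mathrm{opt}}$-suboptimality of $\widehat{\mathbf{B}}$ against $\mathbf{B}^{\circ}$, use $\widehat{F}=F^{\circ}+\Delta_p$ to bound the difference by the oscillation $\omega_p$, and combine this with the quadratic-growth lower bound; your explicit requirement that $\widehat{\mathbf{B}}\in\mathcal{R}_{\mathrm{mode}}$ states an assumption the paper uses only implicitly.
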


\begin{IEEEproof}
See Appendix~\ref{app:proof_map_error}.
\end{IEEEproof}

To make this result explicit, define the ideal and implemented Gaussian
priors and their parameter discrepancies by
\begin{equation*}
\begin{aligned}
p_G^{\circ}
&=
\mathcal{N}
\left(
\boldsymbol{\mu}_B^{\circ},
\boldsymbol{\Sigma}_B^{\circ}
\right),
&
\widehat{p}
&=
\mathcal{N}
\left(
\boldsymbol{\mu}_B,
\boldsymbol{\Sigma}_B
\right),
\\
\upsilon_\mu
&=
\left\|
\boldsymbol{\mu}_B-\boldsymbol{\mu}_B^{\circ}
\right\|,
&
\upsilon_\Sigma
&=
\left\|
\boldsymbol{\Sigma}_B-\boldsymbol{\Sigma}_B^{\circ}
\right\|.
\end{aligned}
\end{equation*}
Furthermore, let
\begin{equation*}
s_{\min}^{\circ}
=
\lambda_{\min}
\left(
\boldsymbol{\Sigma}_B^{\circ}
\right),
\quad
\mathcal{R}_{\mathrm{mode}}
\subseteq
\left\{
\mathbf{B}:
\left\|
\mathbf{B}-\boldsymbol{\mu}_B^{\circ}
\right\|
\leq
R_{\mathrm{mode}}
\right\}.
\end{equation*}
If \(p^{\circ}\) is not exactly Gaussian, define its Gaussianization-error oscillation as
\begin{align}
\omega_G
=&\,
\sup_{\mathbf{B}\in\mathcal{R}_{\mathrm{mode}}}
\left[
-\log p_G^{\circ}(\mathbf{B})
+
\log p^{\circ}(\mathbf{B})
\right]
\nonumber\\
&-
\inf_{\mathbf{B}\in\mathcal{R}_{\mathrm{mode}}}
\left[
-\log p_G^{\circ}(\mathbf{B})
+
\log p^{\circ}(\mathbf{B})
\right].
\end{align}
When \(p^{\circ}=p_G^{\circ}\), one has \(\omega_G=0\).

For the B\'ezier mapping
\(\mathbf{u}_k=\mathbf{M}_k\mathbf{B}\), define
\begin{equation}
G_0=0,
\qquad
G_k
=
L_u
\sum_{j=0}^{k-1}
L_x^{\,k-1-j}
\left\|\mathbf{M}_j\right\|,
\end{equation}
and
\begin{equation}
K_{\mathrm{CBF}}
=
\max_{0\leq k<H}
L_h
\left[
G_{k+1}
+
(1-\gamma)G_k
\right].
\end{equation}

The preceding prior-related quantities provide an explicit bound on
\(\omega_p\), while \(K_{\mathrm{CBF}}\) quantifies the propagation of
control-point errors to the CBF residuals; together with the preceding
MAP error bound, they yield the following explicit sufficient condition
for safety preservation.

\begin{theorem}[Explicit condition for CBF safety preservation]
Suppose that
\begin{equation}
h(\mathbf{x}_0)\geq0,
\qquad
\varrho_{\mathrm{CBF}}
=
\min_{0\leq k<H}
r_k(\mathbf{B}^{\circ})
>0,
\end{equation}
where
\begin{equation}
r_k(\mathbf{B})
=
h\bigl(\mathbf{x}_{k+1}(\mathbf{B})\bigr)
-
(1-\gamma)
h\bigl(\mathbf{x}_k(\mathbf{B})\bigr).
\end{equation}
If
\(\upsilon_\Sigma<s_{\min}^{\circ}\),
\(K_{\mathrm{CBF}}>0\), and
\begin{equation}
\begin{aligned}
\omega_G
&+
R_{\mathrm{mode}}^2
\frac{
\upsilon_\Sigma
}{
s_{\min}^{\circ}
\left(
s_{\min}^{\circ}-\upsilon_\Sigma
\right)
}
+
2R_{\mathrm{mode}}
\frac{
\upsilon_\mu
}{
s_{\min}^{\circ}-\upsilon_\Sigma
}
+
\zeta_{\mathrm{opt}}
\\
&\leq
\frac{
\vartheta_{\mathrm{qg}}
\varrho_{\mathrm{CBF}}^2
}{
2K_{\mathrm{CBF}}^2
},
\end{aligned}
\end{equation}
then the computed trajectory satisfies
\begin{equation}
r_k(\widehat{\mathbf{B}})\geq0,
\qquad
h(\widehat{\mathbf{x}}_k)\geq0,
\end{equation}
for all \(k\) in the planning horizon.
\end{theorem}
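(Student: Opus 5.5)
The proof combines Theorem~1 with two explicit estimates. The first bounds \(\omega_p\) by the left-hand side of the hypothesis minus \(\zeta_{\mathrm{opt}}\). The second is a Lipschitz propagation bound showing that each CBF residual \(r_k\) is \(K_{\mathrm{CBF}}\)-Lipschitz in \(\mathbf{B}\). Chaining them gives \(\|\widehat{\mathbf{B}}-\mathbf{B}^{\circ}\|\le \varrho_{\mathrm{CBF}}/K_{\mathrm{CBF}}\), hence \(r_k(\widehat{\mathbf{B}})\ge \varrho_{\mathrm{CBF}}-K_{\mathrm{CBF}}\|\widehat{\mathbf{B}}-\mathbf{B}^{\circ}\|\ge0\). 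Safety of the states then follows by induction: \(h(\widehat{\mathbf{x}}_{k+1})\ge(1-\gamma)h(\widehat{\mathbf{x}}_k)\ge0\), starting from \(h(\mathbf{x}_0)\ge0\). Note that \(\mathbf{x}_0\) is fixed, so \(\widehat{\mathbf{x}}_0=\mathbf{x}_0\).

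\textbf{Step 1 (prior oscillation).} Split \(\Delta_p=[-\log\widehat p+\log p_G^{\circ}]+[-\log p_G^{\circ}+\log p^{\circ}]\). The oscillation of a sum is at most the sum of the oscillations. The second bracket contributes \(\omega_G\).

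For the Gaussian bracket, normalizing constants cancel in the oscillation. What remains is \(\tfrac12[q_{\widehat{}}(\mathbf{B})-q_\circ(\mathbf{B})]\), where \(q\) denotes the respective Mahalanobis forms. Write \(\mathbf{d}=\mathbf{B}-\boldsymbol{\mu}_B^{\circ}\) with \(\|\mathbf{d}\|\le R_{\mathrm{mode}}\), and \(\mathbf{e}=\boldsymbol{\mu}_B-\boldsymbol{\mu}_B^{\circ}\). Then
\[
q_{\widehat{}}-q_\circ=\mathbf{d}^\top(\boldsymbol{\Sigma}_B^{-1}-\boldsymbol{\Sigma}_B^{\circ-1})\mathbf{d}-2\mathbf{e}^\top\boldsymbol{\Sigma}_B^{-1}\mathbf{d}+\mathbf{e}^\top\boldsymbol{\Sigma}_B^{-1}\mathbf{e}.
\]
The last term is constant, so it drops from the oscillation. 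By Weyl's inequality, \(\lambda_{\min}(\boldsymbol{\Sigma}_B)\ge s_{\min}^{\circ}-\upsilon_\Sigma>0\). Hence \(\|\boldsymbol{\Sigma}_B^{-1}\|\le 1/(s_{\min}^{\circ}-\upsilon_\Sigma)\). The resolvent identity \(\boldsymbol{\Sigma}_B^{-1}-\boldsymbol{\Sigma}_B^{\circ-1}=\boldsymbol{\Sigma}_B^{-1}(\boldsymbol{\Sigma}_B^{\circ}-\boldsymbol{\Sigma}_B)\boldsymbol{\Sigma}_B^{\circ-1}\) then gives norm at most \(\upsilon_\Sigma/[s_{\min}^{\circ}(s_{\min}^{\circ}-\upsilon_\Sigma)]\).

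Each remaining term ranges over an interval symmetric about zero, of half-width at most the bound times \(R_{\mathrm{mode}}^2\) (respectively \(2R_{\mathrm{mode}}\upsilon_\mu/(s_{\min}^{\circ}-\upsilon_\Sigma)\)). After the factor \(\tfrac12\), the oscillation is at most the stated quadratic and linear terms. I would check the constants carefully here. If the factor bookkeeping gives something slightly different, the stated bound is still an upper bound, since it is the cruder choice.

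\textbf{Step 2 (sensitivity of rollouts).} Fix two control-point vectors \(\mathbf{B},\mathbf{B}'\) and let \(e_k=\|\mathbf{x}_k(\mathbf{B})-\mathbf{x}_k(\mathbf{B}')\|\), so \(e_0=0\). The Lipschitz assumption gives \(e_{k+1}\le L_xe_k+L_u\|\mathbf{M}_k\|\,\|\mathbf{B}-\mathbf{B}'\|\). Unrolling yields \(e_k\le G_k\|\mathbf{B}-\mathbf{B}'\|\). Then \(|r_k(\mathbf{B})-r_k(\mathbf{B}')|\le L_h(e_{k+1}+(1-\gamma)e_k)\le K_{\mathrm{CBF}}\|\mathbf{B}-\mathbf{B}'\|\).

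\textbf{Step 3 (combine).} The hypothesis together with Step~1 gives \(\omega_p+\zeta_{\mathrm{opt}}\le \vartheta_{\mathrm{qg}}\varrho_{\mathrm{CBF}}^2/(2K_{\mathrm{CBF}}^2)\). Substituting into Theorem~1 yields \(\|\widehat{\mathbf{B}}-\mathbf{B}^{\circ}\|\le\varrho_{\mathrm{CBF}}/K_{\mathrm{CBF}}\). Step~2 then gives \(r_k(\widehat{\mathbf{B}})\ge0\), and the induction described above gives \(h(\widehat{\mathbf{x}}_k)\ge0\).

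\textbf{Main obstacle.} The delicate point is that the Lipschitz bounds only hold on a compact neighborhood containing the relevant rollouts. I would take this to include the rollouts of \(\widehat{\mathbf{B}}\) and \(\mathbf{B}^{\circ}\), with \(\widehat{\mathbf{B}}\in\mathcal{R}_{\mathrm{mode}}\) as implicit in Theorem~1, and state this explicitly. The other point needing care is the exact constant accounting in the Gaussian oscillation estimate.
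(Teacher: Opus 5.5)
Your proposal is correct and follows the paper's proof essentially step for step. It uses the same splitting of $\omega_p$ into a Gaussian-parameter part and the Gaussianization oscillation $\omega_G$, where the first part is bounded via Weyl's inequality and the inverse-perturbation identity and your constants come out exactly to the stated terms. It also uses the same Lipschitz rollout recursion giving $K_{\mathrm{CBF}}$ and the same combination with Theorem~1, followed by induction on $h$. The requirement you flag, that $\widehat{\mathbf{B}}\in\mathcal{R}_{\mathrm{mode}}$ and that the Lipschitz neighborhood covers both rollouts, is used tacitly in the paper's argument as well, so stating it explicitly only makes your version more precise.
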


\begin{IEEEproof}
See Appendix~\ref{app:proof_cbf_safety}.
\end{IEEEproof}

The condition separates the effects of Gaussianization error, Gaussian parameter error, CMA-ES suboptimality, local MAP curvature, and rollout sensitivity. It is a robustness result: a strict nominal CBF margin is preserved when the combined approximation and optimization errors are sufficiently small.

\subsection{A Unified Statistical Interpretation of TDP-SafeMAP}
\label{sec:cma_compatibility}

The overall pipeline of TDP-SafeMAP can be uniformly
 interpreted from a statistical perspective. 
  We provide a derivation and analysis of this pipeline below.

Let
\begin{equation*}
s_t
=
\rho_t-\rho_{t_{\max}}
\geq 0.
\end{equation*}
Under the Gaussian second-order closure in
Eq.~\eqref{eq:trunc_cov_recursion}, the covariance recursion satisfies
\begin{equation*}
\boldsymbol{\Sigma}_{0|t}
=
\left(
\boldsymbol{\Sigma}_{\mathrm{data}}^{-1}
+s_t\mathbf{I}
\right)^{-1}.
\end{equation*}
Therefore, its variance in the \(i\)-th data-covariance eigendirection is
\begin{equation}
v_i(t)
=
\frac{\lambda_i^{\mathrm{data}}}
{1+s_t\lambda_i^{\mathrm{data}}},
\label{eq:directional_bayes_risk}
\end{equation}
which equals the Bayes risk of the corresponding surrogate Gaussian
estimation problem. In particular,
\begin{equation}
\kappa(t)
=
1+c_{\Sigma}v_{\max}(t),
\qquad
c_{\Sigma}
=
\frac{1}{\lambda_{\min}^{\mathrm{data}}}
-
\frac{1}{\lambda_{\max}^{\mathrm{data}}}.
\label{eq:kappa_bayes_risk_identity}
\end{equation}
Hence,
\begin{equation}
\arg\max_t\Delta^2\kappa(t)
=
\arg\max_t\Delta^2v_{\max}(t).
\label{eq:bayes_risk_truncation}
\end{equation}
The second-difference criterion thus detects the largest change in the
per-step reduction of the principal-direction Bayes risk. When the
between-mode covariance is concentrated mainly in this direction, the same
quantity provides a second-order indicator of the transition from unresolved
mode uncertainty to within-mode refinement. This interpretation does not
require the global diffusion distribution to be Gaussian or a finite Gaussian
mixture.

At \(t_{\mathrm{stop}}\), let
\(\pi_{\mathrm{stop}}^U\) denote the conditional clean-control distribution
associated with the realized reverse-diffusion branch. Its forward-KL
projection onto the Gaussian family is
\begin{equation}
\widehat p_U
=
\arg\min_{q\in\mathcal{G}_D}
D_{\mathrm{KL}}
\left(
\pi_{\mathrm{stop}}^U\|q
\right)
=
\mathcal{N}
\left(
\boldsymbol{\mu}_U,
\boldsymbol{\Sigma}_U
\right),
\label{eq:prior_gaussian_projection}
\end{equation}
where the implemented moments are obtained from Tweedie's formula and the
covariance recursion. Since the B\'ezier transformation
\(\mathbf{B}=\mathbf{A}\mathbf{U}\) is linear, the resulting prior is
\begin{equation}
p_{\mathrm{pri}}(\mathbf{B})
=
\widehat p(\mathbf{B})
=
\mathcal{N}
\left(
\boldsymbol{\mu}_B,
\boldsymbol{\Sigma}_B
\right),
\label{eq:projected_prior_theory}
\end{equation}
where
\(\boldsymbol{\mu}_B=\mathbf{A}\boldsymbol{\mu}_U\) and
\(\boldsymbol{\Sigma}_B
=\mathbf{A}\boldsymbol{\Sigma}_U\mathbf{A}^{\top}\).

The task energy then produces the Gibbs posterior
\begin{equation}
p_\tau(\mathbf{B}\mid\mathcal{S})
=
\frac{1}{Z_\tau}
\exp[-\tau J_{\mathrm{task}}(\mathbf{B})]
p_{\mathrm{pri}}(\mathbf{B}),
\label{eq:gibbs_posterior_theory}
\end{equation}
which is equivalently the minimum-information update
\begin{equation}
p_\tau
=
\arg\min_{\nu}
\left\{
\mathbb{E}_{\nu}[J_{\mathrm{task}}]
+
\frac{1}{\tau}
D_{\mathrm{KL}}(\nu\|p_{\mathrm{pri}})
\right\}
\label{eq:gibbs_variational}
\end{equation}
over distributions absolutely continuous with respect to
\(p_{\mathrm{pri}}\)~\cite{11039694}. For the Gaussian prior,
\begin{equation*}
-\log p_\tau(\mathbf{B}\mid\mathcal{S})
=
f(\mathbf{B})+c_\tau,
\end{equation*}
where \(c_\tau\) is independent of \(\mathbf{B}\). Consequently,
\begin{equation}
f(\mathbf{B}_i)<f(\mathbf{B}_j)
\quad\Longleftrightarrow\quad
p_\tau(\mathbf{B}_i\mid\mathcal{S})
>
p_\tau(\mathbf{B}_j\mid\mathcal{S}).
\label{eq:fitness_posterior_order}
\end{equation}
Thus, the unknown normalizing constant \(Z_\tau\) is irrelevant to CMA-ES:
ranking candidates by fitness is exactly equivalent to ranking them by
posterior density.

The search distribution used by Algorithm~\ref{alg:cmaes_posterior} is
\begin{equation}
q_g^{\mathrm{CMA}}(\mathbf{B})
=
\mathcal{N}
\left(
\mathbf{B}\mid
\mathbf{m}^{(g)},
(\sigma^{(g)})^2\mathbf{C}^{(g)}
\right),
\end{equation}
with initialization
\begin{equation}
q_0^{\mathrm{CMA}}(\mathbf{B})
=
\mathcal{N}
\left(
\mathbf{B}\mid
\boldsymbol{\mu}_B,
(\sigma^{(0)})^2\boldsymbol{\Sigma}_B
\right).
\label{eq:actual_cma_initial_law}
\end{equation}
Hence, the search is initialized at the prior mean and inherits the prior
eigendirections and condition number, while \(\sigma^{(0)}\) controls only
the overall search radius. A formal statement is given in Section V-C, 
with its proof provided in Appendix C.

For the complete CMA-ES update used in this work, the mean and rank-\(\mu\)
terms have the established Gaussian IGO interpretation as stochastic natural
gradient updates~\cite{akimoto2012theoretical,ollivier2017information}. The
evolution-path rank-one term has an established MAP-IGO interpretation based
on a generation-dependent normal-inverse-Wishart prior over the Gaussian
search parameters; the standard CMA-ES mean update is recovered in the
non-informative mean-prior limit while the rank-one covariance term is
retained~\cite{hamano2024natural}. Cumulative step-size adaptation further
compares the length of the accumulated whitened search path with its
isotropic Gaussian reference value~\cite{hansen2001completely,HANSEN20115755}.

The distributions involved nevertheless have distinct meanings. The
diffusion prior \(p_{\mathrm{pri}}\) describes uncertainty about the planning
variable, \(p_\tau\) is the generalized Bayesian target, and
\(q_g^{\mathrm{CMA}}\) is an adaptive search distribution. In general,
\begin{equation}
q_g^{\mathrm{CMA}}(\mathbf{B})
\neq
p_\tau(\mathbf{B}\mid\mathcal{S}).
\end{equation}
Accordingly, CMA-ES is used to locate the mode of the Gibbs posterior, rather
than to approximate the complete posterior distribution. Detailed
derivations and the precise distinction between the diffusion prior and the
algorithmic prior used in the rank-one interpretation are provided in the
appendix.

\subsection{Exact MAP Equivalence under Prior Whitening}
\label{sec:prior_whitening}

Assume that the propagated diffusion-prior covariance is positive definite and let
\begin{equation*}
\boldsymbol{\Sigma}_B
=
\mathbf{L}_B\mathbf{L}_B^\top,
\qquad
\mathbf{b}
=
\mathbf{L}_B^{-1}
\left(
\mathbf{B}-\boldsymbol{\mu}_B
\right).
\end{equation*}
The inverse map is
\(\mathcal{T}_B(\mathbf{b})
=\boldsymbol{\mu}_B+\mathbf{L}_B\mathbf{b}\).
Define the whitened objective
\begin{equation}
f_{\mathrm{w}}(\mathbf{b})
=
\tau
J_{\mathrm{task}}
\left(
\mathcal{T}_B(\mathbf{b})
\right)
+
\frac{1}{2}\|\mathbf{b}\|^2.
\label{eq:whitened_map_objective}
\end{equation}

\begin{theorem}[Exact equivalence of prior-whitened MAP]
\label{thm:whitened_map_equivalence}
The MAP problem in Eq.~\eqref{eq:fitness_posterior} and its whitened counterpart satisfy
\begin{equation}
\min_{\mathbf{B}}f(\mathbf{B})
=
\min_{\mathbf{b}}f_{\mathrm{w}}(\mathbf{b}),
\qquad
\mathbf{B}^{*}
=
\boldsymbol{\mu}_B+\mathbf{L}_B\mathbf{b}^{*}.
\label{eq:whitened_map_equivalence}
\end{equation}
Because \(\mathcal{T}_B\) is an invertible affine map with a constant Jacobian determinant, the corresponding posterior modes are also in one-to-one correspondence.
\end{theorem}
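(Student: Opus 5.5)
The proof is a change of variables through the affine bijection $\mathcal{T}_B$. First, since $\boldsymbol{\Sigma}_B$ is positive definite, its Cholesky factor $\mathbf{L}_B$ exists and is invertible. Hence $\mathcal{T}_B(\mathbf{b})=\boldsymbol{\mu}_B+\mathbf{L}_B\mathbf{b}$ is a bijection of $\mathbb{R}^{D'}$ onto itself, with inverse $\mathbf{B}\mapsto\mathbf{L}_B^{-1}(\mathbf{B}-\boldsymbol{\mu}_B)$.

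Next I would show the pointwise identity $f(\mathcal{T}_B(\mathbf{b}))=f_{\mathrm{w}}(\mathbf{b})$ for every $\mathbf{b}$. The task term matches by definition of $f_{\mathrm{w}}$. For the quadratic term, $\boldsymbol{\Sigma}_B^{-1}=\mathbf{L}_B^{-\top}\mathbf{L}_B^{-1}$, so
\begin{equation*}
(\mathbf{B}-\boldsymbol{\mu}_B)^\top\boldsymbol{\Sigma}_B^{-1}(\mathbf{B}-\boldsymbol{\mu}_B)=\|\mathbf{L}_B^{-1}(\mathbf{B}-\boldsymbol{\mu}_B)\|^2=\|\mathbf{b}\|^2 .
\end{equation*}
Combining this identity with the bijectivity gives the equality of the minima: the sets $\{f(\mathbf{B}):\mathbf{B}\in\mathbb{R}^{D'}\}$ and $\{f_{\mathrm{w}}(\mathbf{b}):\mathbf{b}\in\mathbb{R}^{D'}\}$ coincide, so their infima coincide, whether or not they are attained.

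For the minimizers, I would show directly that $\mathbf{b}^*\in\arg\min f_{\mathrm{w}}$ if and only if $\mathcal{T}_B(\mathbf{b}^*)\in\arg\min f$. Suppose $\mathbf{b}^*$ minimizes $f_{\mathrm{w}}$ and some $\mathbf{B}$ satisfied $f(\mathbf{B})<f(\mathcal{T}_B(\mathbf{b}^*))$. Then $\mathcal{T}_B^{-1}(\mathbf{B})$ would give $f_{\mathrm{w}}$ a smaller value than $f_{\mathrm{w}}(\mathbf{b}^*)$, a contradiction. The converse follows by the same argument applied to $\mathcal{T}_B^{-1}$. This yields $\mathbf{B}^*=\boldsymbol{\mu}_B+\mathbf{L}_B\mathbf{b}^*$, and the argmin sets are in bijection, so set-valued argmins are handled too.

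For the posterior-mode statement, push the Gibbs posterior forward under $\mathcal{T}_B^{-1}$. By the change-of-variables formula, the density in $\mathbf{b}$ is $p_\tau(\mathcal{T}_B(\mathbf{b})\mid\mathcal{S})\,|\det\mathbf{L}_B|$. Its negative logarithm equals $f_{\mathrm{w}}(\mathbf{b})$ plus a constant, because the Jacobian determinant is constant and $c_\tau$ does not depend on $\mathbf{B}$. The modes therefore correspond one-to-one. The main obstacle is only bookkeeping: the argument needs positive definiteness so that $\mathbf{L}_B^{-1}$ exists, and the constant Jacobian is essential, since a nonlinear reparametrization would add a $\log|\det|$ term that moves the modes.
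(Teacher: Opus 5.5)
Your proposal is correct and follows the paper's proof essentially step for step: the pointwise identity $f(\mathcal{T}_B(\mathbf{b}))=f_{\mathrm{w}}(\mathbf{b})$ via $\boldsymbol{\Sigma}_B^{-1}=\mathbf{L}_B^{-\top}\mathbf{L}_B^{-1}$, bijectivity of $\mathcal{T}_B$ from nonsingularity of $\mathbf{L}_B$, and the change-of-variables density $p_\tau(\mathcal{T}_B(\mathbf{b})\mid\mathcal{S})\,|\det\mathbf{L}_B|$, whose constant Jacobian term leaves the modes unchanged. Your explicit argmin-correspondence argument and the remark about unattained infima are slightly more careful than the paper's one-line appeal to bijectivity, but they do not change the approach.
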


\begin{IEEEproof}
See Appendix~\ref{app:proof_whitened_map}.
\end{IEEEproof}

Under the prior-matching initialization used by TDP-SafeMAP,
\begin{equation}
q_0^{\mathrm{CMA}}(\mathbf{B})
=
\mathcal{N}
\left(
\mathbf{B}\mid
\boldsymbol{\mu}_B,\boldsymbol{\Sigma}_B
\right),
\end{equation}
the whitened initial search law is exactly
\begin{equation}
q_{0,\mathrm{w}}^{\mathrm{CMA}}(\mathbf{b})
=
\mathcal{N}(\mathbf{b}\mid\mathbf{0},\mathbf{I}).
\label{eq:whitened_initial_search}
\end{equation}
Equivalently,
\((\sigma^{(0)})^2\mathbf{C}^{(0)}
=\boldsymbol{\Sigma}_B\)
in the original coordinates and
\((\sigma^{(0)})^2\mathbf{C}_{\mathrm{w}}^{(0)}
=\mathbf{I}\)
in the whitened coordinates. Thus, \(\sigma^{(0)}\) is only an internal factorization parameter when the total initial covariance is fixed to the diffusion prior.

For any two Gaussian search laws, the invertible affine map \(\mathcal{T}_B\) preserves both their KL divergence and the expected rank utility induced by the transformed objective. Consequently, the Gaussian natural-gradient core of CMA-ES is equivariant under \(\mathcal{T}_B\):
\begin{equation}
\text{prior-matched search in \(\mathbf{B}\)}
\quad\equiv\quad
\text{isotropic search in \(\mathbf{b}\)}.
\label{eq:whitened_search_equivalence}
\end{equation}
Hence, \(\boldsymbol{\Sigma}_B\) is not merely a warm start; it defines an exact prior-induced preconditioner.

If \(J_{\mathrm{task}}\) is twice differentiable near \(\mathbf{B}^{*}\), then
\begin{equation}
\nabla_{\mathbf{b}}^2 f_{\mathrm{w}}
=
\mathbf{I}
+
\tau
\mathbf{L}_B^\top
\nabla_{\mathbf{B}}^2J_{\mathrm{task}}
\mathbf{L}_B.
\label{eq:whitened_hessian}
\end{equation}
Thus, prior-induced anisotropy is absorbed into the identity, and CMA-ES only needs to adapt to the residual curvature introduced by the task energy. If
\(\mathbf{L}_B^\top
\nabla_{\mathbf{B}}^2J_{\mathrm{task}}
\mathbf{L}_B\succeq0\),
then
\begin{equation}
\kappa
\left(
\nabla_{\mathbf{b}}^2f_{\mathrm{w}}
\right)
=
\frac{
1+\tau\lambda_{\max}
\left(
\mathbf{L}_B^\top
\nabla_{\mathbf{B}}^2J_{\mathrm{task}}
\mathbf{L}_B
\right)
}{
1+\tau\lambda_{\min}
\left(
\mathbf{L}_B^\top
\nabla_{\mathbf{B}}^2J_{\mathrm{task}}
\mathbf{L}_B
\right)
}.
\label{eq:whitened_condition_number}
\end{equation}

The analysis above uses the positive-definite covariance produced by the regularized prior-construction procedure. If an unregularized covariance is only positive semidefinite, exact whitening must instead be performed on its effective support subspace. Adding a diagonal term produces a full-dimensional approximation rather than an exact reparameterization of the singular prior.

\section{Experiments}

\subsection{Experimental Setup}
\label{sec:experiments_setup}

We evaluate TDP-SafeMAP on three navigation environments and one locomotion task from the D4RL benchmark. For navigation, we consider three maze2d environments with varying maze sizes and obstacle layouts, in which a planar point-mass robot navigates toward a goal location inside walled mazes. For locomotion, we adopt a single-legged hopping task, where the agent must move forward while maintaining balance. In all tasks, the transition dynamics are accessible only through black-box forward simulation without gradients, which is consistent with the assumptions of the proposed formulation.

The safety constraints are task-specific. In maze2d, the maze walls are
approximated by fourth-order superellipses. Concretely, each wall segment is
described by
\begin{equation*}
\left|\frac{x-x_0}{a}\right|^4 + \left|\frac{y-y_0}{b}\right|^4 = 1,
\end{equation*}
where $(x_0, y_0)$ is the wall center and $a, b > 0$ are its semi-axes. The
safety function $h(\mathbf{x})$ at each step is the signed distance from the
robot position to the nearest obstacle boundary, which is positive outside the
obstacles and negative inside. In hopper, a single safety constraint is imposed, requiring that the highest
body point not exceed a ceiling of $1.8$\,m above the ground.
This defines the safe set $\mathcal{C}=\{\mathbf{x}: h(\mathbf{x})\geq 0\}$ used
by the CBF-based likelihood in the proposed framework.

We compare against Diffuser~\cite{janner2022diffuser} and SafeDiffuser~\cite{xiao2025safe}. On the maze tasks we include all three
SafeDiffuser variants, namely TVS, RoS and ReS, whereas on the hopper task we use
the RoS variant, which is the most suitable for the red-line-type constraint
considered here because there is no local trap problem in robot locomotion. 
To ensure fairness, all planners share the same pretrained
diffusion backbone, and no method is retrained for the evaluation tasks. 
For the diffusion model applied to the maze2d environment, we follow~\cite{NEURIPS2023_893a5db6} to determine the optimizer, number of training steps and batch size. For the locomotion environments,
 we directly use the official pretrained diffusion models released by Diffuser.

All results are averaged over five random seeds and reported as
mean~$\pm$~std. The CBF decay parameter is $\gamma=0.5$ in both settings, and the
safety-related energy weights are set much larger than the other terms so that
 safety violations dominate the total energy.

\subsection{Maze Navigation on Maze2d}
\label{sec:experiments_maze}

To evaluate the safety and planning performance of TDP-SafeMAP, we compare it
against Diffuser and the three SafeDiffuser variants (TVS, RoS and ReS) on the
three maze2d tasks (Umaze, Medium and Large). The following metrics are adopted: the \emph{minimum safety
distance}, i.e., the minimum signed distance of the trajectory to the obstacle
boundaries; the \emph{path smoothness} measured as the standard deviation of
the heading-angle changes, i.e., $\mathrm{std}_k\big(\operatorname{wrap}(\theta_{k+1}-\theta_k)\big)$ with 
$\theta_k=\operatorname{atan2}(y_{k+1}-y_k,\,x_{k+1}-x_k)$ and $\operatorname{wrap}(\cdot)$ mapping angles to $[-\pi,\pi]$;
 the \emph{total path length}, i.e., the sum of the
Euclidean distances between consecutive trajectory states; and the \emph{average
goal distance}, i.e., the mean Euclidean distance from the trajectory states to
the goal. All metrics are computed by the same evaluation pipeline for every
method.

\begin{table*}[t]
\caption{Quantitative comparison on the maze2d tasks. The best result in each
row group is highlighted in bold. `Prior' denotes the TDP-SafeMAP variant that
uses only the mean of the truncated diffusion prior without posterior
refinement.}
\label{tab:maze2d}
\centering
\small
\begin{tabular}{lcccc}
\hline
Method & Minimum safety distance (m) & Path smoothness (rad) & Total path length (m) & Average goal distance (m)\\
\hline
\multicolumn{5}{c}{\emph{maze2d-umaze-v1}}\\
Diffuser & $0.3370 \pm 0.0370$ & $0.8591 \pm 0.1601$ & $4.2447 \pm 0.2005$ & $1.4592 \pm 0.1093$\\
SafeDiffuser-RoS & $0.3493 \pm 0.0498$ & $0.6181 \pm 0.0867$ & $3.7434 \pm 0.5894$ & $\bm{1.4225 \pm 0.1544}$\\
SafeDiffuser-ReS & $0.3763 \pm 0.0525$ & $0.6117 \pm 0.0712$ & $3.7556 \pm 0.5955$ & $1.4611 \pm 0.1624$\\
SafeDiffuser-TVS & $0.3512 \pm 0.0499$ & $0.6175 \pm 0.0858$ & $3.7441 \pm 0.5899$ & $1.4234 \pm 0.1551$\\
TDP-SafeMAP (prior) & $0.3754 \pm 0.0861$ & $0.2146 \pm 0.2825$ & $4.0964 \pm 0.3368$ & $1.4760 \pm 0.1145$\\
TDP-SafeMAP & $\bm{0.4479 \pm 0.0746}$ & $\bm{0.0837 \pm 0.0781}$ & $\bm{3.5978 \pm 0.5290}$ & $1.5275 \pm 0.1341$\\
\multicolumn{5}{c}{\emph{maze2d-medium-v1}}\\
Diffuser & $0.1952 \pm 0.1155$ & $1.0135 \pm 0.1663$ & $9.0967 \pm 0.1881$ & $\bm{2.6395 \pm 0.6851}$\\
SafeDiffuser-RoS & $-0.1326 \pm 0.0373$ & $1.0582 \pm 0.1292$ & $16.0125 \pm 3.9745$ & $2.7213 \pm 0.7278$\\
SafeDiffuser-ReS & $-0.0835 \pm 0.0273$ & $0.8608 \pm 0.1195$ & $10.3124 \pm 0.8364$ & $2.6893 \pm 0.6801$\\
SafeDiffuser-TVS & $-0.0455 \pm 0.0029$ & $0.8178 \pm 0.1414$ & $9.5360 \pm 0.8288$ & $2.7253 \pm 0.7382$\\
TDP-SafeMAP (prior) & $-0.0824 \pm 0.1908$ & $0.2150 \pm 0.1002$ & $\bm{6.4314 \pm 0.8958}$ & $3.3536 \pm 0.7407$\\
TDP-SafeMAP & $\bm{0.2817 \pm 0.0450}$ & $\bm{0.1084 \pm 0.0406}$ & $6.8423 \pm 1.0704$ & $3.0867 \pm 0.5466$\\
\multicolumn{5}{c}{\emph{maze2d-large-v1}}\\
Diffuser & $0.1692 \pm 0.0453$ & $1.2124 \pm 0.0595$ & $15.7248 \pm 0.7874$ & $3.2262 \pm 1.1265$\\
SafeDiffuser-RoS & $-0.1594 \pm 0.0738$ & $1.3415 \pm 0.6553$ & $30.6562 \pm 16.5695$ & $2.9995 \pm 0.9983$\\
SafeDiffuser-ReS & $0.0716 \pm 0.3163$ & $1.0760 \pm 0.4991$ & $20.4730 \pm 12.5162$ & $2.9622 \pm 1.1368$\\
SafeDiffuser-TVS & $-0.0253 \pm 0.1000$ & $1.0406 \pm 0.1611$ & $16.7758 \pm 4.7915$ & $\bm{2.9570 \pm 1.2280}$\\
TDP-SafeMAP (prior) & $-0.2585 \pm 0.5511 $ & $0.2362 \pm 0.0433$ & $12.0001 \pm 1.3581$ & $4.7818 \pm 1.5800$\\
TDP-SafeMAP & $\bm{0.2229 \pm 0.0878}$ & $\bm{0.1327 \pm 0.0373}$ & $\bm{9.8000 \pm 2.3403}$ & $3.0867 \pm 1.2777$\\
\hline
\end{tabular}
\end{table*}

The quantitative results are summarized in Table~\ref{tab:maze2d}. It
 can be observed that TDP-SafeMAP achieves collision-free trajectories 
 across all tasks and attains the best safety performance, whereas
  the other methods violate the constraints in some tasks. A representative
result on the Umaze task is shown in
Fig.~\ref{fig:maze_umaze}. The planned path selects a task-appropriate
route from the start to the goal, stays well clear of the walls, and
remains smooth and close to the shortest feasible path, which is
consistent with the quantitative gains reported in Table~\ref{tab:maze2d}.

\begin{figure}[!t]
\centering
\includegraphics[width=0.4\columnwidth]{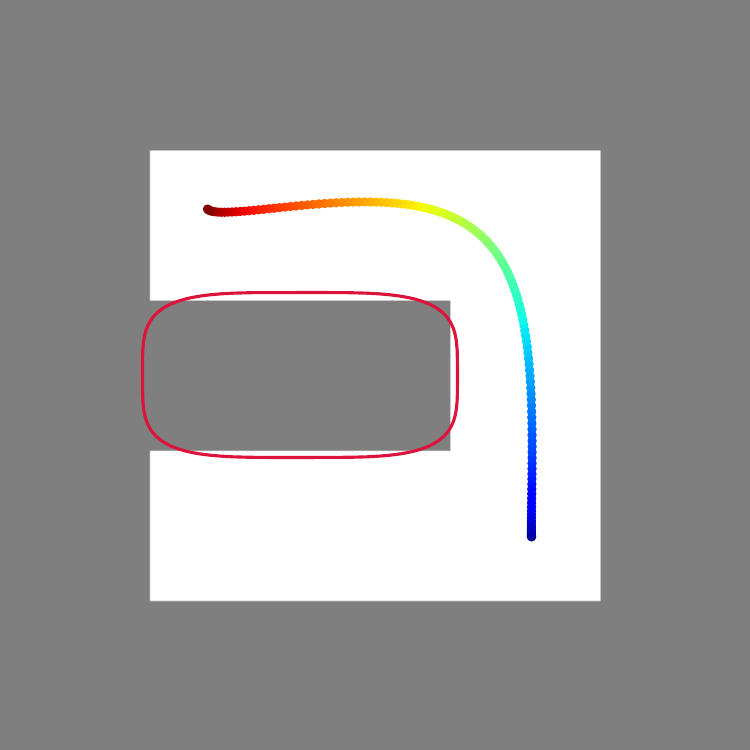}
\caption{Result on the \texttt{maze2d-umaze-v1} task. TDP-SafeMAP
selects a task-appropriate route, keeps a clear margin from the walls, and
produces a smooth path close to the shortest feasible one.}
\label{fig:maze_umaze}
\end{figure}

These gains can be attributed to 
  the individual components of the proposed
framework. The CBF-based energy term directly penalizes violations of the
forward-invariance condition at every step, which explains the consistently
positive minimum safety distance. The Bézier parameterization removes redundant high-frequency 
degrees of freedom in the decision space, and forward-simulating the control sequence through 
the integrator-type dynamics further attenuates high-frequency control components;
 together, these yield a smaller standard deviation of heading-angle changes, i.e., 
 smoother trajectories. Moreover, since the whole pipeline relies only
on black-box forward simulation, the CMA-ES solver requires no gradient of the
task energy, which avoids the bias introduced by differentiating through a
surrogate.

To demonstrate the differences among the final result obtained by TDP-SafeMAP, 
the result of its prior distribution mean and the baseline algorithm, 
we compare them on the Large task in Fig.~\ref{fig:maze_large}.
The unconstrained Diffuser trajectory may approach the walls due to the lack 
of safety constraints and exhibits visible turning jitter. The truncated prior locates the required corridors, 
and the subsequent energy-guided posterior optimization 
yields a control sequence that generates a smooth and safe path.

\begin{figure}[!t]
\centering
\begin{subfigure}[b]{0.31\columnwidth}
\includegraphics[width=\linewidth]{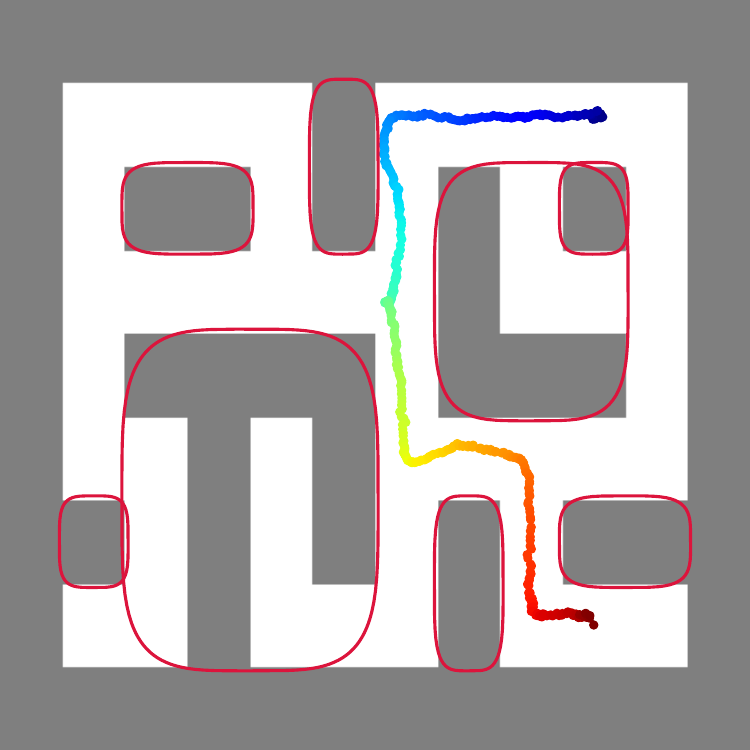}
\caption{Diffuser}
\label{fig:maze_large_diffuser}
\end{subfigure}
\hfill
\begin{subfigure}[b]{0.31\columnwidth}
\includegraphics[width=\linewidth]{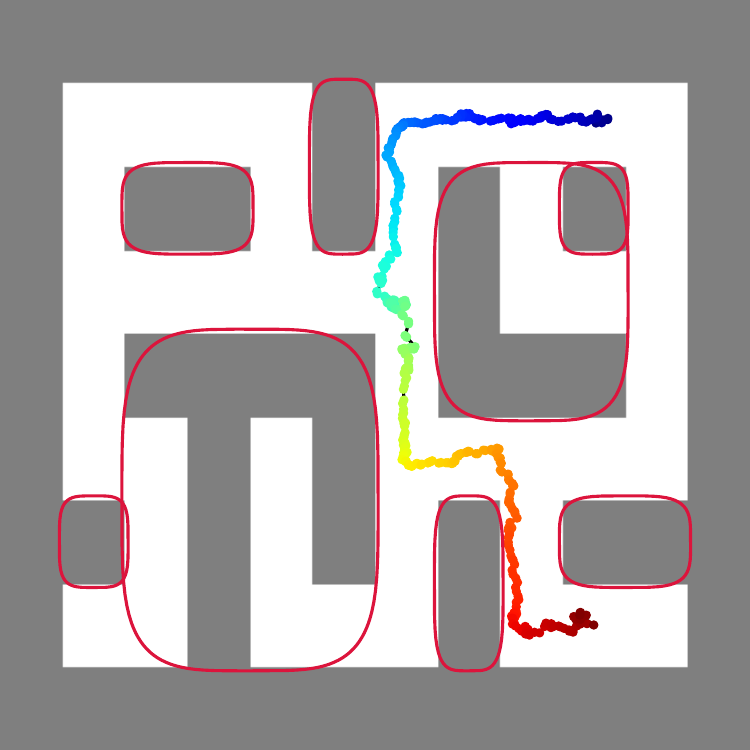}
\caption{Truncated prior}
\label{fig:maze_large_prior}
\end{subfigure}
\hfill
\begin{subfigure}[b]{0.31\columnwidth}
\includegraphics[width=\linewidth]{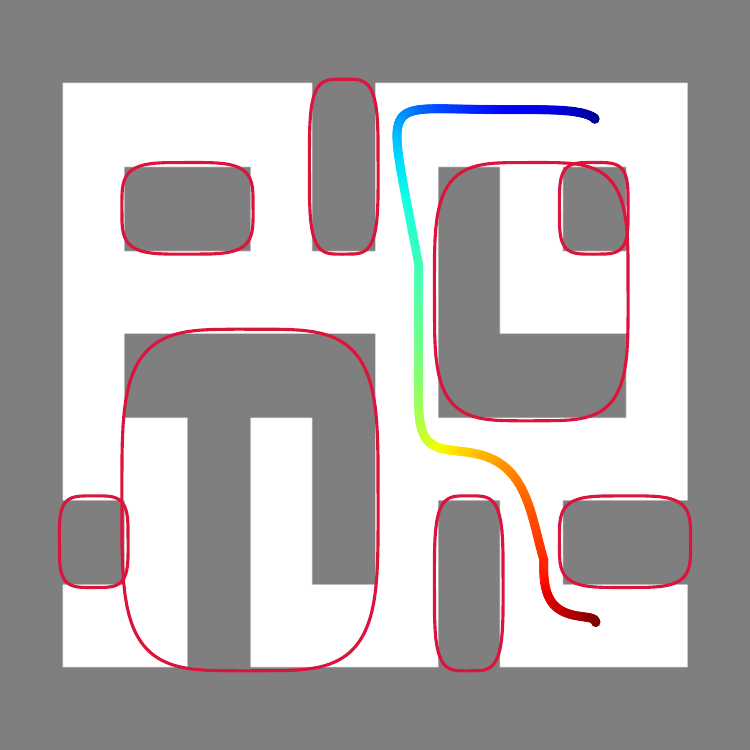}
\caption{TDP-SafeMAP}
\label{fig:maze_large_cmap}
\end{subfigure}
\caption{Qualitative comparison on the \texttt{maze2d-large-v1} task. From left to right: the Diffuser baseline, the truncated diffusion prior, and the final TDP-SafeMAP posterior trajectory. The posterior refinement yields a smoother and safer path while still reaching the goal.}
\label{fig:maze_large}
\end{figure}

The minimum safety distance and total path length of SafeDiffuser are heavily affected
 by the local trap problem. As shown in Fig.~\ref{fig:safediffuser_large}, while the overall direction 
 of its planned path is correct, the local trajectory exhibits distortion or even wall penetration.
  This occurs because its hard CBF constraints, applied in the highly noisy early diffusion stages, 
  enforce trajectory points that appear safe but are geometrically unreasonable. Consequently, these
   points near obstacle boundaries easily get stuck during sampling. In contrast,
 TDP-SafeMAP avoids this issue through a smooth energy penalty and global evolutionary search.

\begin{figure}[!t]
\centering
\includegraphics[width=0.4\columnwidth]{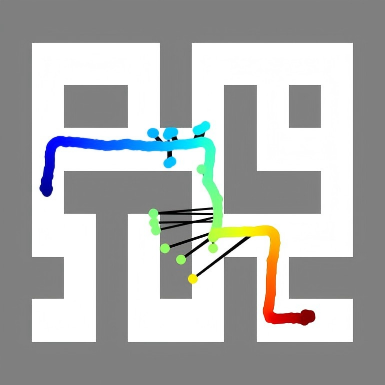}
\caption{Trajectory of SafeDiffuser on the \texttt{maze2d-large-v1} task, illustrating the local trap problem where the path direction is generally correct but exhibits local distortion or wall penetration.}
\label{fig:safediffuser_large}
\end{figure}

To illustrate the mechanism behind the safety improvement, we compare the
trajectory obtained from the mean of the truncated diffusion prior and the
final posterior trajectory refined by CMA-ES. The prior mean already captures
a high-probability mode of the planner distribution but may still pass close
to or through an obstacle; the posterior optimization then relocates the
trajectory in accordance with the requirements of the energy function, reducing the barrier violation while
preserving the task performance.

Fig.~\ref{fig:maze_kappa}(a) depicts the condition number $\kappa(t)$ of the
truncated-prior covariance defined in Section~IV-B: the curve stays low in the
well-conditioned regime and rises steeply as the noise deepens.
Fig.~\ref{fig:maze_kappa}(b) shows its second-order difference, whose clear
peak localizes the elbow of the $\kappa(t)$ curve. The truncation criterion of
Section~IV-B places $t_{\mathrm{stop}}$ right at this elbow: truncating
earlier would leave the Gaussian approximation spanning multiple
undifferentiated modes, whereas truncating later would drive the prior into
the severely ill-conditioned tail. The resulting prior is thus moderately
conditioned, which keeps the subsequent posterior optimization numerically
stable while retaining the mode diversity of the planner distribution.

\begin{figure}[t]
\centering
\begin{subfigure}[b]{\columnwidth}
\includegraphics[width=0.9\linewidth]{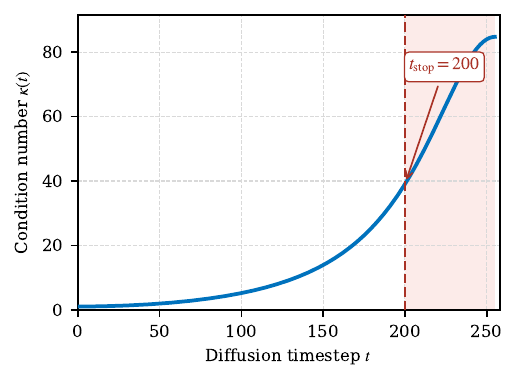}
\caption{Condition number $\kappa(t)$ of the truncated prior covariance; the
vertical line marks the adaptive truncation step $t_{\mathrm{stop}}$. The red region 
indicates the timesteps that need to be executed during actual inference.}
\label{fig:maze_kappa_a}
\end{subfigure}

\vspace{1em}

\begin{subfigure}[b]{\columnwidth}
\includegraphics[width=0.9\linewidth]{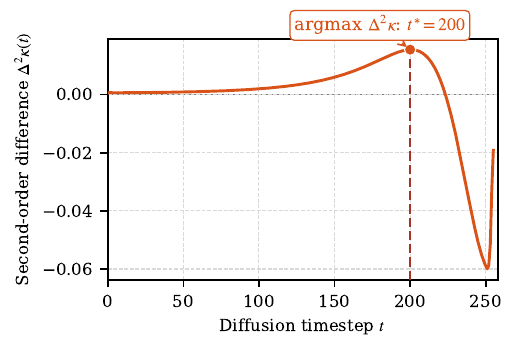}
\caption{Second-order difference of $\kappa(t)$; its peak localizes the
detected onset $t^{\ast}$. The corresponding truncation step is $t_{\mathrm{stop}} = t^{\ast} + \delta$. 
In practice, $\delta$ is typically set to zero.}
\label{fig:maze_kappa_b}
\end{subfigure}
\caption{Condition-number-based adaptive truncation of the diffusion prior. 
It should be noted that larger diffusion timesteps correspond to earlier stages in the inference process.}
\label{fig:maze_kappa}
\end{figure}

\subsection{Locomotion on Hopper}
\label{sec:experiments_hopper}

We evaluate the proposed planner on the locomotion task
\texttt{hopper-medium-expert-v2}, where a single-legged hopper must move
forward while maintaining balance. The planning is performed in a
receding-horizon fashion: at each step, TDP-SafeMAP replans a $32$-step
trajectory, executes its first action, and repeats. The safety constraint is
the ceiling considered in Section~\ref{sec:experiments_setup}. We report the
normalized score, the constraint violation rate, the forward distance and mean top
height. The normalized score and forward distance measure task performance, while the violation rate
and mean top height measure safety. 

\begin{table*}[t]
\caption{Quantitative comparison on the D4RL dataset
\texttt{hopper-medium-expert-v2}. The best result in each column is
highlighted in bold. Higher is better for the normalized score and the forward distance and lower is
better for the violation rate and mean top height.}
\label{tab:hopper}
\centering
\small
\begin{tabular}{lcccc}
\hline
\multirow{2}{*}{Method} & Normalized & Forward & Violation & Mean top \\
       & score & distance (m) & rate (\%) & height (m) \\
\hline
Diffuser & $\bm{0.5128 \pm 0.0004}$ & $\bm{9.3633 \pm 0.0117}$ & $6.7710 \pm 0.1471$ & $1.5793 \pm 0.0019$\\
SafeDiffuser-RoS & $0.4486 \pm 0.0023$ & $7.6909 \pm 0.0631$ & $12.3975 \pm 2.7966$ & $1.6090 \pm 0.0022$\\
TDP-SafeMAP (prior) & $0.4794 \pm 0.0600$ & $8.4920 \pm 1.5643$ & $13.2300 \pm 2.2062$ & $1.7337 \pm 0.1622$\\
TDP-SafeMAP & $0.4841 \pm 0.0629$ & $8.6204 \pm 1.6393$ & $\bm{0.0000 \pm 0.0000}$ & $\bm{1.5663 \pm 0.0035}$\\
\hline
\end{tabular}
\end{table*}

Fig.~\ref{fig:hopper_ceiling} shows the effective body height over the
rollout. The quantitative comparison is summarized in
Table~\ref{tab:hopper}.

It can be observed from Table~\ref{tab:hopper} and Fig.~\ref{fig:hopper_ceiling} 
that only TDP-SafeMAP keeps the body height below the ceiling at every step, achieving zero constraint
 violations in all evaluated rollouts, whereas the baselines exceed the limit. 
Comparing the truncated diffusion prior with the final optimized sequence reveals that even
 if the initial prior planner is unaware of this safety constraint, the subsequent posterior
  refinement can still correct the control actions that would otherwise violate the safety limit.
   Furthermore, the results show that TDP-SafeMAP yields a slightly lower normalized score and a 
   shorter forward distance. This is because constraining the jumping height partially sacrifices 
   the ability to leap forward aggressively to maximize the score, resulting in a slower forward 
   speed. However, this safety improvement only introduces 
a minor trade-off and does not severely compromise the overall task execution.

\begin{figure}[!t]
\centering
\begin{subfigure}[b]{\columnwidth}
\centering
\includegraphics[width=0.85\linewidth]{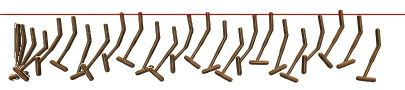}
\caption{Diffuser}
\label{fig:hopper_diffuser}
\end{subfigure}

\vspace{0.5em}

\begin{subfigure}[b]{\columnwidth}
\centering
\includegraphics[width=0.85\linewidth]{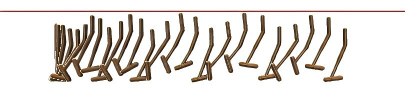}
\caption{TDP-SafeMAP}
\label{fig:hopper_post}
\end{subfigure}
\caption{Effective body height over the rollout on \texttt{hopper-medium-expert-v2}. Diffuser exceeds the $1.8$\,m ceiling at some steps, whereas TDP-SafeMAP keeps the body strictly below it for the entire episode.}
\label{fig:hopper_ceiling}
\end{figure}

We also report the computational cost on the hopper task, measured 
on a 12 vCPU Intel(R) Xeon(R) Silver 4214R CPU at 2.40\,GHz. Using 
the Diffuser baseline as the normalized time unit (i.e., $1.0$), Diffuser,
 SafeDiffuser, and TDP-SafeMAP require $1$, $1.8494$, and $\bm{0.5286}$
  normalized run time, respectively, averaged over five seeds, where the
   CMA-ES population evaluations are distributed over the available cores.
    Even when forced to run serially, TDP-SafeMAP requires only $1.4992$ normalized runtime units, 
    which still outperforms SafeDiffuser. This nearly two-fold 
    speedup over the unconstrained Diffuser primarily stems from the 
    adaptive truncation at $t_{\mathrm{stop}}$ and the parallel evaluation 
    of the CMA-ES population. The speedup will grow further with more cores
     or batched GPU rollouts, 
since the current core count remains well below the population size.

\subsection{Comparison of Truncation Strategies}
\label{sec:experiments_truncation}

The proposed method selects the truncation timestep from the condition-number
curve of the truncated prior covariance (Section~IV-B). To validate this
criterion, we compare it with the speciation-time method~\cite{biroli2024dynamical}, which determines the truncation timestep
from the cumulative-noise schedule alone. Both methods are applied to
TDP-SafeMAP and evaluated on the three maze2d tasks and the hopper task under
identical settings, with the resulting performance summarized in
Table~\ref{tab:truncation_maze} and Table~\ref{tab:truncation_hopper}.

\begin{table}[t]
\caption{Comparison of the truncation strategy on the maze2d tasks.}
\label{tab:truncation_maze}
\centering
\small
\setlength{\tabcolsep}{5pt}
\begin{tabular}{lccc}
\hline
\multirow{2}{*}{Method} & \multirow{2}{*}{$t_{\mathrm{stop}}$} & Minimum safety & Total path\\
     &  & distance (m) & length (m)\\
\hline
\multicolumn{4}{c}{\emph{maze2d-umaze-v1}}\\
Condition (ours) & 48 & $0.4479 \pm 0.0746$ & $3.5978 \pm 0.5290$\\
Speciation       & 49 & $0.4417 \pm 0.0461$ & $3.9043 \pm 0.4190$\\
\multicolumn{4}{c}{\emph{maze2d-medium-v1}}\\
Condition (ours) & 106 & $0.2817 \pm 0.0450$ & $6.8423 \pm 1.0704$\\
Speciation       & 107 & $0.2887 \pm 0.1015$ & $6.9880 \pm 1.2461$\\
\multicolumn{4}{c}{\emph{maze2d-large-v1}}\\
Condition (ours) & 200 & $0.2229 \pm 0.0878$ & $9.8000 \pm 2.3403$\\
Speciation       & 202 & $0.2062 \pm 0.2182$ & $9.8702 \pm 3.0370$\\
\hline
\end{tabular}
\end{table}

\begin{table}[t]
\caption{Comparison of the truncation strategy on the hopper task.}
\label{tab:truncation_hopper}
\centering
\small
\setlength{\tabcolsep}{5pt}
\begin{tabular}{lccc}
\hline
\multirow{2}{*}{Method} & \multirow{2}{*}{$t_{\mathrm{stop}}$} & Violation & Normalized\\
       &  & rate (\%) & score\\
\hline
Condition (ours) & 12 & $0.0000 \pm 0.0000$ & $0.4841 \pm 0.0629$\\
Speciation       & 13 & $0.0000 \pm 0.0000$ & $0.4269 \pm 0.0003$\\
\hline
\end{tabular}
\end{table}

The choice of the truncation timing is critical. Truncating too early leaves
more denoising steps but makes the Gaussian prior deviate further from the
exact planner distribution, degrading the task performance; truncating too late
drives the covariance into the ill-conditioned regime, wasting inference time
and potentially making the prior overly conservative. 

As shown in Table~\ref{tab:truncation_maze} and Table~\ref{tab:truncation_hopper}, 
the timesteps selected by both methods are quite close, yet the speciation-time 
method consistently chooses an earlier truncation step than the condition-number 
criterion. On the maze2d tasks, both methods yield comparable safety performance, 
but the condition-number criterion achieves a shorter total path length. On the 
hopper locomotion task, both methods ensure system safety with zero constraint 
violations, while the condition-number criterion achieves a higher normalized score. 
Compared with the speciation-time method, the condition-number criterion provides 
stronger assurance that the diffusion prior is truncated after mode separation, 
yielding more stable and reliable planning results.

In principle, the condition-number criterion and the speciation-time theory
characterize the emergence of modes in reverse diffusion from different
perspectives. The latter derives a universal time scale for mode separation from
the principal eigenvalue of the data covariance, with a criterion that depends
only on the data spectrum. The former is based on the ratio between the largest
and smallest eigenvalues of the truncated prior covariance $\boldsymbol{\Sigma}_B(t)$,
exploiting information from both ends of the spectrum to characterize whether
the Gaussian approximation has collapsed onto a low-dimensional manifold and
whether it remains in a well-conditioned regime suitable for posterior solving.
Since the speciation-time theory rests on the assumptions of an exact empirical
score, the simultaneous limit of a large number of data points $n$ and a large
data dimension $d$ with $\xi=(\log n)/d$ held fixed, and a single dominant
principal component, we treat it as a spectrum-based truncation heuristic for
comparison. Our method is theoretically better suited to practical settings in 
which the dataset size and dimension do not satisfy the assumptions of the speciation-time theory.

\section{Conclusion}
\label{sec:conclusion}

We proposed TDP-SafeMAP, a Bayesian maximum a posteriori framework
for motion planning. It takes the adaptively truncated output distribution of
diffusion inference as the prior, maps task requirements such as the CBF safety term to a Gibbs 
generalized task likelihood through an exponential energy mapping, and performs
maximum a posteriori estimation with CMA-ES. In this way, TDP-SafeMAP naturally
fuses the expert information learned by the diffusion model with the physical
requirements imposed by the task. It does not require modifying the diffusion
model itself and is therefore plug-and-play. Experiments show that TDP-SafeMAP outperforms
Diffuser and SafeDiffuser in terms of safety and other metrics.

Future work will proceed along the following directions. More efficient prior
acquisition methods, such as the DPM-Solver family or flow matching, can be
adopted to improve inference speed~\cite{11002717}. Variational or Monte Carlo methods can be
used to compute the posterior distribution, thereby providing confidence levels
for the resulting control sequence and additional output information for
improving system safety~\cite{schmal2026reliable}. Moreover, the current framework constructs a Gaussian prior within a selected
diffusion mode. Future work may extend TDP-SafeMAP to mixture priors and
multi-population posterior search, allowing multiple behavior modes to be
preserved and optimized simultaneously. Such an extension could reduce
premature commitment to a suboptimal mode and improve planning robustness in
highly ambiguous environments with multiple topologically distinct safe paths.

\appendices

\section{Proofs for Approximation Error Bound and Safety Preservation}

\subsection{Proof of the Error-Bound Theorem for the Approximate MAP Solution}
\label{app:proof_map_error}

By the assumed CMA-ES suboptimality,
\begin{equation*}
\widehat{F}(\widehat{\mathbf{B}})
\leq
\widehat{F}(\mathbf{B}^{\circ})
+
\zeta_{\mathrm{opt}}.
\end{equation*}
Since
\begin{equation*}
\widehat{F}(\mathbf{B})
=
F^{\circ}(\mathbf{B})
+
\Delta_p(\mathbf{B}),
\end{equation*}
we have
\begin{align*}
F^{\circ}(\widehat{\mathbf{B}})
-
F^{\circ}(\mathbf{B}^{\circ})
\leq\;&
\Delta_p(\mathbf{B}^{\circ})
-
\Delta_p(\widehat{\mathbf{B}})
+
\zeta_{\mathrm{opt}}\\
\leq\;&
\omega_p+\zeta_{\mathrm{opt}}.
\end{align*}
On the other hand, the quadratic-growth condition gives
\begin{equation*}
F^{\circ}(\widehat{\mathbf{B}})
-
F^{\circ}(\mathbf{B}^{\circ})
\geq
\frac{\vartheta_{\mathrm{qg}}}{2}
\left\|
\widehat{\mathbf{B}}
-
\mathbf{B}^{\circ}
\right\|^2.
\end{equation*}
Combining the preceding two inequalities yields
\begin{equation*}
\left\|
\widehat{\mathbf{B}}
-
\mathbf{B}^{\circ}
\right\|^2
\leq
\frac{
2\left(
\omega_p+\zeta_{\mathrm{opt}}
\right)
}{
\vartheta_{\mathrm{qg}}
},
\end{equation*}
which proves
\begin{equation*}
\left\|
\widehat{\mathbf{B}}
-
\mathbf{B}^{\circ}
\right\|
\leq
\sqrt{
\frac{
2\left(
\omega_p+\zeta_{\mathrm{opt}}
\right)
}{
\vartheta_{\mathrm{qg}}
}
}.
\end{equation*}

\subsection{Gaussian Prior Perturbation Bound}

Consider
\begin{equation*}
p_G^{\circ}(\mathbf{B})
=
\mathcal{N}
\left(
\mathbf{B}\mid
\boldsymbol{\mu}_B^{\circ},
\boldsymbol{\Sigma}_B^{\circ}
\right)
\end{equation*}
and
\begin{equation*}
\widehat{p}(\mathbf{B})
=
\mathcal{N}
\left(
\mathbf{B}\mid
\boldsymbol{\mu}_B,
\boldsymbol{\Sigma}_B
\right).
\end{equation*}
Introduce
\begin{equation*}
\mathbf{z}
=
\mathbf{B}-\boldsymbol{\mu}_B^{\circ},
\qquad
\mathbf{v}_\mu
=
\boldsymbol{\mu}_B-\boldsymbol{\mu}_B^{\circ},
\end{equation*}
and
\begin{equation*}
\mathbf{Q}^{\circ}
=
\left(
\boldsymbol{\Sigma}_B^{\circ}
\right)^{-1},
\qquad
\mathbf{Q}
=
\boldsymbol{\Sigma}_B^{-1}.
\end{equation*}
The parameter-induced negative-log-prior error is
\begin{align*}
\Delta_{\mathrm{par}}(\mathbf{B})
=&\,
-\log\widehat{p}(\mathbf{B})
+
\log p_G^{\circ}(\mathbf{B})\\
=&\,
\frac{1}{2}
(\mathbf{z}-\mathbf{v}_\mu)^\top
\mathbf{Q}
(\mathbf{z}-\mathbf{v}_\mu)
-
\frac{1}{2}
\mathbf{z}^\top
\mathbf{Q}^{\circ}
\mathbf{z}
\nonumber\\
&+
\frac{1}{2}
\log
\frac{
\det(\boldsymbol{\Sigma}_B)
}{
\det(\boldsymbol{\Sigma}_B^{\circ})
}.
\end{align*}
After expanding the quadratic term,
\begin{equation*}
\Delta_{\mathrm{par}}(\mathbf{B})
=
\frac{1}{2}
\mathbf{z}^\top
\left(
\mathbf{Q}-\mathbf{Q}^{\circ}
\right)
\mathbf{z}
-
\mathbf{v}_\mu^\top
\mathbf{Q}\mathbf{z}
+
c_{\mathrm{ind}},
\end{equation*}
where
\begin{equation*}
c_{\mathrm{ind}}
=
\frac{1}{2}
\mathbf{v}_\mu^\top
\mathbf{Q}\mathbf{v}_\mu
+
\frac{1}{2}
\log
\frac{
\det(\boldsymbol{\Sigma}_B)
}{
\det(\boldsymbol{\Sigma}_B^{\circ})
}
\end{equation*}
is independent of \(\mathbf{B}\) and therefore does not affect the oscillation.

Define
\begin{equation*}
\varphi(\mathbf{z})
=
\frac{1}{2}
\mathbf{z}^\top
\left(
\mathbf{Q}-\mathbf{Q}^{\circ}
\right)
\mathbf{z}
-
\mathbf{v}_\mu^\top
\mathbf{Q}\mathbf{z}.
\end{equation*}
For
\(\|\mathbf{z}\|\leq R_{\mathrm{mode}}\),
\begin{align*}
\left|\varphi(\mathbf{z})\right|
\leq\;&
\frac{R_{\mathrm{mode}}^2}{2}
\left\|
\mathbf{Q}-\mathbf{Q}^{\circ}
\right\|
\nonumber\\
&+
R_{\mathrm{mode}}
\left\|\mathbf{Q}\right\|
\left\|\mathbf{v}_\mu\right\|.
\end{align*}
Using
\begin{equation*}
\sup\varphi-\inf\varphi
\leq
2\sup|\varphi|
\end{equation*}
and
\(\|\mathbf{v}_\mu\|=\upsilon_\mu\), we obtain
\begin{equation*}
\omega_{\mathrm{par}}
\leq
R_{\mathrm{mode}}^2
\left\|
\mathbf{Q}-\mathbf{Q}^{\circ}
\right\|
+
2R_{\mathrm{mode}}
\left\|\mathbf{Q}\right\|
\upsilon_\mu.
\end{equation*}

By Weyl's eigenvalue perturbation inequality,
\begin{equation*}
\lambda_{\min}
\left(
\boldsymbol{\Sigma}_B
\right)
\geq
s_{\min}^{\circ}-\upsilon_\Sigma.
\end{equation*}
Therefore,
\(\upsilon_\Sigma<s_{\min}^{\circ}\)
ensures that \(\boldsymbol{\Sigma}_B\) is positive definite and
\begin{equation*}
\left\|\mathbf{Q}\right\|
\leq
\frac{1}
{s_{\min}^{\circ}-\upsilon_\Sigma}.
\end{equation*}
The inverse perturbation identity gives
\begin{equation*}
\mathbf{Q}-\mathbf{Q}^{\circ}
=
\boldsymbol{\Sigma}_B^{-1}
\left(
\boldsymbol{\Sigma}_B^{\circ}
-
\boldsymbol{\Sigma}_B
\right)
\left(
\boldsymbol{\Sigma}_B^{\circ}
\right)^{-1},
\end{equation*}
and hence
\begin{equation*}
\left\|
\mathbf{Q}-\mathbf{Q}^{\circ}
\right\|
\leq
\frac{
\upsilon_\Sigma
}{
s_{\min}^{\circ}
\left(
s_{\min}^{\circ}-\upsilon_\Sigma
\right)
}.
\end{equation*}
It follows that
\begin{equation*}
\omega_{\mathrm{par}}
\leq
R_{\mathrm{mode}}^2
\frac{
\upsilon_\Sigma
}{
s_{\min}^{\circ}
\left(
s_{\min}^{\circ}-\upsilon_\Sigma
\right)
}
+
2R_{\mathrm{mode}}
\frac{
\upsilon_\mu
}{
s_{\min}^{\circ}-\upsilon_\Sigma
}.
\end{equation*}

For a non-Gaussian reference prior,
\begin{align*}
-\log\widehat{p}(\mathbf{B})
+
\log p^{\circ}(\mathbf{B})
=&\,
\left[
-\log\widehat{p}(\mathbf{B})
+
\log p_G^{\circ}(\mathbf{B})
\right]
\nonumber\\
&+
\left[
-\log p_G^{\circ}(\mathbf{B})
+
\log p^{\circ}(\mathbf{B})
\right].
\end{align*}
The subadditivity of function oscillations therefore gives
\begin{equation*}
\omega_p
\leq
\omega_{\mathrm{par}}
+
\omega_G,
\end{equation*}
and consequently
\begin{equation*}
\omega_p
\leq
\omega_G
+
R_{\mathrm{mode}}^2
\frac{
\upsilon_\Sigma
}{
s_{\min}^{\circ}
\left(
s_{\min}^{\circ}-\upsilon_\Sigma
\right)
}
+
2R_{\mathrm{mode}}
\frac{
\upsilon_\mu
}{
s_{\min}^{\circ}-\upsilon_\Sigma
}.
\end{equation*}

\subsection{Propagation of the Control-Point Error}

The B\'ezier mapping gives
\begin{equation*}
\widehat{\mathbf{u}}_k
-
\mathbf{u}_k^{\circ}
=
\mathbf{M}_k
\left(
\widehat{\mathbf{B}}
-
\mathbf{B}^{\circ}
\right).
\end{equation*}
Therefore,
\begin{equation*}
\left\|
\widehat{\mathbf{u}}_k
-
\mathbf{u}_k^{\circ}
\right\|
\leq
\left\|\mathbf{M}_k\right\|
\left\|
\widehat{\mathbf{B}}
-
\mathbf{B}^{\circ}
\right\|.
\end{equation*}

Define
\begin{equation*}
e_k
=
\left\|
\widehat{\mathbf{x}}_k
-
\mathbf{x}_k^{\circ}
\right\|.
\end{equation*}
Since the two trajectories start from the same initial state,
\(e_0=0\). The Lipschitz property of the dynamics gives
\begin{equation*}
e_{k+1}
\leq
L_x e_k
+
L_u
\left\|\mathbf{M}_k\right\|
\left\|
\widehat{\mathbf{B}}
-
\mathbf{B}^{\circ}
\right\|.
\end{equation*}
Recursively applying this inequality yields
\begin{equation*}
e_k
\leq
G_k
\left\|
\widehat{\mathbf{B}}
-
\mathbf{B}^{\circ}
\right\|,
\end{equation*}
where
\begin{equation*}
G_k
=
L_u
\sum_{j=0}^{k-1}
L_x^{\,k-1-j}
\left\|\mathbf{M}_j\right\|.
\end{equation*}

For the CBF residual,
\begin{align*}
&
\left|
r_k(\widehat{\mathbf{B}})
-
r_k(\mathbf{B}^{\circ})
\right|
\nonumber\\
&\leq
\left|
h(\widehat{\mathbf{x}}_{k+1})
-
h(\mathbf{x}_{k+1}^{\circ})
\right|
\nonumber\\
&\quad+
(1-\gamma)
\left|
h(\widehat{\mathbf{x}}_k)
-
h(\mathbf{x}_k^{\circ})
\right|.
\end{align*}
The Lipschitz property of \(h\) therefore gives
\begin{align*}
&
\left|
r_k(\widehat{\mathbf{B}})
-
r_k(\mathbf{B}^{\circ})
\right|
\nonumber\\
&\leq
L_h
\left[
G_{k+1}
+
(1-\gamma)G_k
\right]
\left\|
\widehat{\mathbf{B}}
-
\mathbf{B}^{\circ}
\right\|\\
&\leq
K_{\mathrm{CBF}}
\left\|
\widehat{\mathbf{B}}
-
\mathbf{B}^{\circ}
\right\|.
\end{align*}

\subsection{Proof of the CBF Safety-Preservation Theorem}
\label{app:proof_cbf_safety}

The explicit prior-error condition and the Gaussian perturbation bound imply
\begin{equation*}
\omega_p+\zeta_{\mathrm{opt}}
\leq
\frac{
\vartheta_{\mathrm{qg}}
\varrho_{\mathrm{CBF}}^2
}{
2K_{\mathrm{CBF}}^2
}.
\end{equation*}
Applying the error bound for the approximate MAP solution gives
\begin{equation*}
\left\|
\widehat{\mathbf{B}}
-
\mathbf{B}^{\circ}
\right\|
\leq
\frac{
\varrho_{\mathrm{CBF}}
}{
K_{\mathrm{CBF}}
}.
\end{equation*}
Hence, for every \(k=0,\ldots,H-1\),
\begin{align*}
r_k(\widehat{\mathbf{B}})
&\geq
r_k(\mathbf{B}^{\circ})
-
K_{\mathrm{CBF}}
\left\|
\widehat{\mathbf{B}}
-
\mathbf{B}^{\circ}
\right\|\\
&\geq
\varrho_{\mathrm{CBF}}
-
K_{\mathrm{CBF}}
\frac{
\varrho_{\mathrm{CBF}}
}{
K_{\mathrm{CBF}}
}
=0.
\end{align*}
Thus, the discrete-time CBF condition holds at every step:
\begin{equation*}
h(\widehat{\mathbf{x}}_{k+1})
\geq
(1-\gamma)
h(\widehat{\mathbf{x}}_k).
\end{equation*}
Since \(0<\gamma\leq1\) and
\(h(\mathbf{x}_0)\geq0\), induction gives
\begin{equation*}
h(\widehat{\mathbf{x}}_k)
\geq0,
\qquad
k=0,\ldots,H.
\end{equation*}
Therefore, the computed trajectory remains in the safe set over the complete planning horizon.

\section{Derivations for the Unified Statistical Interpretation of TDP-SafeMAP}
\label{app:prior_cma_derivations}

\subsection{Bayesian-Risk Interpretation of Condition-Number Truncation}
\label{app:spectral_truncation}

Because the regularized data covariance is positive definite, write
\begin{equation*}
\boldsymbol{\Sigma}_{\mathrm{data}}
=
\mathbf{Q}
\operatorname{diag}
\left(
\lambda_1^{\mathrm{data}},
\ldots,
\lambda_D^{\mathrm{data}}
\right)
\mathbf{Q}^{\top},
\end{equation*}
where
\begin{equation*}
0<
\lambda_{\min}^{\mathrm{data}}
=
\lambda_1^{\mathrm{data}}
\leq\cdots\leq
\lambda_D^{\mathrm{data}}
=
\lambda_{\max}^{\mathrm{data}}.
\end{equation*}
Let
\begin{equation*}
s_t
=
\rho_t-\rho_{t_{\max}}
\geq0.
\end{equation*}
Iterating Eq.~\eqref{eq:trunc_cov_recursion} from \(t_{\max}\) to \(t\)
gives
\begin{equation*}
\boldsymbol{\Sigma}_{0|t}^{-1}
=
\boldsymbol{\Sigma}_{\mathrm{data}}^{-1}
+s_t\mathbf{I}.
\end{equation*}

Consider the surrogate Gaussian observation model
\begin{equation}
\mathbf{X}
\sim
\mathcal{N}
\left(
\mathbf{0},
\boldsymbol{\Sigma}_{\mathrm{data}}
\right),
\qquad
\mathbf{Y}_t
=
\sqrt{s_t}\mathbf{X}
+
\boldsymbol{\varepsilon},
\qquad
\boldsymbol{\varepsilon}
\sim
\mathcal{N}(\mathbf{0},\mathbf{I}).
\label{eq:surrogate_gaussian_channel}
\end{equation}
The Gaussian conditioning formula gives
\begin{equation*}
\operatorname{Cov}
\left(
\mathbf{X}\mid\mathbf{Y}_t
\right)
=
\left(
\boldsymbol{\Sigma}_{\mathrm{data}}^{-1}
+s_t\mathbf{I}
\right)^{-1}.
\end{equation*}
Therefore, under the Gaussian second-order closure,
\(\boldsymbol{\Sigma}_{0|t}\) is equal to the posterior covariance of
Eq.~\eqref{eq:surrogate_gaussian_channel}. This is a covariance-level
equivalence and does not assume that the original diffusion distribution is
globally Gaussian.

Because the added precision is isotropic, the eigenvectors remain unchanged:
\begin{align*}
\boldsymbol{\Sigma}_{0|t}
&=
\mathbf{Q}
\operatorname{diag}
\left(
v_1(t),\ldots,v_D(t)
\right)
\mathbf{Q}^{\top},\\
v_i(t)
&=
\frac{1}
{1/\lambda_i^{\mathrm{data}}+s_t}
=
\frac{\lambda_i^{\mathrm{data}}}
{1+s_t\lambda_i^{\mathrm{data}}}.
\end{align*}
For squared-error estimation, the posterior mean is the Bayes estimator.
Hence,
\begin{equation*}
v_i(t)
=
\mathbb{E}
\left[
\left(
X_i-\mathbb{E}[X_i\mid Y_{i,t}]
\right)^2
\right]
=
\operatorname{mmse}(X_i\mid Y_{i,t}).
\end{equation*}
Thus, each covariance eigenvalue represents the minimum posterior estimation
risk in its corresponding principal direction.

Since \(\lambda/(1+s\lambda)\) is strictly increasing in \(\lambda\),
\begin{align*}
v_{\max}(t)
&=
\frac{1}
{1/\lambda_{\max}^{\mathrm{data}}+s_t},\\
v_{\min}(t)
&=
\frac{1}
{1/\lambda_{\min}^{\mathrm{data}}+s_t}.
\end{align*}
The condition number is therefore
\begin{align*}
\kappa(t)
&=
\frac{v_{\max}(t)}{v_{\min}(t)}\\
&=
\frac{
1/\lambda_{\min}^{\mathrm{data}}+s_t
}{
1/\lambda_{\max}^{\mathrm{data}}+s_t
}\\
&=
1+
\left(
\frac{1}{\lambda_{\min}^{\mathrm{data}}}
-
\frac{1}{\lambda_{\max}^{\mathrm{data}}}
\right)
v_{\max}(t).
\end{align*}
Defining
\begin{equation*}
c_{\Sigma}
=
\frac{1}{\lambda_{\min}^{\mathrm{data}}}
-
\frac{1}{\lambda_{\max}^{\mathrm{data}}},
\end{equation*}
the central second difference satisfies
\begin{equation}
\Delta^2\kappa(t)
=
c_{\Sigma}\Delta^2v_{\max}(t).
\label{eq:second_difference_mmse}
\end{equation}
When
\(\lambda_{\max}^{\mathrm{data}}>
\lambda_{\min}^{\mathrm{data}}\),
\(c_{\Sigma}>0\), and consequently
\begin{equation*}
\arg\max_t\Delta^2\kappa(t)
=
\arg\max_t\Delta^2v_{\max}(t).
\end{equation*}
The second-difference condition-number criterion is therefore exactly a
curvature criterion for the principal-direction Bayes risk. The maximizing
timestep depends on \(\lambda_{\max}^{\mathrm{data}}\) and the diffusion
schedule, whereas \(\lambda_{\min}^{\mathrm{data}}\) only multiplies the
second-difference curve by the positive factor \(c_{\Sigma}\). If the data
covariance is isotropic, then \(\kappa(t)\equiv1\), and the criterion contains
no unique truncation information.

For a monotone diffusion schedule, let
\begin{equation*}
\delta s_t
=
s_t-s_{t+1}
>0
\end{equation*}
be the precision added by the reverse step \(t+1\rightarrow t\). The
corresponding reduction in the principal-direction Bayes risk is
\begin{align*}
G_t
&=
v_{\max}(t+1)-v_{\max}(t)\\
&=
\frac{\delta s_t}{
\left(
1/\lambda_{\max}^{\mathrm{data}}+s_{t+1}
\right)
\left(
1/\lambda_{\max}^{\mathrm{data}}+s_t
\right)
}\\
&=
\delta s_t\,
v_{\max}(t+1)v_{\max}(t).
\end{align*}
Moreover,
\begin{equation}
\Delta^2v_{\max}(t)
=
G_t-G_{t-1}.
\label{eq:risk_gain_change}
\end{equation}
Combining Eqs.~\eqref{eq:second_difference_mmse} and
\eqref{eq:risk_gain_change} gives
\begin{equation*}
\Delta^2\kappa(t)
=
c_{\Sigma}
\left(
G_t-G_{t-1}
\right).
\end{equation*}
Thus,
\begin{equation*}
t^*
=
\arg\max_t\Delta^2\kappa(t)
\end{equation*}
selects the point at which the statistical value of two adjacent reverse
steps changes most strongly. The criterion is therefore a
schedule-dependent Bayes-risk change point. The offset used by the algorithm,
\begin{equation*}
t_{\mathrm{stop}}
=
\min(t^*+\delta,t_{\max}),
\end{equation*}
acts as a deterministic margin around this detected change point and does not
alter the preceding equivalence for \(t^*\).

The result also has an information-theoretic interpretation. In the
principal scalar Gaussian channel,
\begin{equation*}
I_{\max}(s)
=
\frac{1}{2}
\log
\left(
1+s\lambda_{\max}^{\mathrm{data}}
\right),
\end{equation*}
and
\begin{equation*}
\frac{\mathrm{d}I_{\max}(s)}{\mathrm{d}s}
=
\frac{1}{2}
\frac{\lambda_{\max}^{\mathrm{data}}}
{1+s\lambda_{\max}^{\mathrm{data}}}
=
\frac{1}{2}v_{\max}(s).
\end{equation*}
Hence, \(v_{\max}(s)\) is proportional to the marginal information gain per
unit observation precision. The second-difference criterion detects where
this marginal value changes most sharply along the discrete noise schedule;
it does not maximize the total mutual information over all directions.

To relate this risk to mode selection without assuming a finite Gaussian
mixture, introduce an arbitrary latent mode variable \(Z\). The law of total
covariance gives
\begin{align}
\operatorname{Cov}
\left(
\mathbf{U}\mid\mathbf{Y}_t
\right)
&=
\mathbb{E}
\left[
\operatorname{Cov}
\left(
\mathbf{U}\mid Z,\mathbf{Y}_t
\right)
\middle|
\mathbf{Y}_t
\right]
\nonumber\\
&\quad+
\operatorname{Cov}
\left(
\mathbb{E}
\left[
\mathbf{U}\mid Z,\mathbf{Y}_t
\right]
\middle|
\mathbf{Y}_t
\right).
\label{eq:total_modal_covariance}
\end{align}
The first term measures residual within-mode uncertainty, while the second
measures uncertainty caused by the unresolved mode identity. This
decomposition is valid for continuous or discrete \(Z\), and therefore does
not require a finite Gaussian-mixture representation.

If the between-mode term in
Eq.~\eqref{eq:total_modal_covariance} is concentrated mainly in the leading
eigendirection, \(v_{\max}(t)\) is a second-order proxy for unresolved mode
uncertainty. Under this additional structural condition, \(t^*\) marks the
largest change in the rate of mode-uncertainty reduction. Without that
condition, the exact interpretation remains the principal-direction
Bayes-risk change point, but the condition number alone does not certify
unimodality or complete mode separation.

\subsection{Gaussian Information Projection and B\'ezier Propagation}
\label{app:gaussian_prior_projection}

Let \(\pi_{\mathrm{stop}}^U\) denote the conditional distribution of the
clean control sequence associated with the realized reverse-diffusion branch
at \(t_{\mathrm{stop}}\). Assume that it admits a density with finite
differential entropy and finite second moment, and define
\begin{equation*}
\mathbf{m}_U
=
\mathbb{E}_{\pi_{\mathrm{stop}}^U}[\mathbf{U}],
\qquad
\mathbf{S}_U
=
\operatorname{Cov}_{\pi_{\mathrm{stop}}^U}(\mathbf{U}),
\end{equation*}
with \(\mathbf{S}_U\succ0\).

For an arbitrary Gaussian
\(q=\mathcal{N}(\boldsymbol{\mu},\boldsymbol{\Sigma})\), the terms in the
forward KL divergence that depend on the Gaussian parameters are
\begin{align*}
D_{\mathrm{KL}}
\left(
\pi_{\mathrm{stop}}^U\|q
\right)
&=
c_{\pi}
+
\frac{1}{2}\log\det\boldsymbol{\Sigma}
\nonumber\\
&\quad+
\frac{1}{2}
\mathbb{E}_{\pi_{\mathrm{stop}}^U}
\left[
(\mathbf{U}-\boldsymbol{\mu})^\top
\boldsymbol{\Sigma}^{-1}
(\mathbf{U}-\boldsymbol{\mu})
\right],
\end{align*}
where \(c_{\pi}\) is independent of
\((\boldsymbol{\mu},\boldsymbol{\Sigma})\). The quadratic expectation can be
written as
\begin{align*}
&
\mathbb{E}_{\pi_{\mathrm{stop}}^U}
\left[
(\mathbf{U}-\boldsymbol{\mu})^\top
\boldsymbol{\Sigma}^{-1}
(\mathbf{U}-\boldsymbol{\mu})
\right]
\nonumber\\
&=
\operatorname{tr}
\left(
\boldsymbol{\Sigma}^{-1}\mathbf{S}_U
\right)
+
(\mathbf{m}_U-\boldsymbol{\mu})^\top
\boldsymbol{\Sigma}^{-1}
(\mathbf{m}_U-\boldsymbol{\mu}).
\end{align*}
For fixed \(\boldsymbol{\Sigma}\), this expression is minimized by
\begin{equation*}
\boldsymbol{\mu}
=
\mathbf{m}_U.
\end{equation*}
After substitution, stationarity with respect to
\(\boldsymbol{\Sigma}\) gives
\begin{equation*}
\boldsymbol{\Sigma}^{-1}
-
\boldsymbol{\Sigma}^{-1}
\mathbf{S}_U
\boldsymbol{\Sigma}^{-1}
=
\mathbf{0},
\end{equation*}
whose unique positive-definite solution is
\begin{equation*}
\boldsymbol{\Sigma}
=
\mathbf{S}_U.
\end{equation*}
Consequently,
\begin{equation}
\arg\min_{q\in\mathcal{G}_D}
D_{\mathrm{KL}}
\left(
\pi_{\mathrm{stop}}^U\|q
\right)
=
\mathcal{N}(\mathbf{m}_U,\mathbf{S}_U).
\label{eq:forward_kl_moment_projection}
\end{equation}
Thus, after choosing the Gaussian approximation family, moment matching is
the exact forward-KL information projection rather than an additional
heuristic.

With an exact conditional score, Tweedie's identity returns
\(\mathbf{m}_U\). In the implemented method, the learned noise predictor in
Eq.~\eqref{eq:tweedie} supplies its estimate
\(\boldsymbol{\mu}_U\), while
Eq.~\eqref{eq:trunc_cov_recursion} supplies the Gaussian second-order
estimate \(\boldsymbol{\Sigma}_U\). The implemented approximation is
therefore
\begin{equation*}
\widehat p_U
=
\mathcal{N}
\left(
\boldsymbol{\mu}_U,
\boldsymbol{\Sigma}_U
\right).
\end{equation*}

Now consider the linear B\'ezier transformation
\begin{equation*}
\mathbf{B}
=
\mathbf{A}\mathbf{U},
\end{equation*}
where \(\mathbf{A}\) has full row rank. The Gaussian pushforward is
\begin{equation*}
\widehat p_B
=
\mathcal{N}
\left(
\mathbf{A}\boldsymbol{\mu}_U,
\mathbf{A}\boldsymbol{\Sigma}_U\mathbf{A}^{\top}
\right).
\end{equation*}
Meanwhile, the exact pushforward of
\(\pi_{\mathrm{stop}}^U\) satisfies
\begin{align*}
\mathbb{E}[\mathbf{B}]
&=
\mathbf{A}\mathbf{m}_U,\\
\operatorname{Cov}(\mathbf{B})
&=
\mathbf{A}\mathbf{S}_U\mathbf{A}^{\top}.
\end{align*}
Hence, forward-KL Gaussian moment projection commutes with the linear
B\'ezier transformation: projecting before propagation and projecting after
propagation yield the same Gaussian moments.

The fixed linear map also satisfies
\begin{align*}
\lambda_{\max}
\left(
\boldsymbol{\Sigma}_B
\right)
&\leq
\sigma_{\max}(\mathbf{A})^2
\lambda_{\max}
\left(
\boldsymbol{\Sigma}_U
\right),\\
\lambda_{\min}
\left(
\boldsymbol{\Sigma}_B
\right)
&\geq
\sigma_{\min}(\mathbf{A})^2
\lambda_{\min}
\left(
\boldsymbol{\Sigma}_U
\right).
\end{align*}
It follows that
\begin{equation}
\kappa_2
\left(
\boldsymbol{\Sigma}_B
\right)
\leq
\kappa_2(\mathbf{A})^2
\kappa_2
\left(
\boldsymbol{\Sigma}_U
\right).
\label{eq:bezier_condition_bound}
\end{equation}
Therefore, the diffusion truncation controls the variable part of the prior
conditioning, while the fixed B\'ezier parameterization contributes only the
multiplicative geometric factor \(\kappa_2(\mathbf{A})^2\).

Under the initialization in Algorithm~\ref{alg:cmaes_posterior},
\begin{equation*}
\mathbf{V}^{(0)}
=
(\sigma^{(0)})^2\mathbf{C}^{(0)}
=
(\sigma^{(0)})^2\boldsymbol{\Sigma}_B.
\end{equation*}
Consequently,
\begin{equation*}
\kappa_2
\left(
\mathbf{V}^{(0)}
\right)
=
\kappa_2
\left(
\boldsymbol{\Sigma}_B
\right).
\end{equation*}
The global step size changes the initial search radius but not its
anisotropy. The prior covariance therefore provides the initial
problem-dependent geometry of CMA-ES.

\subsection{Specialization of Complete CMA-ES to the TDP-SafeMAP Posterior}
\label{app:complete_cma_specialization}

For the Gaussian prior in Eq.~\eqref{eq:projected_prior_theory}, the Gibbs
posterior is
\begin{align*}
p_\tau(\mathbf{B}\mid\mathcal{S})
&=
\frac{1}{Z_\tau}
\exp[-\tau J_{\mathrm{task}}(\mathbf{B})]
\mathcal{N}
\left(
\mathbf{B}\mid
\boldsymbol{\mu}_B,
\boldsymbol{\Sigma}_B
\right).
\end{align*}
Taking its negative logarithm gives
\begin{align}
-\log p_\tau(\mathbf{B}\mid\mathcal{S})
&=
\tau J_{\mathrm{task}}(\mathbf{B})
+
\frac{1}{2}
(\mathbf{B}-\boldsymbol{\mu}_B)^\top
\boldsymbol{\Sigma}_B^{-1}
(\mathbf{B}-\boldsymbol{\mu}_B)
\nonumber\\
&\quad+
\log Z_\tau
+
\frac{D'}{2}\log(2\pi)
+
\frac{1}{2}\log\det\boldsymbol{\Sigma}_B\\
&=
f(\mathbf{B})+c_\tau,
\label{eq:posterior_fitness_exact}
\end{align}
where
\begin{equation*}
c_\tau
=
\log Z_\tau
+
\frac{D'}{2}\log(2\pi)
+
\frac{1}{2}\log\det\boldsymbol{\Sigma}_B
\end{equation*}
does not depend on \(\mathbf{B}\). Therefore,
\begin{equation}
\arg\min_{\mathbf{B}}f(\mathbf{B})
=
\arg\max_{\mathbf{B}}
p_\tau(\mathbf{B}\mid\mathcal{S}).
\label{eq:cma_map_target_equivalence}
\end{equation}

More strongly, the complete candidate ordering is preserved. For any two
candidates,
\begin{align*}
f(\mathbf{B}_i)<f(\mathbf{B}_j)
&\Longleftrightarrow
-\log p_\tau(\mathbf{B}_i\mid\mathcal{S})
<
-\log p_\tau(\mathbf{B}_j\mid\mathcal{S})\\
&\Longleftrightarrow
p_\tau(\mathbf{B}_i\mid\mathcal{S})
>
p_\tau(\mathbf{B}_j\mid\mathcal{S}).
\end{align*}
This proves Eq.~\eqref{eq:fitness_posterior_order}.

To express the same result at the population level, define the fitness
quantile under the current CMA-ES search distribution by
\begin{equation*}
R_g(\mathbf{B})
=
\Pr_{\mathbf{B}'\sim q_g^{\mathrm{CMA}}}
\left[
f(\mathbf{B}')\leq f(\mathbf{B})
\right].
\end{equation*}
Using Eq.~\eqref{eq:posterior_fitness_exact},
\begin{equation}
R_g(\mathbf{B})
=
\Pr_{\mathbf{B}'\sim q_g^{\mathrm{CMA}}}
\left[
p_\tau(\mathbf{B}'\mid\mathcal{S})
\geq
p_\tau(\mathbf{B}\mid\mathcal{S})
\right].
\label{eq:posterior_density_quantile}
\end{equation}
Thus, every rank utility of the form
\begin{equation*}
W_g(\mathbf{B})
=
w\!\left(R_g(\mathbf{B})\right)
\end{equation*}
is equivalently a utility of the posterior-density quantile. Neither
\(Z_\tau\) nor the value of the posterior density itself is required.

This equivalence applies to the complete CMA-ES recursion used in the paper.
The weighted mean update and rank-\(\mu\) covariance update depend directly
on the ordered candidates. The covariance evolution path
\(\mathbf{p}_c\) accumulates the corresponding selected mean displacements,
so the rank-one term also depends on the same posterior ordering. Likewise,
the step-size path \(\mathbf{p}_\sigma\) accumulates whitened selected
displacements. Therefore, replacing fitness ranking by decreasing
\(p_\tau(\mathbf{B}\mid\mathcal{S})\) would leave every generated CMA-ES
update unchanged, apart from the treatment of exact ties.

The information-geometric interpretations of these update components are
established results and are not rederived here. For a multivariate Gaussian
search family, the weighted mean and pure rank-\(\mu\) covariance updates
coincide with sample-based Gaussian IGO updates
\cite{akimoto2012theoretical,ollivier2017information}.

The rank-one term in Eq.~\eqref{eq:cmaes_cov} is covered by the MAP-IGO
interpretation of Hamano et al.~\cite{hamano2024natural}. In that
interpretation, a generation-dependent normal-inverse-Wishart prior is
introduced over the search-distribution parameters
\begin{equation*}
\boldsymbol{\eta}_g
=
\left(
\mathbf{m}^{(g)},
\mathbf{C}^{(g)}
\right),
\end{equation*}
and its mean component is placed along the covariance evolution path. The
resulting MAP-IGO covariance recursion reproduces the combination of the
rank-one and rank-\(\mu\) terms in Eq.~\eqref{eq:cmaes_cov}. For a finite
prior-distance parameter \(r\), the same construction additionally produces
the mean increment
\begin{equation}
\Delta\mathbf{m}_{\mathrm{mom}}^{(g)}
=
\frac{c_m c_1}{r c_\mu}
\sigma^{(g)}
\mathbf{p}_c^{(g+1)}.
\label{eq:map_igo_extra_momentum}
\end{equation}
The algorithm in this paper does not use this additional momentum term.
Instead, it corresponds to the non-informative mean-prior limit
\(r\rightarrow\infty\), for which
\begin{equation*}
\Delta\mathbf{m}_{\mathrm{mom}}^{(g)}
\longrightarrow
\mathbf{0},
\end{equation*}
while the rank-one covariance contribution remains in the standard CMA-ES
update. Thus, the complete covariance adaptation used by TDP-SafeMAP,
including its rank-one term, is consistent with the established MAP-IGO
interpretation.

Cumulative step-size adaptation has a different but complementary
statistical role. The path \(\mathbf{p}_\sigma\) accumulates mean
displacements after whitening by the current covariance. Its norm is
compared with
\begin{equation*}
\chi_{D'}
=
\mathbb{E}
\left[
\left\|
\mathcal{N}(\mathbf{0},\mathbf{I}_{D'})
\right\|
\right].
\end{equation*}
A path substantially longer than \(\chi_{D'}\) indicates persistent
alignment of consecutive standardized steps and increases the global search
scale, whereas a shorter path decreases it
\cite{hansen2001completely,HANSEN20115755}. CSA is therefore a standardized
path-length feedback mechanism; it is not claimed to follow from the same
MAP-IGO construction used to interpret the rank-one covariance term.

Finally, the two priors appearing in this discussion must be distinguished.
The diffusion prior
\begin{equation*}
p_{\mathrm{pri}}(\mathbf{B})
=
\mathcal{N}
\left(
\boldsymbol{\mu}_B,
\boldsymbol{\Sigma}_B
\right)
\end{equation*}
is a fixed probability model for the planning variable \(\mathbf{B}\) and
enters the Gibbs posterior. By contrast, the normal-inverse-Wishart prior in
the MAP-IGO interpretation is a generation-dependent auxiliary distribution
over the optimizer parameters
\((\mathbf{m},\mathbf{C})\). It explains the statistical role of the
rank-one update but is not an additional planning prior and is not identified
with \(p_{\mathrm{pri}}\).

The initial member of the search sequence is
\begin{equation*}
q_0^{\mathrm{CMA}}(\mathbf{B})
=
\mathcal{N}
\left(
\mathbf{B}\mid
\boldsymbol{\mu}_B,
(\sigma^{(0)})^2\boldsymbol{\Sigma}_B
\right).
\end{equation*}
It has the same mean and covariance eigenspaces as the diffusion prior, but
\begin{equation*}
q_0^{\mathrm{CMA}}
=
p_{\mathrm{pri}}
\end{equation*}
holds only when \(\sigma^{(0)}=1\). For later generations, the search
distribution is adapted to locate
\begin{equation*}
\mathbf{B}^{*}
=
\arg\max_{\mathbf{B}}
p_\tau(\mathbf{B}\mid\mathcal{S}),
\end{equation*}
but generally
\begin{equation*}
q_g^{\mathrm{CMA}}(\mathbf{B})
\neq
p_\tau(\mathbf{B}\mid\mathcal{S}).
\end{equation*}
Hence, the output mean of CMA-ES is interpreted as a numerical approximation
of the posterior mode, not as a Gaussian approximation of the complete Gibbs
posterior.

\section{Proofs for Prior-Whitened MAP Equivalence}

\subsection{Objective and Mode Equivalence}
\label{app:proof_whitened_map}

From
\(\boldsymbol{\Sigma}_B
=\mathbf{L}_B\mathbf{L}_B^\top\)
and
\(\mathbf{B}
=\boldsymbol{\mu}_B+\mathbf{L}_B\mathbf{b}\),
the Mahalanobis term in Eq.~\eqref{eq:fitness_posterior} becomes
\begin{align*}
&
\left(
\mathbf{B}-\boldsymbol{\mu}_B
\right)^\top
\boldsymbol{\Sigma}_B^{-1}
\left(
\mathbf{B}-\boldsymbol{\mu}_B
\right)
\nonumber\\
&=
\mathbf{b}^\top
\mathbf{L}_B^\top
\mathbf{L}_B^{-\top}
\mathbf{L}_B^{-1}
\mathbf{L}_B
\mathbf{b}
=
\|\mathbf{b}\|^2.
\end{align*}
Therefore,
\begin{equation*}
f
\left(
\boldsymbol{\mu}_B+\mathbf{L}_B\mathbf{b}
\right)
=
\tau
J_{\mathrm{task}}
\left(
\boldsymbol{\mu}_B+\mathbf{L}_B\mathbf{b}
\right)
+
\frac{1}{2}\|\mathbf{b}\|^2
=
f_{\mathrm{w}}(\mathbf{b}).
\end{equation*}
Since \(\mathbf{L}_B\) is nonsingular,
\(\mathcal{T}_B\) is a bijection, which proves
Eq.~\eqref{eq:whitened_map_equivalence}.

For completeness, the transformed posterior density is
\begin{equation*}
p_{\tau,\mathrm{w}}
\left(
\mathbf{b}\mid\mathcal{S}
\right)
=
p_\tau
\left(
\mathcal{T}_B(\mathbf{b})\mid\mathcal{S}
\right)
\left|\det\mathbf{L}_B\right|.
\end{equation*}
Taking the negative logarithm gives
\begin{equation*}
-\log p_{\tau,\mathrm{w}}
\left(
\mathbf{b}\mid\mathcal{S}
\right)
=
-\log p_\tau
\left(
\mathcal{T}_B(\mathbf{b})\mid\mathcal{S}
\right)
-
\log\left|\det\mathbf{L}_B\right|.
\end{equation*}
The Jacobian term is constant in \(\mathbf{b}\); hence the posterior modes correspond exactly. This constant-Jacobian property is essential because MAP modes are not invariant under a general nonlinear reparameterization.

\subsection{Transformation of the Prior and Initial Search Law}

If
\(\mathbf{B}\sim
\mathcal{N}(\boldsymbol{\mu}_B,\boldsymbol{\Sigma}_B)\),
then
\begin{align*}
\mathbb{E}[\mathbf{b}]
&=
\mathbf{L}_B^{-1}
\left(
\mathbb{E}[\mathbf{B}]
-\boldsymbol{\mu}_B
\right)
=
\mathbf{0},\\
\operatorname{Cov}(\mathbf{b})
&=
\mathbf{L}_B^{-1}
\boldsymbol{\Sigma}_B
\mathbf{L}_B^{-\top}
=
\mathbf{I}.
\end{align*}
Thus, the diffusion prior becomes a standard Gaussian in the whitened coordinates.

The CMA-ES sampling covariance is the total covariance
\(\mathbf{V}^{(g)}
=(\sigma^{(g)})^2\mathbf{C}^{(g)}\).
The prior-matching initialization gives
\begin{equation*}
\mathbf{V}^{(0)}
=
\boldsymbol{\Sigma}_B,
\qquad
\mathbf{C}^{(0)}
=
\frac{
\boldsymbol{\Sigma}_B
}{
(\sigma^{(0)})^2
}.
\end{equation*}
Consequently, its whitened total covariance and normalized covariance factor are
\begin{equation*}
\mathbf{V}_{\mathrm{w}}^{(0)}
=
\mathbf{L}_B^{-1}
\mathbf{V}^{(0)}
\mathbf{L}_B^{-\top}
=
\mathbf{I},
\qquad
\mathbf{C}_{\mathrm{w}}^{(0)}
=
\frac{\mathbf{I}}{(\sigma^{(0)})^2}.
\end{equation*}
Therefore, Eq.~\eqref{eq:whitened_initial_search} holds for every
\(\sigma^{(0)}>0\). Changing \(\sigma^{(0)}\) while compensating
\(\mathbf{C}^{(0)}\) changes only the internal factorization, not the initial sampling law.

\subsection{Equivalence of KL-Constrained Search Updates}

Let
\(q_{\mathrm{w}}\)
and
\(q_{\mathrm{w}}'\)
be two Gaussian search densities in the whitened coordinates, and define their pushforwards by
\begin{equation*}
q_B
=
(\mathcal{T}_B)_{\#}q_{\mathrm{w}},
\qquad
q_B'
=
(\mathcal{T}_B)_{\#}q_{\mathrm{w}}'.
\end{equation*}
Using
\(\mathbf{B}=\mathcal{T}_B(\mathbf{b})\)
and
\(\mathrm{d}\mathbf{B}
=|\det\mathbf{L}_B|\mathrm{d}\mathbf{b}\),
we obtain
\begin{align}
D_{\mathrm{KL}}(q_B\|q_B')
&=
\int
q_B(\mathbf{B})
\log
\frac{q_B(\mathbf{B})}{q_B'(\mathbf{B})}
\,\mathrm{d}\mathbf{B}
\nonumber\\
&=
\int
q_{\mathrm{w}}(\mathbf{b})
\log
\frac{
q_{\mathrm{w}}(\mathbf{b})
}{
q_{\mathrm{w}}'(\mathbf{b})
}
\,\mathrm{d}\mathbf{b}
\nonumber\\
&=
D_{\mathrm{KL}}
\left(
q_{\mathrm{w}}\|q_{\mathrm{w}}'
\right).
\label{eq:affine_kl_invariance}
\end{align}

Since
\(f_{\mathrm{w}}(\mathbf{b})
=f(\mathcal{T}_B(\mathbf{b}))\),
the rank of every corresponding sample is unchanged. Therefore, for the rank utility \(W_g\) defined in Section~\ref{sec:cma_compatibility},
\begin{align}
&
\mathbb{E}_{\mathbf{B}\sim q_B}
\left[
W_g(\mathbf{B})
\right]
\nonumber\\
&\qquad=
\mathbb{E}_{\mathbf{b}\sim q_{\mathrm{w}}}
\left[
W_g
\left(
\mathcal{T}_B(\mathbf{b})
\right)
\right].
\label{eq:affine_utility_invariance}
\end{align}
Equations~\eqref{eq:affine_kl_invariance} and
\eqref{eq:affine_utility_invariance} show that both the objective and constraint of the local KL-constrained update are invariant under \(\mathcal{T}_B\). Hence, their Gaussian natural-gradient update directions are mapped into each other by the same affine transformation, proving Eq.~\eqref{eq:whitened_search_equivalence}. This exact statement applies to the rank-based natural-gradient core; equivalence of a complete CMA-ES implementation additionally requires its evolution paths, step-size state, stopping criteria, and numerical safeguards to be transformed consistently.

\subsection{Residual Curvature after Whitening}

Because \(\mathcal{T}_B\) is affine, the chain rule gives
\begin{equation*}
\nabla_{\mathbf{b}}^2
f_{\mathrm{w}}(\mathbf{b})
=
\mathbf{L}_B^\top
\nabla_{\mathbf{B}}^2
f(\mathbf{B})
\mathbf{L}_B.
\end{equation*}
Moreover,
\begin{equation*}
\nabla_{\mathbf{B}}^2f(\mathbf{B})
=
\tau
\nabla_{\mathbf{B}}^2J_{\mathrm{task}}(\mathbf{B})
+
\boldsymbol{\Sigma}_B^{-1}.
\end{equation*}
Since
\begin{equation*}
\mathbf{L}_B^\top
\boldsymbol{\Sigma}_B^{-1}
\mathbf{L}_B
=
\mathbf{I},
\end{equation*}
we obtain Eq.~\eqref{eq:whitened_hessian}.

Let
\begin{equation*}
\mathbf{H}_{\mathrm{task,w}}
=
\mathbf{L}_B^\top
\nabla_{\mathbf{B}}^2J_{\mathrm{task}}
\mathbf{L}_B.
\end{equation*}
If
\(\mathbf{H}_{\mathrm{task,w}}\succeq0\),
then
\begin{equation*}
\lambda_i
\left(
\nabla_{\mathbf{b}}^2f_{\mathrm{w}}
\right)
=
1+\tau
\lambda_i
\left(
\mathbf{H}_{\mathrm{task,w}}
\right).
\end{equation*}
Taking the ratio between the largest and smallest eigenvalues proves
Eq.~\eqref{eq:whitened_condition_number}.

\subsection{Positive-Semidefinite Prior Covariance}

If the unregularized covariance has rank \(r<D'\), write
\begin{equation*}
\boldsymbol{\Sigma}_B
=
\mathbf{L}_{B,r}\mathbf{L}_{B,r}^\top,
\qquad
\mathbf{L}_{B,r}
\in
\mathbb{R}^{D'\times r}.
\end{equation*}
The exact reduced-coordinate formulation is
\begin{equation*}
\mathbf{B}
=
\boldsymbol{\mu}_B
+
\mathbf{L}_{B,r}\mathbf{b}_r,
\qquad
\mathbf{b}_r\in\mathbb{R}^{r}.
\end{equation*}
On this affine support, the prior has an \(r\)-dimensional standard-normal coordinate density and the MAP objective becomes
\begin{equation}
\min_{\mathbf{b}_r\in\mathbb{R}^{r}}
\left[
\tau
J_{\mathrm{task}}
\left(
\boldsymbol{\mu}_B
+
\mathbf{L}_{B,r}\mathbf{b}_r
\right)
+
\frac{1}{2}
\|\mathbf{b}_r\|^2
\right].
\end{equation}
This is the exact extension for a singular prior. Alternatively, one may define
\begin{equation}
\boldsymbol{\Sigma}_{B,\epsilon}
=
\boldsymbol{\Sigma}_B+\epsilon\mathbf{I},
\qquad
\epsilon>0,
\end{equation}
and apply the full-dimensional derivation to
\(\boldsymbol{\Sigma}_{B,\epsilon}\).
The latter is numerically convenient but changes the prior and therefore constitutes a regularized approximation rather than an exact reparameterization of the original singular model.

\bibliographystyle{IEEEtran}
\bibliography{bibliography}

@ARTICLE{Gu2026,
  author={Gu, Zhaoyuan and Li, Junheng and Shen, Wenlan and Yu, Wenhao and Xie, Zhaoming and McCrory, Stephen and Cheng, Xianyi and Shamsah, Abdulaziz and Griffin, Robert and Liu, C. Karen and Kheddar, Abderrahmane and Peng, Xue Bin and Zhu, Yuke and Shi, Guanya and Nguyen, Quan and Cheng, Gordon and Gao, Huijun and Zhao, Ye},
  journal={IEEE/ASME Transactions on Mechatronics},
  title={Humanoid Locomotion and Manipulation: Current Progress and Challenges in Control, Planning, and Learning},
  year={2026},
  volume={31},
  number={2},
  pages={2300--2330},
  doi={10.1109/TMECH.2025.3579247}
}

@article{Hu2026,
  title = {A review of learning-based motion planning: toward a data-driven optimal control approach},
  journal = {Transportation Research Part C: Emerging Technologies},
  volume = {190},
  pages = {105767},
  year = {2026},
  issn = {0968-090X},
  doi = {10.1016/j.trc.2026.105767},
  author = {Hu, Jia and Chang, Yang and Wang, Haoran}
}

@article{HangYin2021,
  title={Modeling, learning, perception, and control methods for deformable object manipulation},
  author={Yin, Hang and Varava, Anastasia and Kragic, Danica},
  journal={Science Robotics},
  volume={6},
  number={54},
  pages={eabd8803},
  year={2021},
  publisher={American Association for the Advancement of Science}
}

@ARTICLE{11244855,
  author={Urain, Julen and Mandlekar, Ajay and Du, Yilun and Muhammad “Mahi” Shafiullah, Nur and Xu, Danfei and Fragkiadaki, Katerina and Chalvatzaki, Georgia and Peters, Jan},
  journal={IEEE Transactions on Robotics}, 
  title={A Survey on Deep Generative Models for Robot Learning From Multimodal Demonstrations}, 
  year={2026},
  volume={42},
  number={},
  pages={60-79},
  doi={10.1109/TRO.2025.3631816}
}

@inproceedings{janner2022diffuser,
  address = {Baltimore, MD, USA},
  title = {Planning with Diffusion for Flexible Behavior Synthesis},
  author = {Michael Janner and Yilun Du and Joshua Tenenbaum and Sergey Levine},
  booktitle = {International Conference on Machine Learning},
  pages = {9902--9915},
  year = {2022}
  
}

@article{chi2025diffusion,
  title={Diffusion policy: Visuomotor policy learning via action diffusion},
  author={Chi, Cheng and Xu, Zhenjia and Feng, Siyuan and Cousineau, Eric and Du, Yilun and Burchfiel, Benjamin and Tedrake, Russ and Song, Shuran},
  journal={The International Journal of Robotics Research},
  volume={44},
  number={10-11},
  pages={1684--1704},
  year={2025},
  publisher={Sage Publications Sage UK: London, England}
}

@article{liu2026diffusion,
  title={Diffusion models in robotics: A survey},
  author={Liu, Xiaokang and Ma, Kevin Yuchen and Gao, Chen and Shou, Mike Zheng},
  journal={International Journal of Computer Vision},
  volume={134},
  number={6},
  pages={299},
  year={2026},
  publisher={Springer}
}

@article{JIN2026115994,
  title={A survey of imitation learning for autonomous driving from an embodied intelligence perspective},
  author={Jin, Chenglong and Zhang, Hua and Zou, Mei and Li, Yang and Lin, Jiawei and Sun, Yanzhao},
  journal={Engineering Applications of Artificial Intelligence},
  volume={182},
  pages={115994},
  year={2026},
  publisher={Elsevier}
}

@ARTICLE{11230786,
  author={Wang, Zengmao and Tang, Qifei and Gao, Wei and Shen, Shuhan},
  journal={IEEE Robotics and Automation Letters}, 
  title={SRIF: A Safer Ranking Inference Framework for Diffusion Policy Models Without Retraining}, 
  year={2026},
  volume={11},
  number={1},
  pages={25-32},
  doi={10.1109/LRA.2025.3629965}
  }

@article{KAZIM2024100931,
  title={Recent advances in path integral control for trajectory optimization: An overview in theoretical and algorithmic perspectives},
  author={Kazim, Muhammad and Hong, JunGee and Kim, Min-Gyeom and Kim, Kwang-Ki K},
  journal={Annual Reviews in Control},
  volume={57},
  pages={100931},
  year={2024},
  publisher={Elsevier}
}

@ARTICLE{10500723,
  author={Zhou, MengChu and Cui, Meiji and Xu, Dian and Zhu, Shuwei and Zhao, Ziyan and Abusorrah, Abdullah},
  journal={IEEE/CAA Journal of Automatica Sinica}, 
  title={Evolutionary Optimization Methods for High-Dimensional Expensive Problems: A Survey}, 
  year={2024},
  volume={11},
  number={5},
  pages={1092-1105},
  doi={10.1109/JAS.2024.124320}
}

@INPROCEEDINGS{11247412,
  author={Lee, Giwon and Park, Daehee and Jeong, Jaewoo and Yoon, Kuk-Jin},
  booktitle={IEEE/RSJ International Conference on Intelligent Robots and Systems (IROS)}, 
  title={Non-differentiable Reward Optimization for Diffusion-based Autonomous Motion Planning}, 
  address = {Hangzhou, China},
  year={2025},
  volume={},
  number={},
  pages={13215-13222},
  doi={10.1109/IROS60139.2025.11247412}
}

@ARTICLE{10057172,
  author={Li, Boqian and Wen, Shiping and Yan, Zheng and Wen, Guanghui and Huang, Tingwen},
  journal={IEEE/CAA Journal of Automatica Sinica}, 
  title={A Survey on the Control Lyapunov Function and Control Barrier Function for Nonlinear-Affine Control Systems}, 
  year={2023},
  volume={10},
  number={3},
  pages={584-602},
  doi={10.1109/JAS.2023.123075}
}

@ARTICLE{11267490,
  author={Zhao, Shanshan and Yan, Zheng and Huang, Tingwen and Wen, Shiping},
  journal={IEEE Transactions on Cybernetics}, 
  title={A Comprehensive Review on Control Barrier Functions: Uncertainty Handling, Design Optimization, and Feasibility Analysis}, 
  year={2026},
  volume={56},
  number={4},
  pages={2061-2069},
  doi={10.1109/TCYB.2025.3633800}
}

@ARTICLE{10171460,
  author={Zhang, Junhui and Ding, Lizhi and Lu, Xiaonan and Tang, Wenyuan},
  journal={IEEE Transactions on Industry Applications}, 
  title={A Novel Real-Time Control Approach for Sparse and Safe Frequency Regulation in Inverter Intensive Microgrids}, 
  year={2023},
  volume={59},
  number={5},
  pages={5550-5558},
  doi={10.1109/TIA.2023.3291353}
}

@ARTICLE{7782377,
  author={Ames, Aaron D. and Xu, Xiangru and Grizzle, Jessy W. and Tabuada, Paulo},
  journal={IEEE Transactions on Automatic Control}, 
  title={Control Barrier Function Based Quadratic Programs for Safety Critical Systems}, 
  year={2017},
  volume={62},
  number={8},
  pages={3861-3876},
  doi={10.1109/TAC.2016.2638961}
}

@inproceedings{agrawal2017discrete,
  title={Discrete control barrier functions for safety-critical control of discrete systems with application to bipedal robot navigation.},
  author={Agrawal, Ayush and Sreenath, Koushil},
  booktitle={Robotics: Science and systems},
  pages={1--10},
  year={2017},
  address={Cambridge, MA, USA}
}

@ARTICLE{11007767,
  author={Zhang, Junhui and Hoxha, Bardh and Fainekos, Georgios and Panagou, Dimitra},
  journal={IEEE Control Systems Letters}, 
  title={Conformal Prediction in the Loop: Risk-Aware Control Barrier Functions for Stochastic Systems With Data-Driven State Estimators}, 
  year={2025},
  volume={9},
  number={},
  pages={282-287},
  doi={10.1109/LCSYS.2025.3571828}
}

@article{AN2026112782,
  title={CBF-based safety design for adaptive control of uncertain nonlinear strict-feedback systems},
  author={An, Liwei and Zhao, Can and Yang, Guang-Hong},
  journal={Automatica},
  volume={185},
  pages={112782},
  year={2026},
  publisher={Elsevier}
}

@INPROCEEDINGS{9483029,
  author={Zeng, Jun and Zhang, Bike and Sreenath, Koushil},
  booktitle={2021 American Control Conference (ACC)}, 
  title={Safety-Critical Model Predictive Control with Discrete-Time Control Barrier Function}, 
  year={2021},
  volume={},
  number={},
  pages={3882-3889},
  address={Baltimore, MD, USA},
}

@INPROCEEDINGS{10644741,
  author={Abdi, Hossein and Zhao, Pan and Hovakimyan, Naira and Ghabcheloo, Reza},
  booktitle={2024 American Control Conference (ACC)}, 
  title={Model Predictive Control Barrier Functions: Guaranteed Safety with Reduced Conservatism and Shortened Horizon}, 
  year={2024},
  volume={},
  number={},
  address={Toronto, ON, Canada},
  pages={1652-1657},
  doi={10.23919/ACC60939.2024.10644741}
}

@ARTICLE{10539292,
  author={Ma, Qian and Jin, Peng and Lewis, Frank L.},
  journal={IEEE/CAA Journal of Automatica Sinica}, 
  title={Guaranteed Cost Attitude Tracking Control for Uncertain Quadrotor Unmanned Aerial Vehicle Under Safety Constraints}, 
  year={2024},
  volume={11},
  number={6},
  pages={1447-1457},
  doi={10.1109/JAS.2024.124317}
}

@ARTICLE{11606602,
  author={Wang, Xinyang and Zhang, Hongwei and Xu, Jun and Wang, Shimin and Guay, Martin},
  journal={IEEE Transactions on Neural Networks and Learning Systems}, 
  title={Reinforcement Learning in Pursuit–Evasion Differential Game: Safety, Stability, and Robustness}, 
  year={2026},
  volume={},
  number={},
  pages={1-14},
  doi={10.1109/TNNLS.2026.3710199}
}

@ARTICLE{11578271,
  author={Wang, Xinyang and Zhang, Hongwei and Wang, Shimin and Xiao, Wei and Guay, Martin},
  journal={IEEE Transactions on Automatic Control}, 
  title={Safe Learning Control with Optimality and Stability Guarantees}, 
  year={2026},
  volume={},
  number={},
  pages={1-16},
  doi={10.1109/TAC.2026.3707508}
}

@inproceedings{xiao2025safe,
  title = {SafeDiffuser: Safe Planning with Diffusion Probabilistic Models},
  author = {Wei Xiao and Tsun-Hsuan Wang and Chuang Gan and Ramin Hasani and Mathias Lechner and Daniela Rus},
  booktitle = {International Conference on Learning Representations},
  year = {2025},
  address = {Singapore},
  pages = {100391--100412},
}

@inproceedings{MizutaLeung2024,
  title = {CoBL-Diffusion: Diffusion-Based Conditional Robot Planning in Dynamic Environments Using Control Barrier and Lyapunov Functions},
  author = {Kazuki Mizuta and Karen Leung},
  booktitle = {IEEE/RSJ International Conference on Intelligent Robots and Systems (IROS)},
  year = {2024},
  address = {Abu Dhabi, United Arab Emirates},
  pages = {13801--13808}
}

@inproceedings{ku2026pc,
  title={PC-Diffuser: Path-Consistent Capsule CBF Safety Filtering for Diffusion-Based Trajectory Planner},
  author={Ku, Eugene and Lyu, Yiwei},
  booktitle={IEEE/RSJ International Conference on Intelligent Robots and Systems (IROS)},
  year={2026},
  address={Pittsburgh, PA, USA}
}

@inproceedings{NEURIPS2025_31fb284a,
  title={Constrained diffusers for safe planning and control},
  author={Zhang, Jichen and Zhao, Liqun and Papachristodoulou, Antonis and Umenberger, Jack},
  journal={Advances in Neural Information Processing Systems},
  volume={38},
  booktitle={The Thirty-ninth Annual Conference on Neural Information Processing Systems},
  pages={34965--34998},
  year={2025},
  address = {San Diego, CA, USA}
}

@article{GARG2024100945,
  title={Advances in the theory of control barrier functions: Addressing practical challenges in safe control synthesis for autonomous and robotic systems},
  author={Garg, Kunal and Usevitch, James and Breeden, Joseph and Black, Mitchell and Agrawal, Devansh and Parwana, Hardik and Panagou, Dimitra},
  journal={Annual Reviews in Control},
  volume={57},
  pages={100945},
  year={2024},
  publisher={Elsevier}
}

@ARTICLE{9681233,
  author={Qin, Zengyi and Sun, Dawei and Fan, Chuchu},
  journal={IEEE Robotics and Automation Letters}, 
  title={Sablas: Learning Safe Control for Black-Box Dynamical Systems}, 
  year={2022},
  volume={7},
  number={2},
  pages={1928-1935},
  doi={10.1109/LRA.2022.3142743}
}

@ARTICLE{11097366,
  author={Carvalho, João and Le, An Thai and Kicki, Piotr and Koert, Dorothea and Peters, Jan},
  journal={IEEE Transactions on Robotics}, 
  title={Motion Planning Diffusion: Learning and Adapting Robot Motion Planning With Diffusion Models}, 
  year={2025},
  volume={41},
  number={},
  pages={4881-4901},
  doi={10.1109/TRO.2025.3593109}
}

@ARTICLE{11039694,
  author={Alyaseen, Mohammed and Atanasov, Nikolay and Cortés, Jorge},
  journal={IEEE Control Systems Letters}, 
  title={Safe Control of Second-Order Systems With Linear Positional Constraints}, 
  year={2025},
  volume={9},
  number={},
  pages={1279-1284},
  doi={10.1109/LCSYS.2025.3580776}
}

@article{Ghahramani2015,
  title={Probabilistic machine learning and artificial intelligence},
  author={Ghahramani, Zoubin},
  journal={Nature},
  volume={521},
  number={7553},
  pages={452--459},
  year={2015},
  publisher={Nature Publishing Group UK London}
}

@article{biroli2024dynamical,
  title={Dynamical regimes of diffusion models},
  author={Biroli, Giulio and Bonnaire, Tony and De Bortoli, Valentin and M{\'e}zard, Marc},
  journal={Nature Communications},
  volume={15},
  number={1},
  pages={9957},
  year={2024},
  publisher={Nature Publishing Group UK London}
}

@inproceedings{diffusiondrive,
  title     = {DiffusionDrive: Truncated Diffusion Model for End-to-End Autonomous Driving},
  author    = {Liao, Bencheng and Chen, Shaoyu and Yin, Haoran and Jiang, Bo and Wang, Cheng and Yan, Sixu and Zhang, Xinbang and Li, Xiangyu and Zhang, Ying and Zhang, Qian and others},
  booktitle = {2025 IEEE/CVF Conference on Computer Vision and Pattern Recognition (CVPR)},
  pages     = {12037--12047},
  address = {Nashville, TN, USA},
  year      = {2025},
}

@article{hansen2001completely,
  title={Completely derandomized self-adaptation in evolution strategies},
  author={Hansen, Nikolaus and Ostermeier, Andreas},
  journal={Evolutionary computation},
  volume={9},
  number={2},
  pages={159--195},
  year={2001},
  publisher={MIT Press}
}

@article{HANSEN20115755,
  title={Impacts of invariance in search: When CMA-ES and PSO face ill-conditioned and non-separable problems},
  author={Hansen, Nikolaus and Ros, Raymond and Mauny, Nikolas and Schoenauer, Marc and Auger, Anne},
  journal={Applied Soft Computing},
  volume={11},
  number={8},
  pages={5755--5769},
  year={2011},
  publisher={Elsevier}
}

@article{aguirre2026covariance,
  title={Covariance Matrix Adaptation Evolution Strategy (CMA-ES): A Comprehensive Survey of Variants and Hybridizations},
  author={Aguirre, Nahum and Cuevas, Erik and Luque-Chang, Alberto and Barba-Toscano, Oscar and Vasquez-Franco, Mario},
  journal={Archives of Computational Methods in Engineering},
  pages={1--57},
  year={2026},
  publisher={Springer}
}

@article{akimoto2012theoretical,
  title={Theoretical foundation for CMA-ES from information geometry perspective},
  author={Akimoto, Youhei and Nagata, Yuichi and Ono, Isao and Kobayashi, Shigenobu},
  journal={Algorithmica},
  volume={64},
  number={4},
  pages={698--716},
  year={2012},
  publisher={Springer}
}

@article{ollivier2017information,
  title={Information-geometric optimization algorithms: A unifying picture via invariance principles},
  author={Ollivier, Yann and Arnold, Ludovic and Auger, Anne and Hansen, Nikolaus},
  journal={Journal of Machine Learning Research},
  volume={18},
  number={18},
  pages={1--65},
  year={2017}
}

@inproceedings{hamano2024natural,
  title={Natural gradient interpretation of rank-one update in CMA-ES},
  author={Hamano, Ryoki and Shirakawa, Shinichi and Nomura, Masahiro},
  booktitle={International Conference on Parallel Problem Solving from Nature},
  pages={252--267},
  address = {Hagenberg, Austria},
  year={2024},
}

@article{efron2011tweedie,
  title={Tweedie’s formula and selection bias},
  author={Efron, Bradley},
  journal={Journal of the American Statistical Association},
  volume={106},
  number={496},
  pages={1602--1614},
  year={2011},
  publisher={Taylor \& Francis}
}

@inproceedings{rissanen2025free,
  title={Free hunch: Denoiser covariance estimation for diffusion models without extra costs},
  author={Rissanen, Severi and Heinonen, Markus and Solin, Arno},
  booktitle={International Conference on Learning Representations},
  address = {Singapore},
  pages={90079--90112},
  year={2025}
}

@inproceedings{nichol2021improved,
  title={Improved denoising diffusion probabilistic models},
  author={Nichol, Alexander Quinn and Dhariwal, Prafulla},
  booktitle={International conference on machine learning},
  address = {Vienna, Austria},
  pages={8162--8171},
  year={2021},
}

@article{ho2020denoising,
  title={Denoising diffusion probabilistic models},
  author={Ho, Jonathan and Jain, Ajay and Abbeel, Pieter},
  journal={Advances in neural information processing systems},
  volume={33},
  pages={6840--6851},
  year={2020}
}

@ARTICLE{8588399,
  author={Zhang, Cheng and Bütepage, Judith and Kjellström, Hedvig and Mandt, Stephan},
  journal={IEEE Transactions on Pattern Analysis and Machine Intelligence}, 
  title={Advances in Variational Inference}, 
  year={2019},
  volume={41},
  number={8},
  pages={2008-2026},
  doi={10.1109/TPAMI.2018.2889774}
}

@article{van2021bayesian,
  title={Bayesian statistics and modelling},
  author={Van de Schoot, Rens and Depaoli, Sarah and King, Ruth and Kramer, Bianca and M{\"a}rtens, Kaspar and Tadesse, Mahlet G and Vannucci, Marina and Gelman, Andrew and Veen, Duco and Willemsen, Joukje and others},
  journal={Nature Reviews Methods Primers},
  volume={1},
  number={1},
  pages={1},
  year={2021},
  publisher={Nature Publishing Group UK London}
}

@article{farouki2012bernstein,
  title={The Bernstein polynomial basis: A centennial retrospective},
  author={Farouki, Rida T},
  journal={Computer Aided Geometric Design},
  volume={29},
  number={6},
  pages={379--419},
  year={2012},
  publisher={Elsevier}
}

@inproceedings{ames2019control,
  title={Control barrier functions: Theory and applications},
  author={Ames, Aaron D and Coogan, Samuel and Egerstedt, Magnus and Notomista, Gennaro and Sreenath, Koushil and Tabuada, Paulo},
  booktitle={2019 18th European control conference (ECC)},
  pages={3420--3431},
  address = {Naples, Italy},
  year={2019},
}

@article{Wu_2024, title={Convergence Analysis of an Adaptively Regularized Natural Gradient Method}, volume={72}, DOI={10.1109/tsp.2024.3398496}, journal={IEEE Transactions on Signal Processing}, publisher={Institute of Electrical and Electronics Engineers (IEEE)}, author={Wu, Jiayuan and Hu, Jiang and Zhang, Hongchao and Wen, Zaiwen}, year={2024}, pages={2527–2542} }

@inproceedings{NEURIPS2023_893a5db6,
 author = {Zhou, Siyuan and Du, Yilun and Zhang, Shun and Xu, Mengdi and Shen, Yikang and Xiao, Wei and Yeung, Dit-Yan and Gan, Chuang},
 booktitle = {Advances in Neural Information Processing Systems},
 doi = {10.52202/075280-1909},
 editor = {A. Oh and T. Naumann and A. Globerson and K. Saenko and M. Hardt and S. Levine},
 pages = {44000--44016},
 publisher = {Curran Associates, Inc.},
 title = {Adaptive Online Replanning with Diffusion Models},
 volume = {36},
 year = {2023},
 address = {New Orleans, LA, USA}
}

@ARTICLE{9779571,
  author={Wabersich, Kim P. and Zeilinger, Melanie N.},
  journal={IEEE Transactions on Automatic Control}, 
  title={Predictive Control Barrier Functions: Enhanced Safety Mechanisms for Learning-Based Control}, 
  year={2023},
  volume={68},
  number={5},
  pages={2638-2651},
  doi={10.1109/TAC.2022.3175628}}

@ARTICLE{10872802,
  author={Daş, Ersin and Burdick, Joel W.},
  journal={IEEE Transactions on Automatic Control}, 
  title={Robust Control Barrier Functions Using Uncertainty Estimation With Application to Mobile Robots}, 
  year={2025},
  volume={70},
  number={7},
  pages={4766-4773},
  doi={10.1109/TAC.2025.3538742}}

@ARTICLE{10328440,
  author={Yang, Yujie and Zhang, Yuhang and Zou, Wenjun and Chen, Jianyu and Yin, Yuming and Eben Li, Shengbo},
  journal={IEEE Transactions on Automatic Control}, 
  title={Synthesizing Control Barrier Functions With Feasible Region Iteration for Safe Reinforcement Learning}, 
  year={2024},
  volume={69},
  number={4},
  pages={2713-2720},
  doi={10.1109/TAC.2023.3336263}}

@ARTICLE{11002717,
  author={Ma, Zhiyuan and Zhang, Yuzhu and Jia, Guoli and Zhao, Liangliang and Ma, Yichao and Ma, Mingjie and Liu, Gaofeng and Zhang, Kaiyan and Ding, Ning and Li, Jianjun and Zhou, Bowen},
  journal={IEEE Transactions on Pattern Analysis and Machine Intelligence}, 
  title={Efficient Diffusion Models: A Comprehensive Survey From Principles to Practices}, 
  year={2025},
  volume={47},
  number={9},
  pages={7506-7525},
  doi={10.1109/TPAMI.2025.3569700}}

@article{schmal2026reliable,
  title = {Reliable uncertainty estimates in deep learning with efficient {M}etropolis-{H}astings algorithms},
  author = {Schmal, Matthias and M{\"a}der, Patrick},
  journal = {Nature Communications},
  volume = {17},
  pages = {2531},
  year = {2026},
  doi = {10.1038/s41467-026-70015-z}
}

@article{10.1145/3583136,
author = {Guo, Huihui and Wu, Fan and Qin, Yunchuan and Li, Ruihui and Li, Keqin and Li, Kenli},
title = {Recent Trends in Task and Motion Planning for Robotics: A Survey},
year = {2023},
issue_date = {December 2023},
publisher = {Association for Computing Machinery},
address = {New York, NY, USA},
volume = {55},
number = {13s},
issn = {0360-0300},
doi = {10.1145/3583136},
journal = {ACM Computing Surveys},
articleno = {289},
numpages = {36}
}

\end{document}